\let\doendproof\endproof
\renewcommand\endproof{~\hfill\qed\doendproof}
\newcommand{\isnested}[0]{\Supset}
\newcommand{\wraps}[0]{\Subset}
\newcommand{\remove}[1]{}
\begin{document}
%
%
%
\title{Schematic Representation of \\Large Biconnected Graphs\thanks{This research was supported in part by MIUR Project ``AHeAD'' under PRIN 20174LF3T8, by H2020-MSCA-RISE Proj.\ ``CONNECT'' n$^\circ$ 734922, and by Roma Tre University Azione 4 Project ``GeoView''.}}

\author{Giuseppe Di Battista \and Fabrizio Frati \and Maurizio Patrignani \and Marco Tais}
\authorrunning{Giuseppe Di Battista et al.}
\titlerunning{Schematic Representation of Large Biconnected Graphs}
%
\institute{Roma Tre University, Rome, Italy\\ 
\email{\{gdb,frati,patrigna,tais\}@dia.uniroma3.it}} 

\maketitle

\begin{abstract}
Suppose that a biconnected graph is given, consisting of a large component plus several other smaller components, each separated from the main component by a separation pair.
We investigate the existence and the computation time of schematic representations of the structure of such a graph where the main component is drawn as a disk, the vertices that take part in separation pairs are points on the boundary of the disk, and the small components are placed outside the disk and are represented as non-intersecting lunes connecting their separation~pairs.
We consider several drawing conventions for such schematic representations, according to different ways to account for the size of the small components.
We map the problem of testing for the existence of such representations to the one of testing for the existence of suitably constrained $1$-page book-embeddings and propose several polynomial-time algorithms.
\end{abstract}

\section{Introduction}\label{se:intro}

Many of today's applications are based on large-scale networks, having billions of vertices and edges. This spurred an intense research activity devoted to finding methods for the visualization of very large graphs. 

Several recent contributions focus on algorithms that produce drawings where either the graph is only partially represented or it is schematically visualized. Examples of the first type are proxy drawings \cite{DBLP:conf/apvis/NguyenMLE18,DBLP:journals/tvcg/NguyenHEM17}, where a graph that is too large to be fully visualized is represented by the drawing of a much smaller proxy graph that preserves the main features of the original graph. Examples of the second type are graph thumbnails \cite{DBLP:journals/tvcg/YoghourdjianDKM18}, where each connected component of a graph is represented by a disk and biconnected components are represented by disks contained into the disk of the connected component they belong to. 

Among the characteristics that are emphasized by the above mentioned drawings, a crucial role is played by connectivity. Following this line of thought, we study schematic representations of graphs that emphasize their connectivity features. We start from the following observation: quite often, real-life very large graphs have one large connected component and several much smaller other components (see, e.g., \cite{fzb-fbna-16,mn-n-18}). This happens to biconnected and 
triconnected components too (see, e.g., \cite{ceccarelli} for an analysis of the graphs in \cite{snapnets}). 

Hence, we concentrate on a single biconnected graph (that can be a biconnected component of a larger graph) consisting of a large component plus several other smaller components, each separated from the large component by a separation pair. We propose to represent the large component as a disk, to draw the vertices of such a component that take part in separation pairs as points on the boundary of the disk, and to represent the small components as non-intersecting lunes connecting their separation pairs placed outside the disk.
See Fig.~\ref{fig:opening-figure}.
This representation is designed to emphasize the arrangement of the components with respect to the separation pairs.
 For simplicity, we assume that each separation pair separates just one small component from the large one.

\begin{figure}[tb]
	\centering
	\subfloat[]{\label{fig:max-constrained-embedding}\includegraphics[scale=0.13]{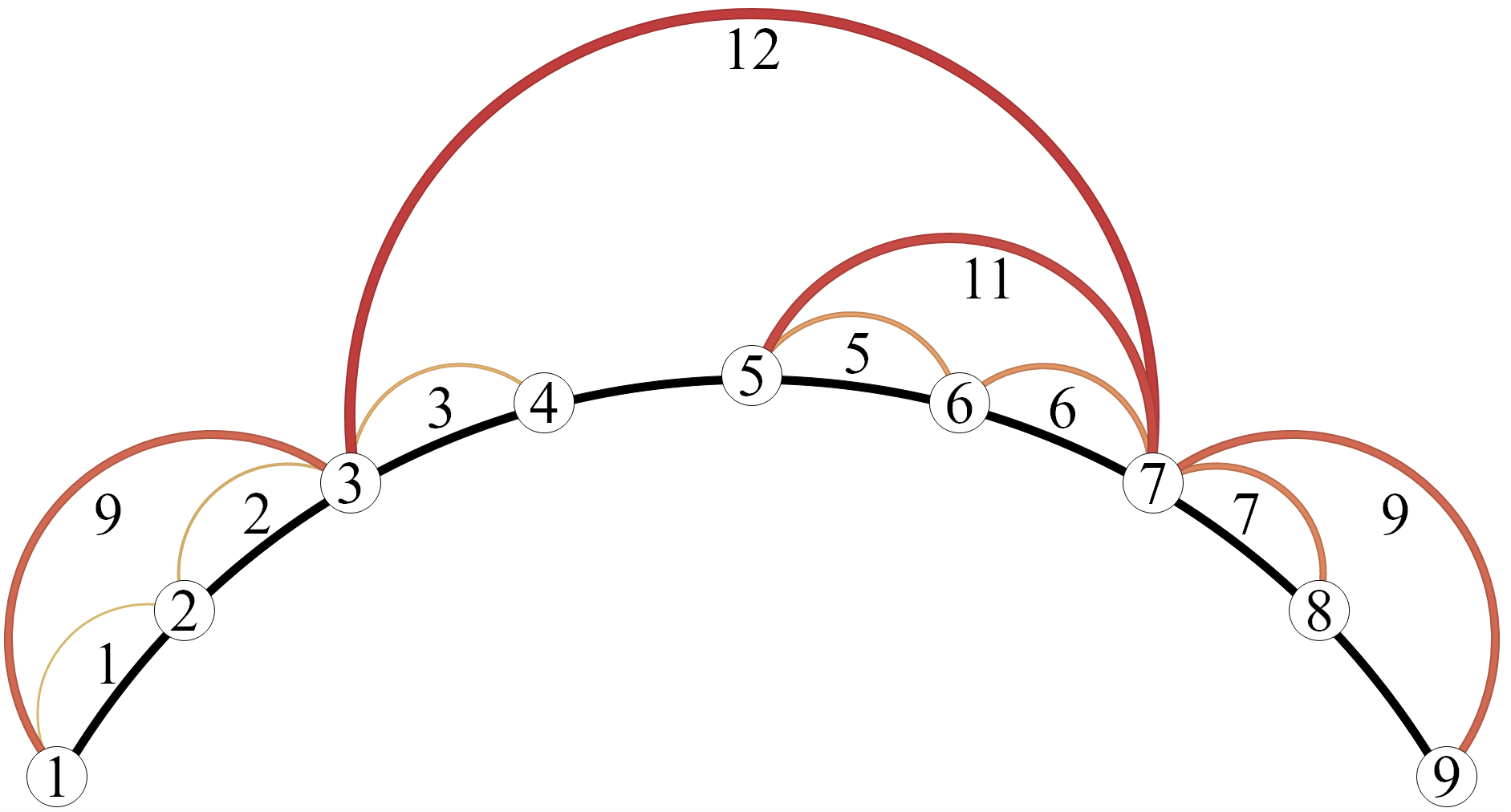}}
	\hfil
	\subfloat[]{\label{fig:2D-embedding}\includegraphics[scale=0.18]{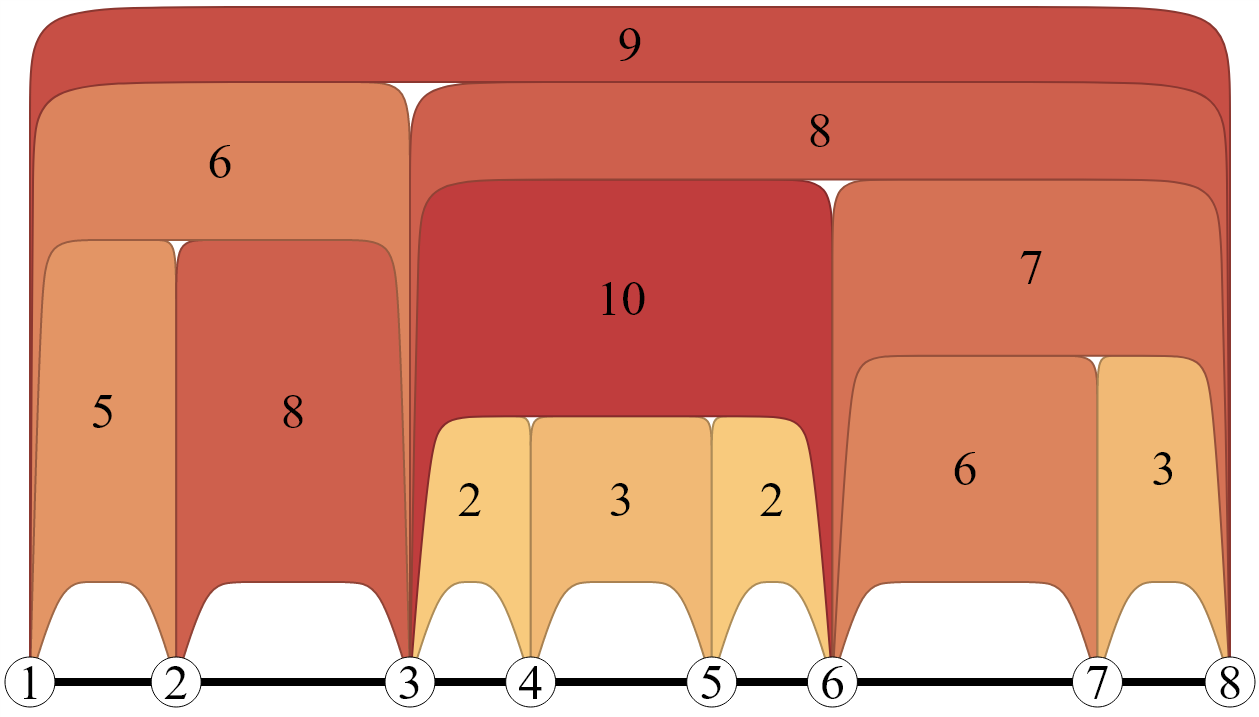}}
	\hfil
	\caption{Schematic representations of biconnected graphs. (a) A {\sc max}-constrained book-embedding. (b) A two-dimensional book-embedding; for simplicity the vertices are aligned on a straight-line.}\label{fig:opening-figure}
\end{figure}

More formally, our input is a weighted graph $G=(V,E,\omega)$, where each vertex in $V$ participates in at least one separation pair, each edge $(u,v)$ of $E$ represents a small component separated from the large one by the separation pair $\{u,v\}$, and $\omega$ assigns a positive weight to each edge. The weight of an edge represents a feature that should be emphasized in the schematic representation. As an example, it might represent the number of vertices or edges of the corresponding small component. 

We study one-dimensional and two-dimensional representations. In both cases, the vertices of $G$ form a sequence of linearly ordered points that are placed along the  boundary of a disk.
In the one-dimensional representations, we draw each edge as an arc and impose that arcs do not cross. Also, consider two edges $(u,v)$ and $(x,y)$ and suppose that the weight of $(u,v)$ is larger than that of $(x,y)$. Then we impose that  $(u,v)$ is drawn outside $(x,y)$, so to represent the weight by means of the edge length. We call \emph{{\sc max}-constrained book-embedding} this type of representation (see Fig.~\ref{fig:max-constrained-embedding}). In Section~\ref{se:max}, we present an optimal $O(n \log n)$-time algorithm that tests whether an $n$-vertex graph admits such a representation.
We also study a more constrained type of representations. Namely, let $(u,v)$ be an edge and consider the sequence of edges $(u_1,v_1),\dots,(u_k,v_k)$ that are drawn immediately below $(u,v)$; then we may want that $\omega(u,v)>\sum_{i=1}^k \omega(u_i,v_i)$. We call \emph{{\sc sum}-constrained book-embedding} this type of representation. In Section~\ref{se:sum}, we present an $O(n^3\log n)$-time algorithm that tests whether an $n$-vertex graph admits such a representation. Both {\sc max}- and {\sc sum}-constrained book-embeddings are $1$-page book-embeddings satisfying specific constraints. Hence, a necessary condition for $G$ to admit these types of representations is outerplanarity~\cite{BERNHART1979320}.

Since there exist weighted outerplanar graphs that  admit neither a {\sc max}- nor a {\sc sum}-constrained book-embedding, we study how to represent without crossings a weighted outerplanar graph with edges that have, in addition to their length, also a thickness: each edge is represented with a lune with area proportional to its weight. We call \emph{two-dimensional book-embeddings} these representations. See Fig.~\ref{fig:2D-embedding}. First, in Section~\ref{se:two-dimensional}, we show that all weighted outerplanar graphs admit a two-dimensional book-embedding and discuss the area requirements of such representations. Second, in Section~\ref{se:minres}, we show that, if a finite resolution rule is imposed, then there are graphs that do not admit any two-dimensional book-embedding and we present an $O(n^4)$-time algorithm that tests whether an $n$-vertex graph admits such a representation.

Conclusions and open problems are presented in Section~\ref{se:conclusions}.


%


\section{Preliminaries}\label{se:preliminaries}


We introduce some definitions and preliminaries.

{\bf Block-cut-vertex tree.} A {\em cut-vertex} in a connected graph $G$ is a vertex whose removal disconnects $G$. A graph with no cut-vertex is {\em biconnected}. A {\em block} of $G$ is a maximal (in terms of vertices and edges) subgraph of $G$ which is biconnected. The {\em block-cut-vertex tree} $T$ of $G$~\cite{h-gt-69,ht-aeagm-73} has a {\em B-node} for each block of $G$ and a {\em C-node} for each cut-vertex of $G$; a B-node $b$ and a C-node $c$ are adjacent if $c$ is a vertex of the block of $G$ represented by~$b$. We denote by $G(b)$ the block of $G$ represented by a B-node $b$. We often identify a C-node of $T$ and the corresponding cut-vertex of $G$.


{\bf Planar drawings.} A \emph{drawing} of a graph maps each vertex to a point in the plane and each edge to a Jordan arc between its end-vertices. A drawing is \emph{planar} if no two edges intersect, except at common end-vertices. A planar drawing partitions the plane into connected regions, called \emph{faces}. The bounded faces are \emph{internal}, while the unbounded face is the \emph{outer face}.

{\bf Outerplanar graphs.} An \emph{outerplanar drawing} is a planar drawing such that all the vertices are incident to the outer face. An \emph{outerplanar graph} is a graph that admits an outerplanar drawing. Two outerplanar drawings are \emph{equivalent} if the clockwise order of the edges incident to each vertex is the same in both drawings. An \emph{outerplane embedding} is an equivalence class of outerplanar drawings. A biconnected outerplanar graph has a unique outerplane embedding~\cite{mw-opbe-90,s-cog-79}. 
Given the outerplane embedding $\Gamma$ of an $n$-vertex biconnected outerplanar graph $G$, we define the \emph{extended dual tree} $\mathcal T$ of $\Gamma$ as follows (refer to Fig.~\ref{fig:outerplanardual}). We first construct the dual graph $\mathcal D$ of $\Gamma$; we then split the vertex of $\mathcal D$ corresponding to the outer face of $\Gamma$ into $n$ degree-$1$ vertices, each incident to an edge that is dual to an edge of $G$ incident to the outer face of $\Gamma$. Note that $\mathcal T$ can be constructed in $O(n)$ time. Further, each edge of $\mathcal T$ is dual to an edge of $G$; moreover, the edges incident to leaves of $\mathcal T$ are dual to edges incident to the outer face of $\Gamma$. 

\begin{figure}[tb]
	\centering
	\includegraphics[scale=0.7]{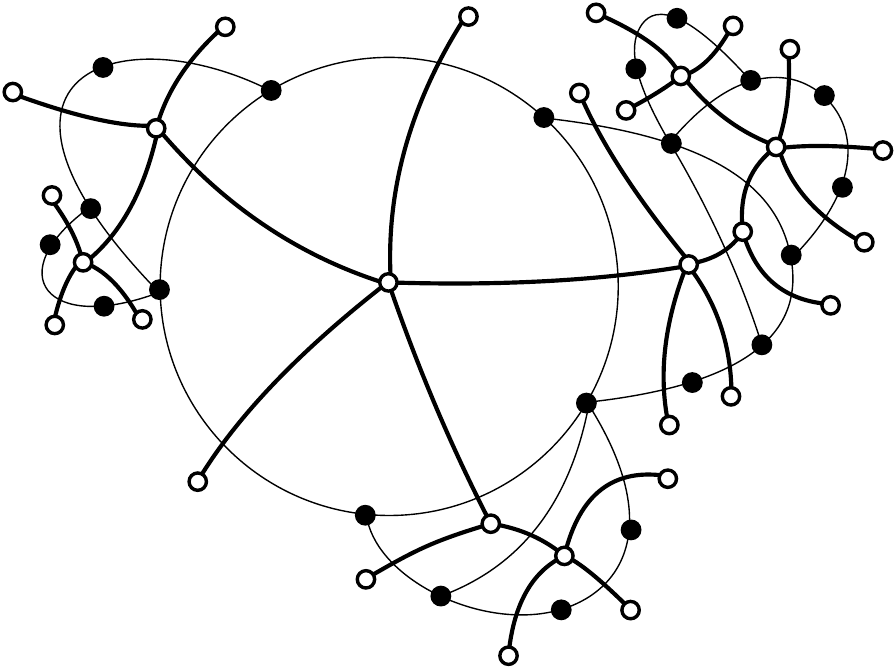}
	\caption{The extended dual tree $\mathcal T$ of an outerplane embedding $\Gamma$ of a $2$-connected outerplanar graph $G$; the vertices and the edges of $\mathcal T$ are represented by white disks and thick lines, respectively.}
	\label{fig:outerplanardual}
\end{figure}


{\bf Book-embeddings.} Given a graph $G$ and a linear order $\mathcal L$ of its vertices, we write $u \prec_{\mathcal L} v$ to represent the fact that $u$ precedes $v$ in $\mathcal L$; we say that two edges $(u,v)$ and $(w,z)$ of $G$ \emph{cross} if $u\prec_{\mathcal L} w \prec_{\mathcal L} v \prec_{\mathcal L} z$. A \emph{$1$-page book-embedding} of a graph is a linear order $\mathcal L$ of its vertices such that no two edges cross. 

The \emph{flip} of a $1$-page book-embedding $\mathcal L$ is a $1$-page book-embedding $\mathcal{L}'$ such that, for any pair of distinct vertices $u$ and $v$, we have that $u \prec_{\mathcal{L}'} v$ if and only if $v \prec_{\mathcal{L}} u$.

Given a linear order $\mathcal L$ of the vertices of a graph, by $u \preceq_{\mathcal L} v$ we mean that either $u \prec_{\mathcal L} v$ or $u=v$.
For a pair of distinct edges $e_1=(u_1,v_1)$ and $e_2=(u_2,v_2)$ of $G$ such that $u_1 \preceq_{\mathcal L} u_2 \prec_{\mathcal L} v_2 \preceq_{\mathcal L} v_1$, we say that $e_2$ is \emph{nested into} $e_1$ (denoted as $e_2 \isnested e_1$) and $e_1$ \emph{wraps around} $e_2$ (denoted as $e_1 \wraps e_2$).
Further, a subgraph $G'$ of $G$ \emph{lies under} (resp. \emph{lies strictly under}) an edge $(u,v)$ of $G$, where $u \prec_{\mathcal L} v$, if for every vertex $w$ of $G'$, we have $u\preceq_{\mathcal L} w\preceq_{\mathcal L} v$ (resp. $u\prec_{\mathcal L} w\prec_{\mathcal L} v$). Then a subgraph $G'$ of $G$  \emph{lies under} (resp. \emph{lies strictly under}) a subgraph $G''$ of $G$ if there exists an edge $(u,v)$ of $G''$ such that $G'$ lies under (resp. lies strictly under) $(u,v)$.

Consider a vertex $v$ in a book-embedding $\mathcal{L}$. The \emph{lowest-left edge} incident to $v$ is the edge $(u,v)$ such that: (i) $u\prec_{\mathcal{L}} v$ and (ii) no neighbor $w$ of $v$ is such that $u\prec_{\mathcal{L}} w \prec_{\mathcal{L}} v$; note that the lowest-left edge incident to $v$ is undefined if no neighbor of $v$ precedes $v$ in $\mathcal{L}$. The \emph{lowest-right edge} incident to $v$ is defined analogously.


In the rest of this paper, a \emph{weighted graph} $G=(V,E,\omega)$ is a graph equipped with a function $\omega$ that assigns a positive weight to each edge of $E$. 

\section{{\sc max}-Constrained Book-Embeddings}\label{se:max}

In this section, we study a first type of one-dimensional representations. We are given a weighted graph $G=(V,E,\omega)$. We draw the vertices in $V$ as a sequence of points linearly ordered on the boundary of a disk and the edges in $E$ as non-intersecting arcs positioned outside the disk, placing edges with larger weight outside edges of smaller weight.


More formally, a \emph{{\sc max}-constrained book-embedding} of a weighted outerplanar graph $G=(V,E,\omega)$ is a $1$-page book-embedding $\mathcal{L}$ such that, for any two distinct edges $e_1=(u,v)$ and $e_2=(x,y)$ in $E$ with $u \preceq_{\mathcal{L}} x \prec_{\mathcal{L}} y \preceq_{\mathcal{L}} v$, we have that $\omega(e_1)>\omega(e_2)$. That is, if $e_1$ wraps around $e_2$, then $\omega(e_1)>\omega(e_2)$. We do not specify the actual drawing of the edges since, if $G$ has a {\sc max}-constrained book-embedding, then they can be easily represented by non-crossing Jordan arcs. 
An example of {\sc max}-constrained book-embedding is in Fig. \ref{fig:max-constrained-embedding}. Observe, for instance, how the edges $(5,6)$ and $(6,7)$ that have weight $5$ and $6$, respectively, are below the edge $(5,7)$ that has weight $11$ and how such edge is below the edge $(3,7)$ whose weight is $12$.
We have the following preliminary observation.

\begin{property} \label{pro:max-weigth}
Let $G=(V,E,\omega)$ be a weighted outerplanar graph and let $e_M \in E$ be an edge such that $\omega(e_M) \geq \omega(e)$, for every $e\in E$.  In any {\sc max}-constrained book-embedding
of $G$, there exists no edge that wraps around~$e_M$.
\end{property}

The goal of this section is to prove the following theorem.

\begin{theorem}\label{th:linear-max-outerplanar}
Let $G=(V,E,\omega)$ be an $n$-vertex weighted outerplanar graph. There exists an $O(n \log n)$-time algorithm that tests whether $G$ admits a  {\sc max}-constrai\-ned book-embedding and, in the positive case, constructs such an embedding.
\end{theorem}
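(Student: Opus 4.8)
The plan is to reduce to the biconnected case, solve that case in essentially linear time by exploiting the extended dual tree $\mathcal T$, and then reassemble the global embedding over the block-cut-vertex tree. Since a {\sc max}-constrained book-embedding is in particular a crossing-free $1$-page book-embedding, $G$ must be outerplanar; I assume this throughout and begin by sorting the edges of $G$ by weight, which takes $O(n\log n)$ time and will dominate the running time. For a disconnected $G$ the connected components can be laid out in pairwise-disjoint intervals of the spine, so that no edge of one component ever wraps an edge of another; hence $G$ admits a {\sc max}-constrained book-embedding if and only if each component does, and I may henceforth assume $G$ connected.

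First I would treat the case in which $G$ is biconnected, so that its outerplane embedding is unique and its outer boundary is a Hamiltonian cycle. Every $1$-page book-embedding of $G$ then realizes this fixed cyclic order and is therefore a rotation (up to flip) of the outer cycle; equivalently, it is determined by the choice of the outer-face edge that acts as the \emph{outermost} edge, namely the unique edge of the nesting forest that no edge wraps. By Property~\ref{pro:max-weigth} this outermost edge must be the maximum-weight edge $e_M$, which therefore must be unique and must lie on the outer face; if either condition fails I report that no {\sc max}-constrained book-embedding exists. Otherwise I root $\mathcal T$ at the leaf dual to $e_M$. Under this rooting the nesting relation of the induced book-embedding coincides with the ancestor/descendant relation in $\mathcal T$: two edges are in immediate nesting relation exactly when they are incident to a common internal face $f$, the wrapping one being the edge of $f$ dual to the parent of $f$ in $\mathcal T$, which I call the \emph{top edge} of $f$. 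Because $\wraps$ is transitive and so is $>$, the {\sc max}-constraint holds for all wrapping pairs if and only if it holds for all immediate ones, that is, if and only if for every internal face $f$ the top edge of $f$ is the \emph{unique} maximum-weight edge among the edges of $f$. This is checked by scanning the faces once, in time proportional to the total number of face-edge incidences, which is $O(n)$; if it holds, the rooting at $e_M$ yields the desired embedding, and otherwise — the only candidate outermost edge having failed — none exists.

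It remains to assemble biconnected blocks across cut vertices, and I expect this to be the main obstacle. The plan is to root the block-cut-vertex tree and process it bottom-up, maintaining for each block an admissible sub-embedding together with the positions of its cut vertices on the spine. The difficulty is that blocks sharing a cut vertex $c$ cannot simply be placed side by side: because $c$ occupies a single point of the spine, a block whose local order keeps $c$ in its interior forces the blocks attached at $c$ to be nested inside specific faces incident to $c$, and each such nesting is again governed by the weight-decreasing rule. I would show that, exactly as in the illustrative case of a star (a single cut vertex all of whose blocks are single edges), the admissible placements around $c$ are forced up to distributing the attached blocks to the two sides of $c$ and sorting each side by weight, so that a greedy merge — attach the heaviest feasible block outermost and recurse — is both correct and realizable whenever a solution exists. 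The delicate step is to prove that this local forcing is characterized precisely by the same ``unique heaviest edge'' condition, now applied to the faces created at the cut vertices, so that infeasibility is detected locally.

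Finally I would bound the running time. The single sort of the edges by weight costs $O(n\log n)$, while constructing the outerplane embeddings and the extended dual trees, rooting them, performing the per-face checks, and carrying out the bottom-up merge over the block-cut-vertex tree are all linear in $n$, since an outerplanar graph has $O(n)$ edges and total face size. Hence the overall bound is $O(n\log n)$, matching the statement, and the embedding is produced as a by-product of the verification, establishing the theorem.
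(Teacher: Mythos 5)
Your biconnected case is sound and matches the paper's (Lemmas~\ref{le:linear-max-outerplanar-biconnected-characterization} and~\ref{le:linear-max-outerplanar-biconnected-algo}): uniqueness of the maximum-weight edge $e_M$, incidence to the outer face, rooting the extended dual tree at the leaf dual to $e_M$, and checking only immediate wraps, which suffices by transitivity of $\wraps$ and of $>$. The problem is that the assembly across cut vertices --- which is where all the remaining work of the theorem lies --- is left as an announced plan: you yourself flag ``the delicate step is to prove\dots'' and never prove it. Two concrete pieces are missing. First, you never establish (nor even state where to root the block-cut-vertex tree for) the \emph{extreme-parent property}: in any {\sc max}-constrained book-embedding, the restriction to $G^+(b)$ must have the parent cut vertex $c$ as its first or last vertex. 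The paper proves this by rooting $T$ at a block containing a globally maximum-weight edge $e_M$ and arguing that a path from $c$ to $e_M$ would have to cross the edge $e'_M$ covering $c$, since $e_M$ cannot nest under $e'_M$; for an arbitrary root the property can fail, and without it your bottom-up merge has no justification that each subtree occupies a contiguous interval on one side of its cut vertex.

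Second, your proposed local feasibility criterion --- ``the same unique heaviest edge condition, now applied to the faces created at the cut vertices'' --- is not the right condition, so this step would fail as stated. Whether a subtree $G^+(b'')$ can nest under a sibling subtree $G^+(b')$ at a cut vertex $c$ is governed by comparing \emph{two different} quantities: the maximum weight $W^+(b'')$ of any edge in the inserted subtree against the \emph{minimum} weight $w^+(b')$ of an edge of $G^+(b')$ incident to $c$ (which the paper shows equals the weight of the lowest-left or lowest-right edge at $c$). Sorting ``each side by weight'' does not capture the relation $W^+(b'')<w^+(b')$, and your star example is misleading precisely because for single edges the two quantities coincide. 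Correctness of the greedy (process children in decreasing $W^+$ order, place each on a side where it fits, fail if neither side works) requires the exchange argument in the paper's proof of Lemma~\ref{le:failure-linear-max-outeplanar} (the four-case analysis of Failure Condition~3 over the last blocks $b_{i,\ell}$, $b_{i,r}$ inserted on each side, also accounting for the edges of the host block $G(b)$ at $c_i$), none of which appears in your proposal. A minor further point: the $O(n\log n)$ sorting your algorithm actually needs is of the subtree maxima $W^+$ at each cut vertex, not a single global sort of raw edge weights; this is easily repaired, but the two gaps above mean the general case cannot be completed without essentially reconstructing the paper's argument.
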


We call {\sc max-be-drawer} the algorithm in the statement of Theorem \ref{th:linear-max-outerplanar}. We first describe such an algorithm for biconnected graphs and later extend it to simply-connected graphs. We have the following structural lemma.

\begin{lemma}\label{le:linear-max-outerplanar-biconnected-characterization}
Let $G=(V,E,\omega)$ be an $n$-vertex biconnected weighted outerplanar graph. If there exists a {\sc max}-constrained book-embedding $\mathcal{L}$ of $G$ then
\begin{enumerate}
    \item there is a single edge $e_M \in E$ of maximum weight;
    \item $e_M$ is incident to the outer face of the outerplane embedding of $G$;
    \item the endvertices of $e_M$ are the first and the last vertex of $\mathcal{L}$; and 
    \item $\mathcal{L}$ is unique, up to a flip.
\end{enumerate}
\end{lemma}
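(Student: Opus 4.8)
My plan is to establish the items in the order 3--2, then 1, then 4, since pinning down the position of a maximum-weight edge is the workhorse for everything else.

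First I would fix a maximum-weight edge $e_M=(u,v)$ with $u\prec_{\mathcal L} v$ (one exists since $E$ is finite and nonempty) and invoke Property~\ref{pro:max-weigth}: no edge wraps around $e_M$. The core step is to show that $u$ is the first and $v$ the last vertex of $\mathcal L$ (item~3), from which item~2 follows. Suppose, for contradiction, that some vertex precedes $u$, and let $L=\{w : w\prec_{\mathcal L} u\}\neq\emptyset$. I would prove that every edge incident to a vertex of $L$ has both endpoints in $L\cup\{u\}$: for an edge $(x,y)$ with $x\in L$ and $x\prec_{\mathcal L} y$, the subcase $u\prec_{\mathcal L} y\prec_{\mathcal L} v$ yields a crossing with $e_M$ (forbidden in a $1$-page book-embedding), while the subcases $y=v$ and $v\prec_{\mathcal L} y$ yield an edge wrapping around $e_M$ (forbidden by Property~\ref{pro:max-weigth}); hence $y\preceq_{\mathcal L} u$ in all admissible cases. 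But then no edge joins $L$ to $V\setminus(L\cup\{u\})$, so $u$ is a cut-vertex separating the nonempty set $L$ from $v$, contradicting biconnectivity. Thus $u$ is first in $\mathcal L$, and a symmetric argument on the vertices following $v$ shows $v$ is last. Since $e_M$ then spans every vertex and nothing wraps around it, in the outerplanar drawing realized by $\mathcal L$ the arc $e_M$ is outermost and therefore incident to the outer face of the (unique) outerplane embedding, giving item~2.

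Item~1 is then immediate: the endpoints of \emph{any} maximum-weight edge are the first and last vertices of $\mathcal L$, so two distinct maximum-weight edges would be parallel edges between the same pair, impossible in a simple graph; hence $e_M$ is unique.

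The most delicate part is item~4, where I would argue that $\mathcal L$ is forced by the unique outerplane embedding $\Gamma$ of $G$ together with the now-fixed edge $e_M$. The book-embedding $\mathcal L$ realizes an outerplanar drawing of $G$, which induces $\Gamma$; in particular its outer face is bounded by the Hamiltonian cycle $C$ of $\Gamma$, and by item~2 the edge $e_M$ belongs to $C$. Viewing the spine vertices as points in convex position (all arcs drawn in one half-plane), the cycle $C$ is drawn by pairwise non-crossing arcs, i.e.\ as a simple polygon on convex-position points; such a polygon is necessarily the convex one, so the cyclic order of $C$ coincides with the order of the vertices along the spine. Hence $\mathcal L$ is obtained from $C$ by cutting at a single edge, and since that edge must be $e_M=(u,v)$ with $u,v$ the two extremes, the only remaining freedom is the orientation, i.e.\ a flip; so $\mathcal L$ is unique up to a flip. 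I expect this convex-polygon step---showing that the spine order must follow the outer cycle---to be the main obstacle to state rigorously, since it is precisely where outerplanarity and the non-crossing condition must be combined.
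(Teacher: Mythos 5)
Your proof is correct, and it is more self-contained than the paper's, which is quite terse at the two places where you do real work. The paper argues in the opposite direction: it first asserts (from biconnectivity and outerplanarity, without proof) that in any $1$-page book-embedding of $G$ there exists an edge $e'$ with $e' \wraps e$ for every other edge $e$; Property~\ref{pro:max-weigth} together with the {\sc max}-constraint then forces $\omega(e')>\omega(e)$ for all $e\neq e'$, which yields items 1--3 in one stroke (uniqueness of $e_M$ comes from this \emph{strict} weight dominance, not, as in your argument, from the observation that two maximum-weight edges would be parallel in a simple graph). For item 4 the paper simply cites the known fact that a biconnected outerplanar graph has a unique $1$-page book-embedding with prescribed first and last vertices~\cite{mw-opbe-90,s-cog-79}. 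Your proposal effectively reproves both black boxes from first principles: your cut-vertex contradiction (no edge can leave the set $L$ of vertices preceding $u$ except through $u$, since leaving would either cross or wrap $e_M$) is in essence a proof of the paper's asserted wrapping-edge fact, with the quantification reversed (you start from a maximum-weight edge and deduce it sits at the extremes, rather than starting from the outermost edge and deducing it has maximum weight); and your convex-position argument --- a non-crossing Hamiltonian cycle over spine points in convex position must be the convex polygon, so the spine order is the outer cycle cut at the edge joining the extremes, which by item 3 and simplicity must be $e_M$ --- is a correct elementary substitute for the citation, relying only on the standard facts that the outerplane embedding of a biconnected outerplanar graph is unique and its outer face is bounded by a Hamiltonian cycle. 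The trade-off is the expected one: the paper's version is shorter by outsourcing the combinatorial structure of outerplanar book-embeddings to the literature, while yours makes explicit exactly where biconnectivity, the non-crossing condition, and Property~\ref{pro:max-weigth} are each used.
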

\begin{proof}
Suppose that a {\sc max}-constrained book-embedding $\mathcal{L}$ of $G$ exists, as otherwise there is nothing to prove. Since $G$ is a biconnected outerplanar graph, there exists an edge $e'$ of $G$ such that $e' \wraps e$ in $\mathcal{L}$, for each $e \in E$ such that $e \neq e'$; note that $\mathcal{L}$ induces an outerplanar drawing of $G$ such that $e'$ is incident to the outer face. By Property~\ref{pro:max-weigth} and by the fact that $\mathcal{L}$ is a {\sc max}-constrained book-embedding, we have that $\omega(e') > \omega(e)$ for any edge $e \neq e'$ in $E$. Therefore $e' = e_M$ is the unique edge of $G$ with maximum weight. 
Since $e_M \wraps e$, for each edge $e \in E$ such that $e \neq e_M$, it follows that the end-vertices of $e_M$ are the first and the last vertex in $\mathcal{L}$. Since $G$ is biconnected, it has a unique $1$-page book-embedding in which the end-vertices of $e_M$ are the first and the last vertex~\cite{mw-opbe-90,s-cog-79}. Therefore, $\mathcal{L}$ is unique, up to a flip.   
\end{proof}

A first algorithmic contribution is given in the following lemma.

\begin{lemma}\label{le:linear-max-outerplanar-biconnected-algo}
Let $G=(V,E,\omega)$ be an $n$-vertex biconnected weighted outerplanar graph. There exists an $O(n)$-time algorithm that tests whether $G$ admits a {\sc max}-constrained book-embedding and, in the positive case, constructs such an embedding in $O(n)$ time.
\end{lemma}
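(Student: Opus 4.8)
The plan is to exploit the structural characterization of Lemma~\ref{le:linear-max-outerplanar-biconnected-characterization}, which reduces the existence question to a concrete combinatorial check. By that lemma, if a {\sc max}-constrained book-embedding exists, then there is a unique maximum-weight edge $e_M$, its endpoints are the extremes of $\mathcal{L}$, and the underlying linear order is the unique $1$-page book-embedding of $G$ in which the endpoints of $e_M$ are first and last (up to a flip). Since $G$ is biconnected outerplanar, it has a unique outerplane embedding, and this embedding determines $\mathcal{L}$: First I would compute the outerplane embedding of $G$ in $O(n)$ time, read off the cyclic Hamiltonian order of the vertices around the outer face, and then test in $O(n)$ time whether a single edge attains the maximum weight. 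If not, Lemma~\ref{le:linear-max-outerplanar-biconnected-characterization} already certifies that no such embedding exists; if so, I would cut the cyclic order at the endpoints of $e_M$ to produce the candidate linear order $\mathcal{L}$.

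The second half of the proof is the verification step: given the forced candidate order $\mathcal{L}$, decide whether it actually satisfies the {\sc max} constraint, namely that $\omega(e_1)>\omega(e_2)$ whenever $e_1 \wraps e_2$. A naive check over all wrapping pairs costs $\Theta(n^2)$, so the key is to verify it in linear time. The plan is to observe that it suffices to check, for each edge $e_2$, only its \emph{immediate} wrapper, i.e., the edge $e_1$ with $e_1 \wraps e_2$ that is innermost among all such wrappers; by transitivity of $\wraps$ and of the weight inequality, if every edge has strictly smaller weight than its immediate wrapper, then it has strictly smaller weight than all its wrappers. This nesting structure is exactly the parent relation in the extended dual tree $\mathcal{T}$ of the outerplane embedding (each internal edge of $\mathcal{T}$ is dual to an edge of $G$, and the tree encodes which edge lies immediately outside which), so I would traverse $\mathcal{T}$ in $O(n)$ time and compare each edge's weight against its parent's.

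The main obstacle I anticipate is making the ``immediate wrapper'' reduction rigorous and implementing it in genuinely linear time. Concretely, I must argue that the wrapping relation restricted to the forced order forms a forest (each edge has at most one immediate wrapper) and that checking only parent--child weight comparisons is both necessary and sufficient. Identifying immediate wrappers directly from $\mathcal{L}$ can be done with a single left-to-right sweep using a stack, so even without invoking $\mathcal{T}$ explicitly one obtains an $O(n)$ verification; using $\mathcal{T}$ is the cleaner route since the tree is already available from the outerplane embedding. Everything else—computing the embedding, finding $e_M$, and extracting the linear order—is standard and linear.

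Assembling these pieces yields the algorithm: compute the outerplane embedding and extended dual tree $\mathcal{T}$ in $O(n)$ time; check uniqueness of the maximum-weight edge $e_M$ in $O(n)$ time; if unique, fix $\mathcal{L}$ by cutting the outer cycle at the endpoints of $e_M$; then, for each edge, compare its weight to that of its immediate wrapper (its parent in $\mathcal{T}$) and accept if and only if every such comparison is strict. Correctness follows from Lemma~\ref{le:linear-max-outerplanar-biconnected-characterization} (which forces $\mathcal{L}$) together with the transitivity argument (which shows the local checks capture the global {\sc max} constraint), and the total running time is $O(n)$.
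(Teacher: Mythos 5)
Your proposal is correct and follows essentially the same route as the paper's proof: use Lemma~\ref{le:linear-max-outerplanar-biconnected-characterization} to force the unique candidate order $\mathcal{L}$ determined by the unique maximum-weight edge $e_M$, then verify the {\sc max} constraint in $O(n)$ time via parent--child weight comparisons in the extended dual tree $\mathcal T$ rooted at the leaf dual to $e_M$. Your explicit transitivity argument for why checking only immediate wrappers suffices is exactly what the paper's dual-tree check implements, just spelled out more fully.
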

\begin{proof}
First, we determine in $O(n)$ time whether $G$ has a unique edge $e_M$ with maximum weight; if not, by Lemma~\ref{le:linear-max-outerplanar-biconnected-characterization} we can conclude that $G$ admits no {\sc max}-constrained book-embedding. By~\cite{d-iroga-07,m-laarogmog-79,w-rolt-87}, we can determine in $O(n)$ time the unique, up to a flip, $1$-page book-embedding $\mathcal{L}$ such that $e_M \wraps e$ for each edge $e \in E$ with $e \neq e_M$. 

It remains to test whether $\prec_\mathcal{L}$ meets the requirements of a {\sc max}-constrained book-embedding. We construct in $O(n)$ time the extended dual tree $\mathcal T$ of the outerplane embedding of $G$. We root $\mathcal T$ at the leaf $\rho$ such that the edge of $\mathcal T$ incident to $\rho$ is dual to $e_M$. We visit $\mathcal T$ and perform the following checks in total $O(n)$ time. Consider an edge $(\alpha,\beta)$ of $\mathcal T$ such that $\alpha$ is the parent of $\beta$ and let $e$ be the edge of $G$ dual to $(\alpha,\beta)$. Consider the edges $(\beta,\gamma_1),\dots,(\beta,\gamma_k)$ of $\mathcal T$ from $\beta$ to its children and let $e_1,\dots,e_k$ be the edges of $G$ dual to $(\beta,\gamma_1),\dots,(\beta,\gamma_k)$, respectively. For $i=1,\dots,k$, we check whether $\omega(e)>\omega(e_i)$. If one of these checks fails, we conclude that $G$ admits no {\sc max}-constrained book-embedding, otherwise $\mathcal{L}$ is a {\sc max}-constrained book-embedding of $G$.
\end{proof}

We now show how Algorithm {\sc max-be-drawer} deals with a not necessarily biconnected $n$-vertex outerplanar graph~$G$. We can assume that $G$ is connected. Indeed, if $G$ is not connected, then it admits a  {\sc max}-constrained book-embedding if and only if every connected component of it admits a  {\sc max}-constrained book-embedding.

First, we compute in $O(n)$ time the block-cut-vertex tree $T$ of $G$~\cite{h-gt-69,ht-aeagm-73}. We root $T$ at any block $b^*$ containing an edge $e_M$ with maximum weight. For a B-node $b$ of $T$, we denote by $G^+(b)$ the subgraph of $G$ consisting of all the blocks $G(b')$ such that $b'$ is a B-node in the subtree of $T$ rooted at $b$. Also, for each B-node $b$ of $T$ we compute in overall $O(n)$ time the value $W^+(b)$ of the maximum weight of an edge of $G^+(b)$.

We visit (in arbitrary order) $T$. For each B-node $b$, we perform the following checks and computations. 

\begin{enumerate}
\item We check whether $G(b)$ admits a  {\sc max}-constrained book-embedding; this can be done in a time that is linear in the number of vertices of $G(b)$, by Lemma~\ref{le:linear-max-outerplanar-biconnected-algo}. If not, we conclude that $G$ admits no  {\sc max}-constrained book-embedding (Failure Condition $1$). If yes, we compute a  {\sc max}-constrained book-embedding (again by Lemma~\ref{le:linear-max-outerplanar-biconnected-algo}) and call it $\mathcal{L}(b)$. 
\item If $b \neq b^*$, consider the C-node $c$ that is the parent of $b$ in $T$. We check in constant time whether $c$ is the first or the last vertex of $\mathcal{L}(b)$. If not, we conclude that $G$ admits no  {\sc max}-constrained book-embedding (Failure Condition $2$). Otherwise, we possibly flip in constant time $\mathcal{L}(b)$ so that $c$ is the first vertex of $\mathcal{L}(b)$.
\item For each C-node $c$ of $T$ that is adjacent to $b$, we store two values $\ell_b(c)$ and $r_b(c)$. These are the weights of the lowest-left and lowest-right edges incident to $c$ in $\mathcal{L}(b)$, respectively; if a vertex preceding or following $c$ in $\mathcal{L}(b)$ does not exist, then we set $\ell_b(c)$ or $r_b(c)$ to $\infty$, respectively. This can be done in constant time for each C-node.
\end{enumerate}


Algorithm {\sc max-be-drawer} now performs a bottom-up visit of $T$. After visiting a B-node $b$, we either conclude that $G$ admits no {\sc max}-constrained book-embedding or we determine a linear order $\mathcal{L}^+(b)$ for the vertices in $G^+(b)$ such that, if $b \neq b^*$, the parent of $b$ in $T$ is the first vertex of $\mathcal{L}^+(b)$. This is done as follows. 

If $b$ is a leaf of $T$, then we set in constant time $\mathcal{L}^+(b)=\mathcal{L}(b)$. 

If $b$ is an internal node of $T$, then we proceed as follows. We initialize $\mathcal{L}^+(b)$ to $\mathcal{L}(b)$; recall that the parent of $b$ in $T$, if $b \neq b^*$, is the first vertex of $\mathcal{L}(b)$.

Let $c_1,\dots,c_k$ be the C-nodes children of $b$ in $T$. For each $i=1,\dots,k$, let $b_{i,1},\dots,b_{i,m_i}$ be the B-nodes children of $c_i$. Since we already visited $b_{i,j}$, for $i=1,\dots,k$ and $j=1,\dots,m_i$, we have a linear order $\mathcal{L}^+(b_{i,j})$ of the vertices of $G^+(b_{i,j})$ such that $c_i$ is the first vertex of $\mathcal{L}^+(b_{i,j})$. We now process each C-node $c_i$ independently, for each $i=1,\dots,k$. 

We order the B-nodes $b_{i,1},\dots,b_{i,m_i}$ children of $c_i$ in decreasing order of value $W^+(b_{i,j})$; that is, $W^+(b_{i,1}) \geq W^+(b_{i,2}) \geq \dots \geq W^+(b_{i,m_i})$. This can be done in $O(m_i \log m_i)$ time. We now process the B-nodes $b_{i,1},\dots,b_{i,m_i}$ in this order (see Fig.~\ref{fig:max-biconnected}). When processing a node $b_{i,j}$, for $j=1,\dots,m_i$, we insert the vertices of $G^+(b_{i,j})$ into the ordering $\mathcal L^+(b)$, by replacing $c_i$ with either $\mathcal L^+(b_{i,j})$ (that is, $\mathcal L^+(b_{i,j})$ is inserted \emph{to the right} of $c_i$) or the flip of $\mathcal L^+(b_{i,j})$ (that is, $\mathcal L^+(b_{i,j})$ is inserted \emph{to the left} of $c_i$). This operation can be performed in constant time. Further, the choice on whether we insert $\mathcal L^+(b_{i,j})$ to the left or to the right of $c_i$ is performed as described in the following. 

We use two variables, called $L(c_i)$ and $R(c_i)$, and maintain the invariant that, while processing the B-nodes $b_{i,1},\dots,b_{i,m_i}$, they represent the weight of the lowest-left and lowest-right edges incident to $c_i$ in $\mathcal L^+(b)$. The variables $L(c_i)$ and $R(c_i)$ are initialized to $\ell_b(c_i)$ and $r_b(c_i)$, respectively, hence the invariant is satisfied before any B-node $b_{i,j}$ is processed.  

\begin{figure}[tb]
	\centering
	\subfloat[]{\label{fig:before-b-i-j}\includegraphics[page=3,scale=0.42]{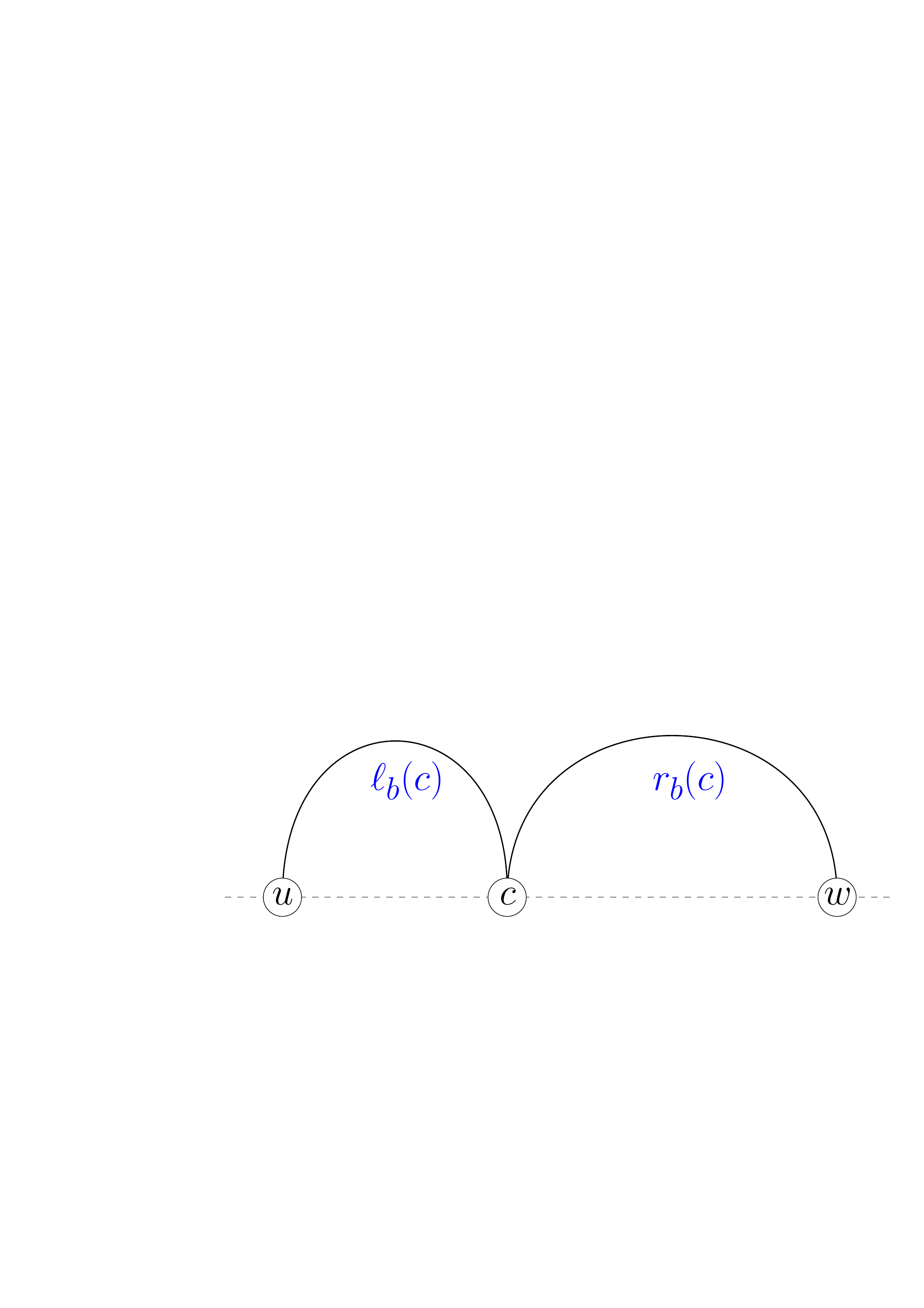}}
	\hfil
	\subfloat[]{\label{fig:right-b-i-j}\includegraphics[page=2,scale=0.42]{figures/InsertionBIJ.pdf}}
	\hfil
	\subfloat[]{\label{fig:left-b-i-j}\includegraphics[page=4,scale=0.42]{figures/InsertionBIJ.pdf}}
	\hfil
	\caption{A figure to illustrate how an ordering $\mathcal{L}^+(b_{i,j})$ is inserted into an ordering $\mathcal{L}^+(b)$. (a) The ordering $\mathcal{L}^+(b)$ before the insertion of $\mathcal{L}^+(b_{i,j})$; only $c_i$ and the lowest-left and lowest-right edges incident to $c_i$ in $\mathcal{L}^+(b)$ are shown; in this example, $L(c_i)=25$ and $R(c_i)=15$. (b) The ordering $\mathcal{L}^+(b)$ if $\mathcal{L}^+(b_{i,j})$ is inserted to the right of $c_i$, as it happens if $G^+(b_{i,j})$ is such that $W^+(b_{i,j})=10$; in this example, $r_{b_{i,j}}(c_i)=5$. (c) The ordering $\mathcal{L}^+(b)$ if $\mathcal{L}^+(b_{i,j})$ is inserted to the left of $c_i$, as it happens if $G^+(b_{i,j})$ is such that $W^+(b_{i,j})=20$; in this example, $r_{b_{i,j}}(c_i)=7$.}\label{fig:max-biconnected}
\end{figure}

\begin{itemize}
    \item If $W^+(b_{i,j}) \geq L(c_i)$ and $W^+(b_{i,j}) \geq R(c_i)$, then we conclude that $G$ admits no {\sc max}-constrained book-embedding (Failure Condition $3$).
    \item Otherwise, if $W^+(b_{i,j})< R(c_i)$, as in Figs.~\ref{fig:before-b-i-j} and~\ref{fig:right-b-i-j}, then we insert the vertices of $G^+(b_{i,j})$ into the ordering $\mathcal L^+(b)$, by replacing $c_i$ with $\mathcal L^+(b_{i,j})$;  we update $R(c_i)$ with the value of $r_{b_{i,j}}(c_i)$. 
    \item Otherwise, we have $W^+(b_{i,j}) \geq R(c_i)$ and $W^+(b_{i,j})< L(c_i)$, as in Figs.~\ref{fig:before-b-i-j} and~\ref{fig:left-b-i-j}; then we insert the vertices of $G^+(b_{i,j})$ into the ordering $\mathcal L^+(b)$, by replacing $c_i$ with the flip of $\mathcal L^+(b_{i,j})$; we update $L(c_i)$ with the value of $r_{b_{i,j}}(c_i)$.
\end{itemize}

When visiting the root $b^*$ of $T$, the algorithm computes an order $\mathcal{L}:=\mathcal{L}^+(b^*)$ of all the vertices of $G$. 

The next two lemmata prove the correctness of Algorithm {\sc max-be-drawer}.

\begin{lemma}\label{le:success-linear-max-outeplanar}
If Algorithm {\sc max-be-drawer} constructs an ordering $\mathcal{L}$, then $\mathcal{L}$ is a  {\sc max}-constrained book-embedding of~$G$.
\end{lemma}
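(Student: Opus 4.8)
The plan is to show that the ordering $\mathcal{L}$ produced by the algorithm satisfies the defining wrapping condition of a \textsc{max}-constrained book-embedding: whenever an edge $e_1$ wraps around an edge $e_2$, we must have $\omega(e_1) > \omega(e_2)$. The natural strategy is to classify every pair of wrapping edges according to the structure of the block-cut-vertex tree $T$, so I would distinguish three cases depending on where $e_1$ and $e_2$ sit relative to the blocks and cut-vertices. First, if both $e_1$ and $e_2$ belong to the same block $G(b)$, then the wrapping relation between them in $\mathcal{L}$ is exactly the wrapping relation in $\mathcal{L}(b)$, because the algorithm inserts the orderings of the descendant components only by replacing cut-vertices and hence does not alter the relative order of vertices internal to a single block. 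Since $\mathcal{L}(b)$ was verified by Lemma~\ref{le:linear-max-outerplanar-biconnected-algo} to be a \textsc{max}-constrained book-embedding of $G(b)$, the required inequality $\omega(e_1) > \omega(e_2)$ holds in this case.

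The second and main case is when $e_1$ and $e_2$ lie in distinct blocks. Here I would argue that if $e_1 \wraps e_2$ in $\mathcal{L}$, then the block $G(b_1)$ containing $e_1$ is an ancestor (strictly closer to the root $b^*$) of the block $G(b_2)$ containing $e_2$ in $T$, and moreover $e_2$ lies strictly under $e_1$. The key observation is that when the algorithm inserts a subtree ordering $\mathcal{L}^+(b_{i,j})$ into $\mathcal{L}^+(b)$ by replacing the cut-vertex $c_i$, the entire inserted block of vertices lies strictly between the two vertices that were adjacent to $c_i$ — that is, strictly under whichever of the lowest-left or lowest-right edge incident to $c_i$ is on the side of insertion. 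Thus any edge $e_2$ of $G^+(b_{i,j})$ that is wrapped by an edge $e_1$ of an ancestor block must be wrapped, in particular, by the lowest-left or lowest-right edge $f$ incident to the cut-vertex $c_i$ through which the insertion occurred, and we have $\omega(e_1) \geq \omega(f)$. The Failure Condition checks in the itemized step guarantee precisely that the maximum weight $W^+(b_{i,j})$ of any edge in the inserted subtree — and hence $\omega(e_2)$ — is strictly less than the weight of the side-edge $f$ into whose span the subtree is inserted (either $L(c_i)$ or $R(c_i)$, which track the weights of the current lowest-left and lowest-right edges at $c_i$). Chaining these inequalities, $\omega(e_2) \le W^+(b_{i,j}) < \omega(f) \le \omega(e_1)$, yields the desired strict inequality.

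The third case, handling edges that share a common cut-vertex as an endpoint, is really a boundary subcase of the second: if $e_1$ and $e_2$ meet at a cut-vertex $c$, then one of them is the lowest edge incident to $c$ on the relevant side, and the same invariant on $L(c)$ and $R(c)$ together with the successful Failure Condition $3$ test gives $\omega(e_1) > \omega(e_2)$. The crux of the argument, and the step I expect to be the main obstacle, is establishing rigorously the invariant that $L(c_i)$ and $R(c_i)$ always equal the weights of the current lowest-left and lowest-right edges incident to $c_i$ in the partially-built ordering $\mathcal{L}^+(b)$, and that insertion strictly under one of these edges correctly captures all new wrapping relations created by that insertion. This requires a careful inductive bookkeeping over the order in which the children $b_{i,1},\dots,b_{i,m_i}$ are processed (by decreasing $W^+$), since processing an earlier, heavier subtree changes which edge is the innermost one and thereby governs where the next, lighter subtree may be placed. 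I would formalize this by maintaining the induction hypothesis that, after each insertion, $\mathcal{L}^+(b)$ restricted to the already-placed vertices is a \textsc{max}-constrained book-embedding and the invariant on $L(c_i), R(c_i)$ holds, then verify that each of the three itemized branches preserves it.
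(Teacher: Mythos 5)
Your proof is correct and takes essentially the same route as the paper's: an induction over the block-cut-vertex tree whose heart is exactly the invariant you identify, namely that $L(c_i)$ and $R(c_i)$ always equal the weights of the current lowest-left and lowest-right edges incident to $c_i$ in the partially built $\mathcal{L}^+(b)$, so that the passed check $W^+(b_{i,j})<R(c_i)$ (or $W^+(b_{i,j})<L(c_i)$) yields your chain $\omega(e_2)\leq W^+(b_{i,j})<\omega(f)\leq\omega(e_1)$. One small inaccuracy worth fixing: in your second case the block containing $e_1$ need not be an ancestor of the block containing $e_2$ in $T$ --- if two subtrees $G^+(b_{i,j'})$ and $G^+(b_{i,j})$ are inserted on the same side of the same cut-vertex $c_i$, an edge of the \emph{sibling} $G^+(b_{i,j'})$ can wrap edges of the later-inserted $G^+(b_{i,j})$ --- but your argument does not actually use ancestry: in that situation the side edge $f$ is itself an edge of $G^+(b_{i,j'})$ incident to $c_i$ (its weight being precisely the updated value $r_{b_{i,j'}}(c_i)$ stored in $R(c_i)$), any such wrapping $e_1$ must be incident to $c_i$ and wrap or equal $f$, and the same chain of inequalities goes through.
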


\begin{proof}
We prove, by induction on $T$, that the linear order $\mathcal{L}^+(b)$ constructed by the algorithm is a  {\sc max}-constrained book-embedding of $G^+_{b}$. This implies the statement of the lemma with $b=b^*$. Our inductive proof also proves the following property for the constructed book-embeddings: If $b \neq b^*$, then the parent $c$ of $b$ is the first vertex in $\mathcal{L}^+(b)$. 

In the base case, $b$ is a leaf of $T$. Since Algorithm {\sc max-be-drawer} did not terminate because of Failure Condition 1, by Lemma~\ref{le:linear-max-outerplanar-biconnected-algo} we have that the order $\mathcal L^+(b)=\mathcal{L}(b)$ constructed by the algorithm is a  {\sc max}-constrained book-embedding of $G^+(b)=G(b)$. Further, since Algorithm {\sc max-be-drawer} did not terminate because of Failure Condition 2, we have that the parent $c$ of $b$ is the first vertex in $\mathcal{L}^+(b)=\mathcal{L}(b)$. 

In the inductive case, $b$ is a non-leaf node of $T$. Let $c_1,\dots,c_k$ and, for $i=1,\dots,k$, let $b_{i,1},\dots,b_{i,m_i}$ be defined as in the algorithm's description. By the property, we have that the linear order $\mathcal{L}^+(b_{i,j})$ is such that $c_i$ is the first vertex of $\mathcal{L}^+(b_{i,j})$, for each $i=1,\dots,k$ and $j=1,\dots,m_i$. Further, since the algorithm did not terminate because of Failure Condition 2, we have that, if $b \neq b^*$, the parent $c$ of $b$ is the first vertex in $\mathcal{L}(b)$. Recall that the algorithm initializes $\mathcal{L}^+(b)=\mathcal{L}(b)$.

Recall that the algorithm processes independently each C-node $c_i$ child of $b$. In order to argue that the insertion of the orders $\mathcal{L}^+(b_{i,j})$ into the order $\mathcal{L}^+(b)$ results in a  {\sc max}-constrained book-embedding of~$G^+(b)$ satisfying the property, we show that, for each $j=1,\dots,m_i$, after the insertion of the order $\mathcal{L}^+(b_{i,j})$ into $\mathcal{L}^+(b)$, we have that $L(c_i)$ and $R(c_i)$ are the weights of the lowest-left and of the lowest-right edges incident to $c_i$, respectively (where $L(c_i)=\infty$ or $R(c_i)=\infty$ if the lowest-left edge of $c_i$ or the lowest-right edge of $c_i$ is undefined, respectively). Observe that this is the case before the insertion of any order $\mathcal{L}^+(b_{i,j})$ into $\mathcal{L}^+(b)$, given that $L(c_i)$ and $R(c_i)$ are initialized to $\ell_b(c_i)$ and $r_b(c_i)$, respectively.    

When we insert an order $\mathcal{L}^+(b_{i,j})$ into $\mathcal L^+(b)$, we insert $\mathcal{L}^+(b_{i,j})$ to the right of $c_i$ only if $W^+(b_{i,j})<R(c_i)$. Since $R(c_i)$ is the weight of the lowest-right edge incident to $c_i$ in $\mathcal{L}^+(b)$ before the insertion of $\mathcal{L}^+(b_{i,j})$ and since all the edges of $G^+(b_{i,j})$ lie under the lowest-right edge incident to $c_i$, no edge of $G^+(b_{i,j})$ has a weight larger than the weight of the lowest-right edge incident to $c_i$. Then $\mathcal{L}^+(b)$ after the insertion is a  {\sc max}-constrained book-embedding, given that $\mathcal{L}^+(b)$  before the insertion and $\mathcal{L}^+(b_{i,j})$ are both  {\sc max}-constrained book-embeddings. Note that the lowest-right edge incident to $c_i$ after the insertion in $\mathcal{L}^+(b)$ is the lowest-right edge incident to $c_i$ in $\mathcal{L}^+(b_{i,j})$, and indeed the algorithm updates $R(c_i)=r_{b_{i,j}}(c_i)$, which is the weight of such an edge. For each cut-vertex $c_j$ different from $c_i$, both the lowest-right edge and the lowest-left edge incident to $c_j$ remain unchanged and so do the values $L(c_j)$ and $R(c_j)$. The case in which $\mathcal{L}^+(b_{i,j})$ is inserted to the left of $c_i$ in $\mathcal{L}^+(b)$ is analogous. Observe that, since Algorithm {\sc max-be-drawer} did not terminate because of Failure Condition 3, we have that $W^+(b_{i,j})<L(c_i)$ or $W^+(b_{i,j})<R(c_i)$ holds true.

If $b \neq b^*$, then, since the algorithm did not terminate because of Failure Condition 2, the parent $c$ of $b$ is the first vertex of $\mathcal{L}(b)$. Since the only block of $G^+(b)$ vertex $c$ belongs to is $G(b)$, we have that $c$ is the first vertex of $\mathcal{L}^+(b)$, as well. 
\end{proof}

\begin{lemma}\label{le:failure-linear-max-outeplanar}
If Algorithm {\sc max-be-drawer} fails, then $G$ does non admit a  {\sc max}-constrained book-embedding. 
\end{lemma}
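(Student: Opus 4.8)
The plan is to argue the contrapositive: assuming that $G$ admits a {\sc max}-constrained book-embedding $\mathcal{L}^*$, I will show that none of the three failure conditions is ever triggered, so the algorithm cannot fail. Failure Condition~1 is immediate: the restriction of $\mathcal{L}^*$ to the vertices of a block $G(b)$ is itself a {\sc max}-constrained book-embedding of $G(b)$, since passing to a sub-order preserves planarity and the wrapping relation $\wraps$, hence preserves all the weight inequalities. Thus every block admits such an embedding and Failure Condition~1 does not fire.

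The treatment of Failure Condition~2 rests on a crossing argument that I will reuse throughout. Let $c$ be the parent cut-vertex of a non-root block $b$, set $A=G^+(b)$ and let $B$ be the rest of $G$ together with $c$, so that $A\cap B=\{c\}$ and $e_M\in B$. Since $A\setminus c$ is connected, if it had vertices on both sides of $c$, a path in $A\setminus c$ would contain an edge straddling $c$; this edge either crosses an edge of $B$ incident to $c$, or forces all of $B$ (in particular $e_M$) to lie strictly under an edge of $A$, contradicting the maximality of $\omega(e_M)$ through the wrapping constraint. Hence $A\setminus c$ lies on one side of $c$, and the same dichotomy shows $A$ is nested strictly under an edge of $B$ incident to $c$. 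Applied with $A=G(b)$, this shows $c$ cannot be strictly interior to the span of the block's maximum edge, so $c$ is the first or the last vertex of the order induced on $G(b)$; since that induced order equals $\mathcal{L}(b)$ up to a flip by Lemma~\ref{le:linear-max-outerplanar-biconnected-characterization}, Failure Condition~2 does not fire.

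The crux is Failure Condition~3, which I will reduce to an abstract two-slot feasibility problem. Fix a cut-vertex $c_i$ with children subtrees $S_j=G^+(b_{i,j})$. By the crossing lemma above, in $\mathcal{L}^*$ each $S_j\setminus c_i$ occupies a contiguous interval on one side of $c_i$ and the subtrees are pairwise nested; moreover no $S_j$ wraps an edge of $G(b)$ incident to $c_i$, since wrapping a block edge on the side of the root-ward cut-vertex would force $e_M$ under the subtree's outer edge, while wrapping on the interior side would cross the block's maximum edge. Consequently, on each side the outermost subtree lies under the block's innermost edge there (of weight $\ell_b(c_i)$ or $r_b(c_i)$, read as $\infty$ when $c_i$ ends the block's span on that side), and an inner subtree lies under the innermost edge of the next outer one; the wrapping constraint forces each wrapping edge to exceed every edge it covers, so within a side the subtrees are nested in \emph{decreasing} order of $W^+(b_{i,j})$, with each step requiring $W^+(b_{i,j'})<r_{b_{i,j}}(c_i)$. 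This is exactly an assignment of $b_{i,1},\dots,b_{i,m_i}$ to two slots with initial capacities $\ell_b(c_i),r_b(c_i)$, where placing an item of value $v=W^+(b_{i,j})$ requires $v$ to be strictly smaller than the current capacity of its slot and then resets that capacity to $\rho=r_{b_{i,j}}(c_i)\le v$. Thus $\mathcal{L}^*$ certifies this instance is feasible, and a matching case analysis shows that this abstract feasibility is in fact equivalent to geometric feasibility at $c_i$.

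It remains to show that the algorithm's greedy rule—process items in decreasing order of $W^+$, placing each on the right slot if possible and otherwise on the left—succeeds whenever the instance is feasible; this is the main obstacle. I will prove it by induction on the number of items, the heart being an exchange lemma: if both slots can host the current (largest remaining) item, then placing it on either slot leaves a feasible residual instance. This holds because every later item has value at most $v$ while both slots currently have capacity larger than $v$; given a feasible completion after placing the item on one slot, one obtains a feasible completion for the other by swapping the two resulting chains—the chain that started from the post-value $\rho$ is re-attached to the slot whose capacity is now $\rho$, and the chain that started from a capacity exceeding $v$ is re-attached to the other slot, whose capacity also exceeds $v$ and hence the value of its first item. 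With the exchange lemma, whenever a feasible assignment exists the greedy choice preserves feasibility, so Failure Condition~3 does not fire. As none of the three conditions can fire, the algorithm does not fail, which is the contrapositive of the statement; note that the left/right placements are realized by exactly the flips of the $\mathcal{L}^+(b_{i,j})$ that the algorithm tracks, and the symmetry between the two slots is precisely what the exchange lemma exploits.
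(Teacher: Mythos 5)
Your treatment of Failure Conditions~1 and~2 is essentially the paper's: restriction of $\mathcal{L}^*$ to a block for Condition~1, and for Condition~2 a crossing argument that coincides with the paper's \emph{extreme-parent property} (a path from the parent cut-vertex toward $e_M$ must cross the edge straddling $c$, because $e_M$ cannot be nested into a lighter edge). Where you genuinely diverge is Failure Condition~3, which is the crux. The paper argues \emph{locally at the moment of failure}: it introduces $w^+(b')$, the minimum weight of an edge incident to the parent cut-vertex in $G^+(b')$, observes that in \emph{any} {\sc max}-constrained book-embedding with the cut-vertex extreme the lowest edge at that vertex has weight exactly $w^+(b')$ (so the algorithm's capacities $L(c_i)$, $R(c_i)$ are embedding-independent quantities), and then derives a pigeonhole contradiction: the three subgraphs $G^+(b_{i,\ell})$, $G^+(b_{i,r})$, $G^+(b_{i,j})$ (last inserted left, last inserted right, current) must each lie on one side of $c_i$, yet no two can share a side, using only the sorted order $W^+(b_{i,\ell}),W^+(b_{i,r})\geq W^+(b_{i,j})$; four cases handle the possible nonexistence of $b_{i,\ell}$ or $b_{i,r}$. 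You instead prove a stronger global statement: you abstract the situation at $c_i$ into a two-slot problem with strict capacities and post-values $\rho_j=r_{b_{i,j}}(c_i)\leq W^+(b_{i,j})$, and show via an exchange lemma that the greedy rule (right if possible, else left) succeeds on every feasible instance; the swap is sound precisely because when both slots exceed $v_j$, all later values are at most $v_j$, so the two chains can be transposed, and the forced cases are immediate. This buys a cleaner invariant (the greedy state dominates any embedding) and in effect re-proves part of the success direction of Lemma~\ref{le:success-linear-max-outeplanar}, at the cost of more machinery than the paper's pigeonhole, which needs no optimality claim about the greedy at all. Two steps you assert rather than carry out, both true and both needed: (i) the ``matching case analysis'' identifying geometric with abstract feasibility---only the direction (embedding of $G$ implies a feasible instance at $c_i$) is actually required here, and it rests on (ii) the fact that $r_{b_{i,j}}(c_i)$, read off the algorithm's particular recursively constructed embedding, coincides with the intrinsic value $w^+(b_{i,j})$ in \emph{every} admissible embedding of $G^+(b_{i,j})$; without (ii), an adversarial $\mathcal{L}^*$ using different sub-embeddings would not certify feasibility of the instance your greedy runs on. The one-line wrapping argument you yourself invoke does prove (ii) (this is exactly the paper's stated ``observation''), so I read these as compressions rather than errors, and the proposal as a correct proof by a genuinely different route.
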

\begin{proof}
Suppose that Algorithm {\sc max-be-drawer} fails. This can happen because of Failure Condition~$1$, $2$, or $3$. We discuss the three cases. 

Suppose that Failure Condition $1$ is verified for a B-node $b$ of $T$. It is immediate that a  {\sc max}-constrained book-embedding of $G$ restricted to the vertices and edges of $G(b)$ would yield a  {\sc max}-constrained book-embedding of $G(b)$. Hence, if $G(b)$ admits no  {\sc max}-constrained book-embedding, neither does $G$. 

\begin{figure}[htb]
	\centering
	\includegraphics[scale=0.5]{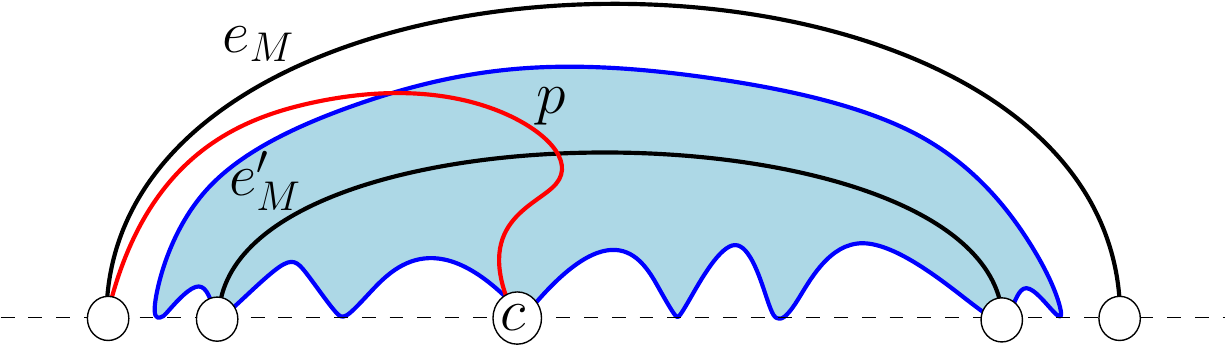}
	\caption{Illustration for the proof of the extreme-parent property. The shaded region represents $G^+(b)$.}\label{fig:extreme-property}
\end{figure}

In order to prove that, if Failure Condition $2$ is verified for a B-node $b \neq b^*$ of $T$, then $G$ admits no  {\sc max}-constrained book-embedding, we prove the following stronger statement (which we call the \emph{extreme-parent property}). Let $\mathcal L$ be any {\sc max}-constrained book-embedding of $G$, let $b$ be any B-node of $T$ different from $b^*$, let $c$ be the C-node parent of $b$ in $T$, and let $\mathcal L^+(b)$ be the {\sc max}-constrained book-embedding of $G^+(b)$ obtained by restricting $\mathcal L$ to the vertices and edges of $G^+(b)$. Then $c$ is the first or the last vertex of $\mathcal L^+(b)$. The extreme-parent property implies that, if Failure Condition $2$ is verified for a B-node $b \neq b^*$ of $T$, that is, if the parent $c$ of $b$ is neither the first nor the last vertex in the unique (up to a flip) {\sc max}-constrained book-embedding of $G(b)$, then $G$ admits no {\sc max}-constrained book-embedding.

We now prove the extreme-parent property. Suppose, for a contradiction, that $c$ is neither the first nor the last vertex of $\mathcal{L}^+(b)$; refer to Fig.~\ref{fig:extreme-property}. Since $c$ belongs to exactly one block of $G^+(b)$, namely $G(b)$, and since $G^+(b)$ is connected, the assumption that $c$ is neither the first nor the last vertex of $\mathcal{L}^+(b)$ implies that there exists an edge $e'_M$ of $G^+(b)$ whose end-vertices are one before and one after $c$ in $\mathcal{L}^+(b)$. Consider the path $P$ in $T$ from $c$ to $b^*$. Consider any path $p$ in $G$ whose vertices and edges belong to the blocks corresponding to B-nodes in $P$ and whose end-vertices are $c$ and one of the end-vertices of $e_M$ different from $c$ (recall that $e_M$ is an edge of $G$ with maximum weight and belongs to $G(b^*)$). Since $c$ is the cut-vertex parent of $b$ and $b^*$ is the root of $T$, we have that neither $p$ nor $e_M$ contains any vertex of $G^+(b)$ except, possibly, for $c$; in particular, neither $p$ nor $e_M$ contains either of the end-vertices of $e'_M$. Since $\omega(e_M) \geq \omega(e'_M)$, we have that $e_M$ is not nested into $e'_M$ in $\mathcal{L}^+(b)$. Hence, we have that $p$ crosses $e'_M$, a contradiction which proves the extreme-parent property.

Suppose that Failure Condition $3$ is verified for a B-node $b_{i,j}$ which is a child of a C-node $c_i$ whose parent B-node is $b$, that is, $W^+(b_{i,j}) \geq L(c_i)$ and $W^+(b_{i,j}) \geq R(c_i)$. We prove that this implies that $G$ admits no {\sc max}-constrained book-embedding. In order to do that, we are going to exploit the extreme-parent property, as well as the following observation: Let $b'\neq b^*$ be a B-node of $T$, let $c'$ be the parent of $b'$ in $T$, and let $\mathcal L^+(b')$ be a {\sc max}-constrained book-embedding of $G^+(b')$ such that $c'$ is the first (resp.\ last) vertex of $\mathcal L^+(b')$; then the weight of the lowest-right (resp.\ lowest-left) edge incident to $c'$ in $\mathcal L^+(b')$ is equal to the smallest weight of any edge incident to $c'$ in $G^+(b')$. Indeed, if the observation were not true, the  smallest-weight edge incident to $c'$ in $G^+(b')$ would wrap around a different edge incident to $c'$ in $\mathcal L^+(b')$, which would violate the conditions of a {\sc max}-constrained book-embedding. Let $w^+(b')$ denote the minimum weight of any edge incident to the parent $c'$ of $b'$ in $G^+(b')$.

Recall that $L(c_i)$ and $R(c_i)$ are the weights of the lowest-left and lowest-right edges incident to $c_i$ in $\mathcal L^+(b)$ before the temptative insertion of $\mathcal L(b_{i,j})$. Let $b_{i,\ell}$ and $b_{i,r}$ the B-nodes such that $\mathcal L^+(b_{i,l})$ and $\mathcal L^+(b_{i,r})$ were the last orders inserted to the left and to the right of $c_i$, respectively, before processing $b_{i,j}$. 
Observe that one or both of $b_{i,\ell}$ and $b_{i,r}$ may not exist. We distinguish four cases.

\begin{itemize}
    \item Suppose first that both $b_{i,\ell}$ and $b_{i,r}$ exist. Then the lowest-left and lowest-right edges incident to $c_i$ in $\mathcal L^+(b)$ before the temptative insertion of $\mathcal L(b_{i,j})$ belong to $G^+(b_{i,\ell})$ and $G^+(b_{i,r})$, respectively. Then, by the above observation, the inequalities $W^+(b_{i,j})\geq L(c_i)$ and  $W^+(b_{i,j})\geq R(c_i)$ of Failure Condition 3 imply that $W^+(b_{i,j})\geq w^+(b_{i,\ell})$ and  $W^+(b_{i,j})\geq w^+(b_{i,r})$.
    
    By the extreme-parent property, in any {\sc max}-constrained book-embedding of $G$, the vertex $c_i$ is the first or the last vertex among the ones of $G^+(b_{i,\ell})$, of $G^+(b_{i,r})$, and of $G^+(b_{i,j})$; that is, $G^+(b_{i,\ell})$ lies entirely to the left or entirely to the right of $c_i$, and so do $G^+(b_{i,r})$ and $G^+(b_{i,j})$. 
    
    Further, $G^+(b_{i,\ell})$ and $G^+(b_{i,r})$ cannot lie on the same side of $c_i$. Namely,  because of the ordering of the B-nodes that are children of $c_i$, we have that $W^+(b_{i,\ell}) \geq W^+(b_{i,j})$; by $W^+(b_{i,j})\geq w^+(b_{i,r})$ it then follows that $W^+(b_{i,\ell}) \geq w^+(b_{i,r})$, and hence $G^+(b_{i,\ell})$ cannot lie under $G^+(b_{i,r})$. Analogously, we have that $W^+(b_{i,r}) \geq W^+(b_{i,j}) \geq w^+(b_{i,\ell})$, hence $G^+(b_{i,r})$ cannot lie under $G^+(b_{i,\ell})$.
    
    By $W^+(b_{i,j})\geq w^+(b_{i,\ell})$, it directly follows that $G^+(b_{i,j})$ cannot lie under $G^+(b_{i,\ell})$. Moreover, $G^+(b_{i,\ell})$ cannot lie under $G^+(b_{i,j})$, given that $W^+(b_{i,\ell}) \geq W^+(b_{i,j}) \geq w^+(b_{i,j})$. Hence, $G^+(b_{i,\ell})$ and $G^+(b_{i,j})$ cannot lie on the same side of $c_i$. An analogous proof shows that $G^+(b_{i,r})$ and $G^+(b_{i,j})$ cannot lie on the same side of $c_i$.
    
    Since at least two out of $G^+(b_{i,\ell})$, $G^+(b_{i,r})$, and $G^+(b_{i,j})$ have to lie on the same side of $c_i$, it follows that $G$ admits no {\sc max}-constrained book-embedding.
    
    \item Suppose next that $b_{i,\ell}$ exists and $b_{i,r}$ does not. Then the lowest-left edge incident to $c_i$ in $\mathcal L^+(b)$ before the temptative insertion of $\mathcal L(b_{i,j})$ belongs to $G^+(b_{i,\ell})$. By the above observation, the inequality $W^+(b_{i,j})\geq L(c_i)$ of Failure Condition 3 implies that $W^+(b_{i,j})\geq w^+(b_{i,\ell})$. Further, since $W^+(b_{i,j})\geq R(c_i)$, we have that the lowest-right edge $e_r$ incident to $c_i$ in $\mathcal L^+(b)$ before the temptative insertion of $\mathcal L(b_{i,j})$ exists (as otherwise we would have $R(c_i)=\infty$) and belongs to $G(b)$; then $W^+(b_{i,j})\geq R(c_i)$ implies that $W^+(b_{i,j})\geq \omega(e_r)$.
    
    By the extreme-parent property, in any {\sc max}-constrained book-embedding of $G$, the graph $G^+(b_{i,\ell})$ lies entirely to the left or entirely to the right of $c_i$, and so does $G^+(b_{i,j})$. 
    
    By $W^+(b_{i,j})\geq w^+(b_{i,\ell})$, it directly follows that $G^+(b_{i,j})$ cannot lie under $G^+(b_{i,\ell})$. Moreover, $G^+(b_{i,\ell})$ cannot lie under $G^+(b_{i,j})$, given that $W^+(b_{i,\ell}) \geq W^+(b_{i,j}) \geq w^+(b_{i,j})$. Hence, $G^+(b_{i,\ell})$ and $G^+(b_{i,j})$ cannot lie on the same side of $c_i$.
    
    Further, neither $G^+(b_{i,j})$ nor $G^+(b_{i,\ell})$ can lie under $e_r$. This follows by $W^+(b_{i,\ell})\geq W^+(b_{i,j})\geq \omega(e_r)$. Hence, neither $G^+(b_{i,j})$ nor $G^+(b_{i,\ell})$ can lie under $G(b)$.
    
    Finally, $G(b)$ cannot lie under $G^+(b_{i,j})$ or $G^+(b_{i,\ell})$, as this would violate the extreme-parent property (if $b\neq b^*$) or would imply that $e_M$ is nested into an edge of $G^+(b_{i,j})$ or $G^+(b_{i,\ell})$ (if $b= b^*$). 
    \item The case in which $b_{i,r}$ exists and $b_{i,\ell}$ does not is symmetric to the previous one. 
    \item Finally, suppose that neither $b_{i,\ell}$ nor $b_{i,r}$ exists. Since $W^+(b_{i,j})\geq L(c_i)$ and $W^+(b_{i,j})\geq R(c_i)$, it follows that the lowest-left and lowest-right edges incident to $c_i$ in the unique (up to a flip) embedding $\mathcal L(b)$ of $G(b)$ both exist and have a weight not larger than $W^+(b_{i,j})$. Hence, $G^+(b_{i,j})$ cannot lie under $G(b)$; further, $c_i$ is neither the first nor the last vertex of $\mathcal L(b)$ (as otherwise we would have $L(c_i)=\infty$ or $R(c_i)=\infty$, respectively). The latter, together with the biconnectivity of $G(b)$, also implies that $G(b)$ cannot lie under $G^+(b_{i,j})$. It follows that $G$ admits no {\sc max}-constrained book-embedding.
 \end{itemize}

This concludes the proof of the lemma.
\end{proof}

Lemmata~\ref{le:success-linear-max-outeplanar} and~\ref{le:failure-linear-max-outeplanar} prove the correctness of Algorithm {\sc max-be-drawer}. Its running time is dominated by the $O(m_i \log m_i)$-time sorting that is performed on the $m_i$ children of each C-node $c_i$. Hence, the overall time complexity is $O(n \log n)$. This concludes the proof of Theorem~\ref{th:linear-max-outerplanar}.

The upper bound in Theorem~\ref{th:linear-max-outerplanar} is tight, as computing a {\sc max}-constrained book-embedding has a time complexity that is lower-bounded by that of a sorting algorithm. Indeed, given a set $S$ of $n$ distinct real numbers, one can construct a star $T$ with a center $c$ whose $n$ edges have the weights in $S$. Any {\sc max}-constrained book-embedding of $T$ partitions the edges into two ordered sequences, one to the left of $c$ and one to the right of $c$; a total ordering of $S$ can be constructed by merging these sequences in $O(n)$ time.    

\section{{\sc sum}-Constrained Book-Embeddings}\label{se:sum}


\begin{figure}[tb]
	\centering
	\subfloat[]{\label{fig:sum-constrained-embedding}\includegraphics[width=0.45\columnwidth]{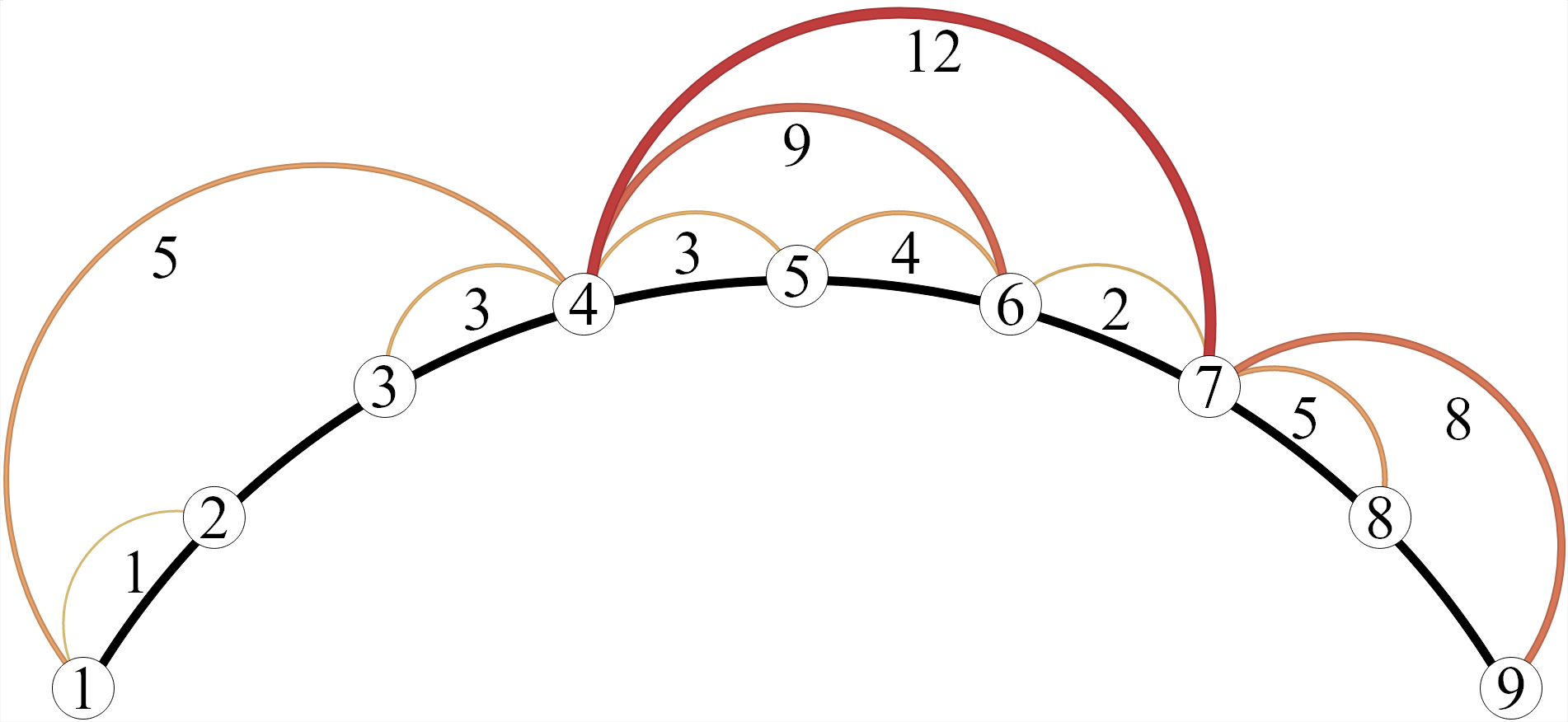}}
	\hfil
	\subfloat[]{\label{fig:minres-constrained-embedding}\includegraphics[width=0.45\columnwidth]{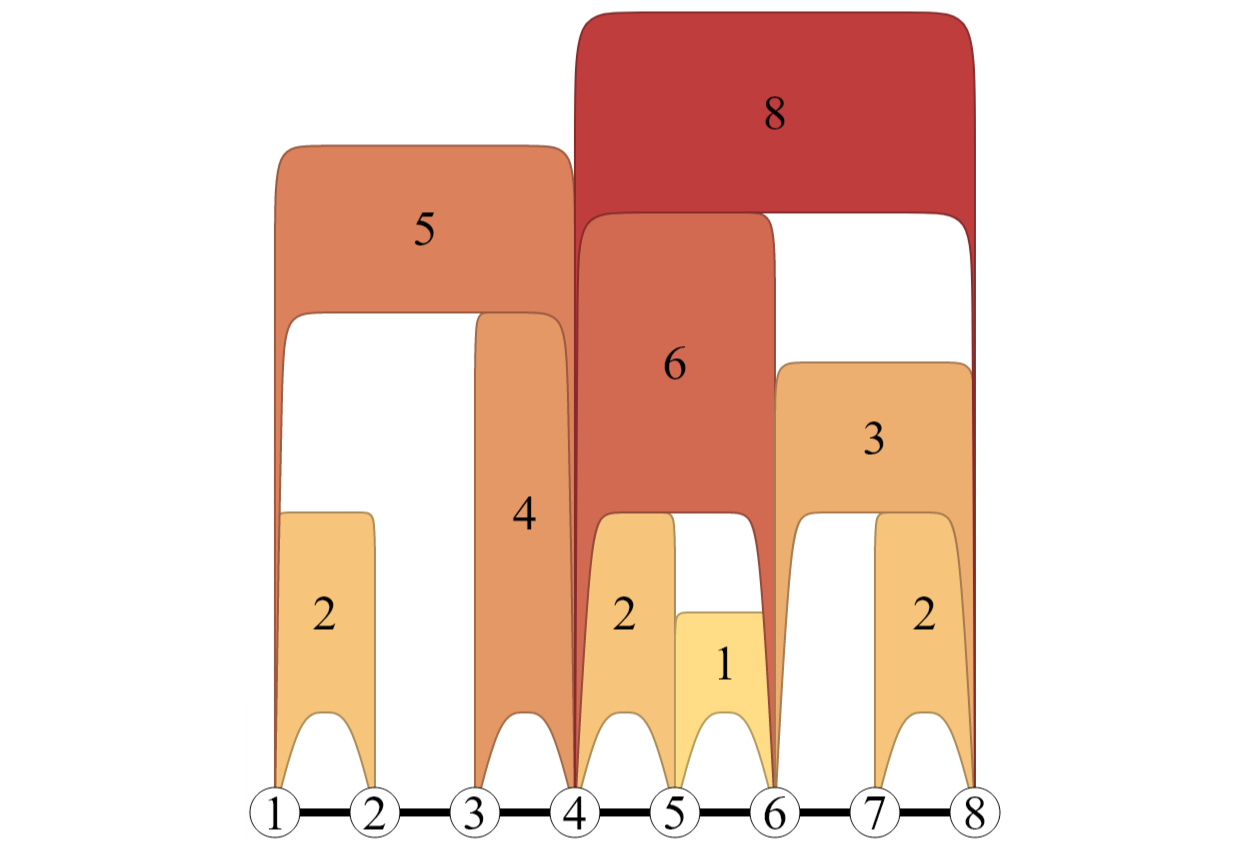}}
	\hfil
	\caption{Schematic representations of biconnected graphs. (a) A $1$-page {\sc sum}-constrained book-embedding. (b) A {\sc minres}-constrained two-dimensional book-embedding; for simplicity the vertices are aligned on a straight-line.}\label{fig:sum-minres-figure}
\end{figure}

Even if in a {\sc max}-constrained book-embedding an edge cannot wrap around an edge with a larger weight, we may still have that an edge $e$ that wraps around a sequence of edges $e_1,\dots,e_k$ with $\omega(e)<\sum_{i=1}^k \omega(e_i)$. This might cause the resulting visualization to not effectively convey the information related to the edge weights. Hence, we study a second type of one-dimensional representations that are more restrictive than {\sc max}-constrained book-embeddings and that allow us to better take into account the relationships between the weights of the edges.

A \emph{{\sc sum}-constrained book-embedding} of a weighted outerplanar graph $G=(V,E,\omega)$ is a $1$-page book-embedding $\mathcal{L}$ satisfying the following property. Let $e=(u,v)$ be any edge in $E$ with $u \prec_{\mathcal{L}} v$. Let $e_1=(u_1,v_1),\dots,e_k=(u_k,v_k)$ be any sequence of edges in $E$ such that $u \preceq_{\mathcal{L}} u_1 \prec_{\mathcal{L}} v_1 \preceq_{\mathcal{L}} \dots \preceq_{\mathcal{L}} u_k \prec_{\mathcal{L}} v_k \preceq_{\mathcal{L}} v$. Then $\omega(e)>\sum_{i=1}^k \omega(e_i)$. Observe that the {\sc max}-constrained book-embedding of Fig. \ref{fig:max-constrained-embedding} is not a {\sc sum}-constrained book-embedding, since it contains vertices $3$, $4$, $5$, and $7$ (in this order) and the sum of the weights of $(3,4)$ and $(5,7)$ is $14$, while the weight of $(3,7)$ is $12$. An example of {\sc sum}-constrained book-embedding is in Fig. \ref{fig:sum-constrained-embedding}.

The goal of this section is to prove the following theorem.

\begin{theorem}\label{th:linear-sum-outerplanar}
Let $G=(V,E,\omega)$ be an $n$-vertex weighted outerplanar graph. There exists an $O(n^3 \log n)$-time algorithm that tests whether $G$ admits a {\sc sum}-constrained book-embedding and, in the positive case, constructs such an embedding. 
\end{theorem}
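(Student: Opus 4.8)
The plan is to follow the same high-level route as for {\sc max}-constrained book-embeddings: first characterize the feasible embeddings by a purely local condition, then solve the biconnected case by an essentially forced embedding, and finally assemble the blocks along the block-cut-vertex tree $T$ by a dynamic program.

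First I would prove a \emph{local reduction}: a $1$-page book-embedding $\mathcal{L}$ is {\sc sum}-constrained if and only if, for every edge $e$, the weight $\omega(e)$ exceeds the sum of the weights of the edges \emph{directly} nested under $e$ (the maximal edges $f$ with $f \isnested e$ that together tile the span of $e$). The nontrivial direction is that this local condition implies the global one: given any sequence $e_1,\dots,e_k$ tiling the span of $e$, I would group the $e_i$ according to which direct child of $e$ they lie under and argue by induction on the nesting depth that the weight of each direct child $f$ dominates the total weight of the $e_i$ lying under $f$, so that $\sum_i \omega(e_i)\le\sum_f\omega(f)<\omega(e)$. Phrased on the extended dual tree $\mathcal{T}$ of Section~\ref{se:preliminaries}, rooted at the outer edge, this says exactly that at every internal face the bounding edge leading toward the root must be \emph{dominant}, i.e.\ heavier than the sum of the remaining bounding edges of that face.

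For a biconnected block the outerplane embedding is unique, so the only freedom is the choice of the outer edge (equivalently, where the outer cycle is cut) and a flip. The local reduction makes this choice almost forced: each internal face has at most one dominant edge, and in a feasible embedding the parent edge of every face must be that dominant edge. I would therefore compute, in time linear in the block size, the dominant edge of each face, check the consistency condition that each internal dual-tree edge is claimed as dominant by exactly one of its two incident faces, and read off the (unique, if any) outer edge; this decides the biconnected case and, as a by-product, fixes the embedding $\mathcal{L}(b)$ of each block $G(b)$.

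The real work, and the source of the higher running time, is the assembly along $T$. As in the {\sc max} case I would first establish an extreme-parent property — in any {\sc sum}-constrained book-embedding the parent cut-vertex $c$ of a block $b\neq b^*$ is the first or last vertex of $\mathcal L^+(b)$, so $G^+(b)$ lies entirely on one side of $c$ — by the same path-crossing argument, using that the global outer edge dominates $G^+(b)$. The hard part will be the \emph{arrangement subproblem} at each cut-vertex $c$: the pendant subtrees hanging from $c$ must be split between the two sides of $c$ and ordered there, and, crucially, whenever a subtree is stacked outside another on the same side, the innermost edge at $c$ in the outer subtree \emph{wraps around the entire inner subtree}, so its weight must exceed the total weight of everything it encloses. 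Unlike the {\sc max} case, where it suffices to beat the single heaviest enclosed edge and a simple sort by $W^+(\cdot)$ works, here the constraint couples the subtrees through their total weights, and deciding the split and the order becomes an optimization over cumulative sums. I would solve it by a dynamic program that processes each side of $c$ from the innermost subtree outward, keeping as state the accumulated weight on each side and carrying up, for each block $b$, the feasibility of $G^+(b)$ together with the total weight $\Omega^+(b)=\sum_{e\in E(G^+(b))}\omega(e)$ and the weight of the innermost edge at the parent cut-vertex. Running such an interval-style dynamic program — with $O(n^2)$ interval states, a sorting step (the $\log n$ factor) at each cut-vertex, and the possible need to try $O(n)$ choices of the global outer edge — is what I expect to yield the claimed $O(n^3\log n)$ bound; pinning down the exact transition so that the coupled sum constraints are checked optimally, rather than merely greedily, is the step I expect to be the main obstacle.
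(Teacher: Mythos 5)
Your skeleton is the paper's skeleton: a local characterization checkable on the extended dual tree, an essentially forced embedding in the biconnected case, the extreme-parent property via the path-crossing argument, and a bottom-up dynamic program over the block-cut-vertex tree with an arrangement subproblem at each cut-vertex. The local reduction is correct (the paper's biconnected test is exactly the check $\omega(e)>\sum_i\omega(e_i)$ at each node of the dual tree rooted at the maximum-weight edge), and your dominant-edge consistency check is an equivalent $O(n)$ route. But the arrangement subproblem is precisely where the proposal breaks, and you flag it yourself as unresolved. One concrete defect first: the quantity you propose to carry up is wrong. The constraint imposed by an enclosing edge concerns the disjoint sequence of \emph{maximal} edges beneath it, i.e.\ the \emph{total extension} of the enclosed subgraph (the sum of weights of edges not wrapped by any other edge), not the total weight $\Omega^+(b)$. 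Take $G^+(b)$ with edges $(c,w)$, $(w,v)$, $(c,v)$ of weights $5$, $4$, $10$: its total weight is $19$, but its footprint is $10$, so it fits under an enclosing edge of weight $11$; your state would reject this. Symmetrically, what a subtree offers on top is not the weight of the innermost edge at the parent cut-vertex but that weight \emph{minus} the extension already consumed beneath it (the paper's free space $\alpha$).

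The more fundamental gap is that one candidate embedding per subtree, summarized by a few numbers, cannot suffice, and ``accumulated weight'' is not a valid DP state for arbitrary positive real weights. For a C-node $c$ there is a genuine trade-off between the extension of an embedding of $G^+(c)$ to the left and to the right of $c$, and which point of this trade-off is usable depends on the two gaps $(v_{x-1},v_x)$ and $(v_x,v_{x+1})$ in the parent block and on the choices made for neighboring subtrees; so one must propagate an entire Pareto frontier (left extension strictly increasing, right extension strictly decreasing), and dually, for each B-node subtree, the frontier of pairs (total extension, free space). The missing idea that makes this polynomial is the paper's counting lemma: on such a frontier, the left extension is determined by which vertex of $G^+(c)$ is placed first (it equals a sum of edge weights along a cut-vertex path in the block-cut-vertex tree), and the total extension by the first and last vertices, so each frontier has at most $n$ points — this, not an a priori bound on ``interval states'', is what tames real-valued weights. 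With that in hand, the stacking order of subtrees at a cut-vertex is \emph{forced} (increasing maximum block weight $W(b_i)$, exactly as in the {\sc max} case), not an open optimization over cumulative sums as you suggest; the only residual freedom is which frontier point to use, resolved by sort-and-merge-and-polish at C-nodes ($O(hn^2\log n)$) and by a greedy rule at B-nodes (take the maximum left extension that fits, branching over $O(n)$ choices only when the first child cut-vertex coincides with $v_1$), giving $O(n^3\log n)$. Finally, your $O(n)$ guesses of the global outer edge are unnecessary: rooting $T$ at a block containing the globally maximum-weight edge suffices, since no edge can wrap that edge in any feasible embedding; edge-by-edge guessing is needed only in the paper's finite-resolution setting.
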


We first deal with biconnected outerplanar graphs. Note that Lemma~\ref{le:linear-max-outerplanar-biconnected-characterization} holds true also in the current setting, given that a {\sc sum}-constrained book-embedding is a {\sc max}-constrained book-embedding. We get the following lemma, whose proof follows almost verbatim the one of Lemma~\ref{le:linear-max-outerplanar-biconnected-algo}.

\begin{lemma}\label{le:linear-sum-outerplanar-biconnected-algo}
Let $G=(V,E,\omega)$ be an $n$-vertex biconnected weighted outerplanar graph. There exists an $O(n)$-time algorithm that tests whether $G$ admits a {\sc sum}-constrained book-embedding and, in the positive case, \mbox{constructs such an embedding.}
\end{lemma}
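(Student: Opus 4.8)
The plan is to mirror the argument used for Lemma~\ref{le:linear-max-outerplanar-biconnected-algo}, since by Lemma~\ref{le:linear-max-outerplanar-biconnected-characterization} (which the excerpt explicitly notes applies here as well) any {\sc sum}-constrained book-embedding of a biconnected $G$ must have the unique maximum-weight edge $e_M$ as the wrapping edge, with its endvertices first and last in $\mathcal{L}$, and the order $\mathcal{L}$ is unique up to a flip. Thus the skeleton is identical: first, determine in $O(n)$ time whether there is a unique maximum-weight edge $e_M$; if not, report failure. Then, using~\cite{d-iroga-07,m-laarogmog-79,w-rolt-87}, compute in $O(n)$ time the unique (up to a flip) $1$-page book-embedding $\mathcal{L}$ in which $e_M \wraps e$ for every other edge $e$. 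Since $\mathcal{L}$ is forced, it only remains to \emph{verify} that it satisfies the {\sc sum} constraint.

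\medskip

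The verification phase is where the {\sc sum} case differs from the {\sc max} case, and it is the step I expect to be the main obstacle. First I would build, in $O(n)$ time, the extended dual tree $\mathcal{T}$ of the outerplane embedding of $G$, rooted at the leaf $\rho$ whose incident edge is dual to $e_M$, exactly as in Lemma~\ref{le:linear-max-outerplanar-biconnected-algo}. The key structural observation is that, for an edge $e=(u,v)$ of $G$ dual to a tree edge $(\alpha,\beta)$ with $\alpha$ the parent of $\beta$, the maximal sequence of edges appearing immediately below $e$ (i.e.\ the edges $e_1,\dots,e_k$ with $u \preceq_{\mathcal{L}} u_1 \prec_{\mathcal{L}} v_1 \preceq_{\mathcal{L}} \dots \preceq_{\mathcal{L}} v_k \preceq_{\mathcal{L}} v$ that are not nested into one another) are precisely the edges dual to the tree edges $(\beta,\gamma_1),\dots,(\beta,\gamma_k)$ from $\beta$ to its children. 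The {\sc sum} condition $\omega(e) > \sum_{i=1}^{k}\omega(e_i)$ for an arbitrary admissible sequence need only be checked against this maximal ``immediate children'' sequence for every internal tree edge: any admissible sequence below $e$ is dominated, weight-sum-wise, by the maximal one, because each of its edges lies under some child edge and, by induction on the subtree, the weight of a child edge already exceeds the sum of everything strictly beneath it. Hence a single bottom-up traversal of $\mathcal{T}$ suffices, during which at each node $\beta$ I compare $\omega(e)$ against $\sum_{i=1}^{k}\omega(e_i)$, where $e$ is the parent tree edge's dual and $e_1,\dots,e_k$ are the child tree edges' duals.

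\medskip

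I would make the reduction to the immediate-children sum rigorous by the following inductive claim processed during the traversal: for each tree edge $e$, once all of its descendant tree edges have been confirmed to satisfy the {\sc sum} property, the sum of the weights of any admissible sequence of edges lying strictly under $e$ is at most $\sum_{i=1}^{k}\omega(e_i)$, the sum over the immediate children of $\beta$. The base case is a leaf child (no edges beneath it), and the inductive step replaces any edge in the sequence that is not itself a child of $\beta$ by the child edge wrapping it, whose weight already dominates the total weight of the sequence portion underneath it. Consequently, checking $\omega(e) > \sum_{i=1}^{k}\omega(e_i)$ at each node is both necessary and sufficient, and if any such check fails we report that $G$ admits no {\sc sum}-constrained book-embedding, while if all checks pass, $\mathcal{L}$ is a valid {\sc sum}-constrained book-embedding.

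\medskip

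Finally, the running time is $O(n)$: constructing $\mathcal{T}$, rooting it, and the single traversal each take linear time, and since $\mathcal{T}$ has $O(n)$ edges, the total work across all nodes in summing child weights and comparing is $O(n)$ (each tree edge contributes its weight to exactly one parent's sum). The only subtlety relative to the {\sc max} case is that we now accumulate sums rather than test individual comparisons, but this does not change the asymptotic cost. Hence the proof follows almost verbatim that of Lemma~\ref{le:linear-max-outerplanar-biconnected-algo}, with the per-node check $\omega(e)>\omega(e_i)$ replaced by the aggregate check $\omega(e)>\sum_{i=1}^{k}\omega(e_i)$, which establishes the claimed $O(n)$ bound.
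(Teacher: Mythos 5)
Your proof is correct and takes essentially the same approach as the paper's: determine the unique maximum-weight edge $e_M$, compute the unique (up to a flip) $1$-page book-embedding in which $e_M$ wraps around every other edge, and verify it via a bottom-up traversal of the extended dual tree rooted at the leaf dual to $e_M$, checking $\omega(e)>\sum_{i=1}^{k}\omega(e_i)$ against the duals of the child tree edges at each internal node. Your explicit inductive argument that any admissible sequence under $e$ is weight-dominated by the immediate-children sequence is a point the paper leaves implicit, but it matches the intended reasoning.
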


\begin{proof}
First, we determine in $O(n)$ time whether $G$ has a unique edge $e_M$ with maximum weight; if not, by Lemma~\ref{le:linear-max-outerplanar-biconnected-characterization} we can conclude that $G$ admits no {\sc max}-constrained book-embedding. By~\cite{d-iroga-07,m-laarogmog-79,w-rolt-87}, we can determine in $O(n)$ time the unique, up to a flip, $1$-page book-embedding $\mathcal{L}$ such that $e_M \wraps e$ for each edge $e \in E$ with $e \neq e_M$. 

It remains to test whether $\prec_\mathcal{L}$ meets the requirements of a {\sc max}-constrained book-embedding. We construct in $O(n)$ time the extended dual tree $\mathcal T$ of the outerplane embedding of $G$. We root $\mathcal T$ at the leaf $\rho$ such that the edge of $\mathcal T$ incident to $\rho$ is dual to $e_M$. We visit $\mathcal T$ and perform the following checks in total $O(n)$ time. Consider an edge $(\alpha,\beta)$ of $\mathcal T$ such that $\alpha$ is the parent of $\beta$ and let $e$ be the edge of $G$ dual to $(\alpha,\beta)$. Consider the edges $(\beta,\gamma_1),\dots,(\beta,\gamma_k)$ of $\mathcal T$ from $\beta$ to its children and let $e_1,\dots,e_k$ be the edges of $G$ dual to $(\beta,\gamma_1),\dots,(\beta,\gamma_k)$, respectively. For $i=1,\dots,k$, we check whether $\omega(e)>\sum_{i=1}^k\omega(e_i)$. If one of these checks fails, we conclude that $G$ admits no {\sc max}-constrained book-embedding, otherwise $\mathcal{L}$ is a {\sc max}-constrained book-embedding of $G$.
\end{proof}

We now deal with a not necessarily biconnected $n$-vertex outerplanar graph~$G$. As for {\sc max}-constrained book-embeddings, we can assume that $G$ is connected. We present an algorithm, called {\sc sum-be-drawer}, that tests in $O(n^3 \log n)$ time whether $G$ admits a {\sc sum}-constrained book-embedding and, in the positive case, constructs such an embedding. 

First, we compute in $O(n)$ time the block-cut-vertex tree $T$ of $G$~\cite{h-gt-69,ht-aeagm-73}. We root $T$ at any B-node $b^*$ containing an edge with maximum weight. Then, for a B-node $b$, the graph $G^+(b)$ is defined as for {\sc max}-constrained book-embeddings; further, for a C-node $c$ of $T$, we denote by $G^+(c)$ the subgraph of $G$ consisting of all the blocks $G(b')$ such that $b'$ is a B-node in the subtree of $T$ rooted at $c$. We equip each B-node $b$ with the maximum weight $W(b)$ of any edge of $G(b)$.

We visit (in arbitrary order) $T$. For each B-node $b$, the algorithm {\sc sum-be-drawer} performs the following checks and computations. 

\begin{enumerate}
\item We check whether $G(b)$ admits a  {\sc sum}-constrained book-embedding;  this can be done in a time that is linear in the number of vertices of $G(b)$, by Lemma~\ref{le:linear-sum-outerplanar-biconnected-algo}. If not, we conclude that $G$ admits no  {\sc sum}-constrained book-embedding (Failure Condition $1$). If yes, we compute a  {\sc sum}-constrained book-embedding (again by Lemma~\ref{le:linear-sum-outerplanar-biconnected-algo}) and call it $\mathcal{L}(b)$. 
\item If $b \neq b^*$, consider the C-node $c$ that is the parent of $b$ in $T$. We check in constant time whether $c$ is the first or the last vertex of $\mathcal{L}(b)$. If not, we conclude that $G$ admits no  {\sc sum}-constrained book-embedding (Failure Condition $2$). Otherwise, we possibly flip in constant time $\mathcal{L}(b)$ so that $c$ is the first vertex of~$\mathcal{L}(b)$.
\end{enumerate}

\begin{figure}[htb]
	\centering
	\subfloat[$\tau_{\mathcal  L}=21$, $\alpha_{\mathcal  L}=1$, $\lambda_{\mathcal  L}(4)=9$, $\rho_{\mathcal  L}(4)=12$]{\includegraphics[page=1,width=0.48\columnwidth]{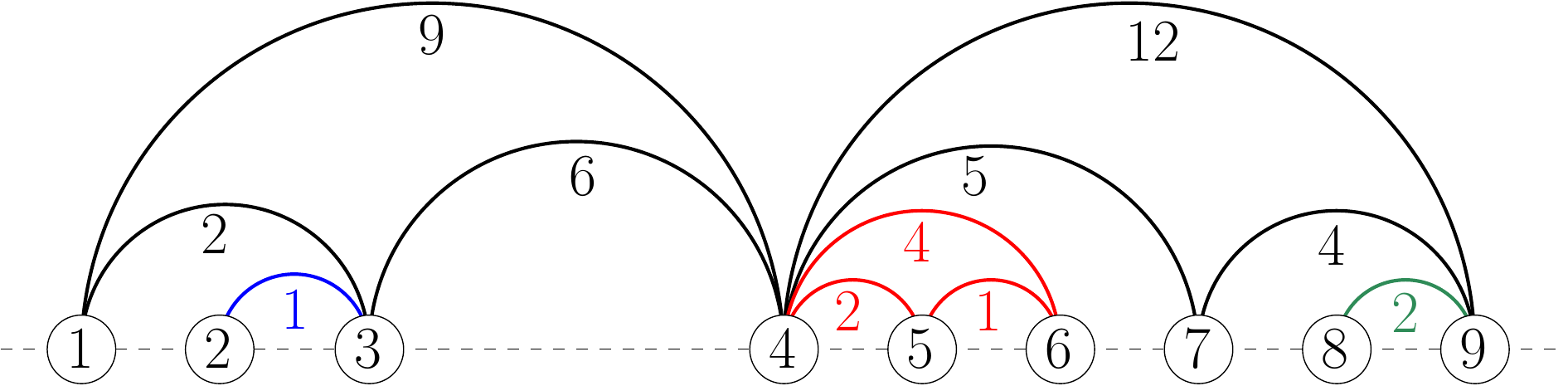}
	\label{fig:max1}}
	\hfil
	\subfloat[$\tau_{\mathcal  L}=21$, $\alpha_{\mathcal  L}=2$, $\lambda_{\mathcal  L}(4)=9$, $\rho_{\mathcal  L}(4)=12$]{\includegraphics[page=2,width=0.48\columnwidth]{figures/esempi-max.pdf}
	\label{fig:max2}}
	\hfil
	\subfloat[$\tau_{\mathcal  L}=21$, $\alpha_{\mathcal  L}=2$, $\lambda_{\mathcal  L}(4)=9$, $\rho_{\mathcal  L}(4)=12$]{\includegraphics[page=3,width=0.48\columnwidth]{figures/esempi-max.pdf}
	\label{fig:max3}}
	\hfil
	\subfloat[$\tau_{\mathcal  L}=23$, $\alpha_{\mathcal  L}=1$, $\lambda_{\mathcal  L}(4)=9$, $\rho_{\mathcal  L}(4)=14$]{\includegraphics[page=4,width=0.48\columnwidth]{figures/esempi-max.pdf}
	\label{fig:max4}}
	\hfil
	\caption{(a) and (b) are left-right equivalent w.r.t. $4$; (c) left-right dominates (d) w.r.t. $4$; (b) and (c) are up-down equivalent; (b) up-down dominates (a).}\label{fig:esempi-max}
\end{figure}

We introduce some definitions (refer to Fig.~\ref{fig:esempi-max}).
Let $\mathcal{L}$ be a $1$-page book-embedding of $G$. We say that a vertex $c$ is \emph{visible} if there exists no edge $e$ of $G$ such that $c$ is strictly under $e$ in $\mathcal L$; for example, the vertices $1$, $4$, and $9$ in Fig.~\ref{fig:max1} are visible. 

The \emph{total extension} $\tau_{\mathcal L}$ of $\mathcal{L}$ is the sum of the weights of all the edges $e$ that satisfy the following property: there is no edge $e'$ such that $e'\wraps e$ in $\mathcal{L}$.

Let $c$ be a visible vertex of $\mathcal L$. Then the \emph{extension of $\mathcal{L}$ to the left of $c$} is the sum of the weights of all the edges $e$ that satisfy the following properties: (i) there is no edge $e'$ such that $e'\wraps e$ in $\mathcal{L}$; and (ii) for each end-vertex $v$ of $e$, we have $v\preceq_{\mathcal{L}} c$. The \emph{extension of $\mathcal{L}$ to the right of $c$} is defined analogously. The extensions of $\mathcal{L}$ to the left and to the right of $c$ are denoted by  $\lambda_{\mathcal{L}}(c)$ and $\rho_{\mathcal{L}}(c)$, respectively. 

Let $u$ be the first vertex of $\mathcal L$. The \emph{free space $\alpha_{\mathcal L}$ of $\mathcal{L}$} is the weight of the lowest-right edge $(u,v)$ of $u$ in $\mathcal{L}$ minus the extension of the subgraph of $G$ induced by $v$ and by the vertices that are strictly under $(u,v)$. 

Now, let $\mathcal{L}$ and $\mathcal{L}'$ be two $1$-page book-embeddings of $G$ and let $c$ be a vertex of $G$ that is visible both in $\mathcal{L}$ and in $\mathcal{L}'$. 
We say that $\mathcal{L}$ and $\mathcal{L}'$ are \emph{left-right equivalent} with respect to $c$ if $\lambda_{\mathcal L}(c) = \lambda_{\mathcal L'}(c)$ and $\rho_{\mathcal L}(c)= \rho_{\mathcal L'}(c)$. We also say that $\mathcal{L}$ \emph{left-right dominates} $\mathcal{L}'$ with respect to $c$ if $\lambda_{\mathcal L}(c) \leq \lambda_{\mathcal L'}(c)$, $\rho_{\mathcal L}(c)\leq \rho_{\mathcal L'}(c)$, and at least one of the two inequalities is strict. 

If the first vertex of $\mathcal{L}$ is the same as the first vertex of $\mathcal{L}'$, we say that $\mathcal{L}$ is \emph{up-down equivalent} to $\mathcal{L}'$ if $\tau_{\mathcal L} = \tau_{\mathcal L'}$ and $\alpha_{\mathcal L} = \alpha_{\mathcal L'}$. Further, we say that $\mathcal{L}$ \emph{up-down dominates} $\mathcal{L}'$ if $\tau_{\mathcal L} \leq \tau_{\mathcal L'}$, $\alpha_{\mathcal L} \geq \alpha_{\mathcal L'}$, and at least one of the two inequalities is strict.

The algorithm {\sc sum-be-drawer} now performs a bottom-up visit of $T$. 

After visiting each C-node $c$, the algorithm {\sc sum-be-drawer} either concludes that $G$ admits no {\sc sum}-constrained book-embedding or determines a sequence of {\sc sum}-constrained book-embeddings $\mathcal{L}^+_1(c), \dots, \mathcal{L}^+_{k}(c)$ of $G^+(c)$ such~that: 
\begin{enumerate}[(C1)]
    \item\label{gio} for any $i=1,\dots,k$, we have that $c$ is visible in $\mathcal{L}^+_i(c)$; 
    \item $\lambda_{\mathcal L^+_1(c)}(c)<\dots<\lambda_{\mathcal L^+_k(c)}(c)$ and $\rho_{\mathcal L^+_1(c)}(c)>\dots>\rho_{\mathcal L^+_k(c)}(c)$; and
    \item for every {\sc sum}-constrained book-embedding $\mathcal L$ of $G^+(c)$ that respects (C1), there exists an index $i\in \{1,\dots,k\}$ such that $\mathcal{L}^+_i(c)$ left-right dominates or is left-right equivalent to $\mathcal L$ with respect to $c$.
\end{enumerate}

Note that no {\sc sum}-constrained book-embedding $\mathcal{L}^+_i(c)$ left-right dominates or is left-right equivalent to a distinct embedding  $\mathcal{L}^+_j(c)$ with respect to $c$, by Property (C2).

After visiting a B-node $b \neq b^*$, the algorithm {\sc sum-be-drawer} either concludes that $G$ admits no {\sc sum}-constrained book-embedding or determines a sequence of {\sc sum}-constrained book-embeddings $\mathcal{L}^+_1(b), \dots, \mathcal{L}^+_k(b)$ \mbox{of $G^+(b)$ such that:}
\begin{enumerate}[(B1)]
    \item the parent $c$ of $b$ in $T$ is the first vertex of $\mathcal{L}_i^+(b)$, for $i=1,\dots,k$;
    \item $\alpha_{\mathcal L^+_1(b)}<\dots<\alpha_{\mathcal L^+_k(b)}$ and $\tau_{\mathcal L^+_1(b)}<\dots<\tau_{\mathcal L^+_k(b)}$; and
    \item for every {\sc sum}-constrained book-embedding $\mathcal L$ of $G^+(b)$ that respects (B1), there exists an index $i\in \{1,\dots,k\}$ such that $\mathcal{L}^+_i(b)$ up-down dominates or is up-down equivalent to $\mathcal L$.
\end{enumerate}

Note that no {\sc sum}-constrained book-embeddings $\mathcal{L}^+_i(b)$ up-down dominates or is up-down equivalent to  a distinct embedding $\mathcal{L}^+_j(b)$, by Property (B2).

Restricting the attention to embeddings satisfying Condition~(C1) or Condition~(B1) is not a loss of generality, because of the following two lemmata.

\begin{lemma} \label{le:necessity-c1}
Suppose that $G$ admits a {\sc sum}-constrained book-embedding $\mathcal L$. Let $c$ be a C-node of $T$ and let $\mathcal L^+(c)$ be the restriction of $\mathcal L$ to the vertices and edges of $G^+(c)$. Then $c$ is visible in $\mathcal L^+(c)$.
\end{lemma}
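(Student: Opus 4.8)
The plan is to argue by contradiction, assuming that $c$ is \emph{not} visible in $\mathcal{L}^+(c)$. By the definition of visibility, there would then exist an edge $e=(x,y)$ of $G^+(c)$ with $x \prec_{\mathcal{L}} c \prec_{\mathcal{L}} y$; note that $\mathcal{L}^+(c)$ is the restriction of $\mathcal{L}$, so the positions of the vertices of $G^+(c)$ agree in the two orders, and I may read all positions inside the full embedding $\mathcal{L}$. The goal is to exploit the part of $G$ that hangs \emph{above} $c$ in $T$ to produce either an edge crossing in $\mathcal{L}$ or a violation of the weight constraint, both of which are impossible.

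First I would isolate the relevant subgraph. Let $H$ be the union of all blocks $G(b')$ such that $b'$ is \emph{not} in the subtree of $T$ rooted at $c$. Using standard properties of the block-cut-vertex tree, I would verify that: (i) $H$ is connected; (ii) $H$ contains the root block $G(b^*)$ and hence the maximum-weight edge $e_M$; and (iii) $H$ and $G^+(c)$ share exactly the vertex $c$, so in particular neither $x$ nor $y$ belongs to $H$, since $x,y \in V(G^+(c)) \setminus \{c\}$.

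Then I would split into two cases according to the placement of the vertices of $H$ relative to the interval delimited by $x$ and $y$ in $\mathcal{L}$. In the first case, some vertex $z$ of $H$ lies outside $[x,y]$; since $H$ is connected and contains $c$ (which lies strictly between $x$ and $y$), there is a path in $H$ from $c$ to $z$, and by (iii) this path avoids both $x$ and $y$. Traversing it from the inside to the outside of the interval, I would locate a path edge with one endpoint strictly between $x$ and $y$ and the other outside $[x,y]$; such an edge crosses $e$ in $\mathcal{L}$, contradicting that $\mathcal{L}$ is a $1$-page book-embedding. In the second case, every vertex of $H$ lies strictly between $x$ and $y$; then both endvertices of $e_M$ lie in the open interval bounded by $x$ and $y$, so $e$ wraps around $e_M$, i.e.\ $e \wraps e_M$. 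Since a {\sc sum}-constrained book-embedding is in particular {\sc max}-constrained, this forces $\omega(e) > \omega(e_M)$, contradicting the maximality of $\omega(e_M)$.

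The main obstacle I anticipate is the second case: the crossing argument of the first case is essentially topological and robust, whereas ruling out the configuration in which all of $H$ is nested below $e$ genuinely requires invoking both the presence of $e_M$ inside $H$ and the weight constraint of the embedding. This is precisely where the hypothesis that $\mathcal{L}$ is {\sc sum}- (equivalently {\sc max}-) constrained, rather than an arbitrary $1$-page book-embedding, is used. Establishing the three structural properties of $H$ in (i)--(iii) from the block-cut-vertex tree is routine, but it must be stated carefully, since the entire argument hinges on $x,y \notin V(H)$.
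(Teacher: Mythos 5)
Your proof is correct and follows essentially the same route as the paper's: both argue by contradiction that the edge covering $c$ must either be crossed by a path connecting $c$ to $e_M$ through the blocks outside the subtree of $T$ rooted at $c$ (your Case~1, the paper's path $p$), or must wrap around $e_M$, contradicting the maximality of $\omega(e_M)$ via the {\sc max}-constraint (your Case~2, the paper's observation that $e_M$ is not nested into the covering edge). The only cosmetic difference is that you work with the full union $H$ of blocks outside the subtree of $c$, whereas the paper uses only the blocks along the tree path from $c$ to $b^*$.
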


\begin{proof}
This proof is very similar to the one of the \emph{extreme-parent property} in Lemma~\ref{le:failure-linear-max-outeplanar}. 

Suppose, for a contradiction, that $c$ is not visible in $\mathcal{L}^+(c)$; that is, there exists an edge $e'_M$ of $G^+(c)$ whose end-vertices are one before and one after $c$ in $\mathcal{L}^+(c)$. Consider the path $P$ in $T$ from $c$ to $b^*$. Further, consider any path $p$ in $G$ whose vertices and edges belong to the blocks corresponding to B-nodes in $P$ and whose end-vertices are $c$ and one of the end-vertices of $e_M$ different from $c$ (recall that $e_M$ is an edge of $G$ with maximum weight and belongs to $G(b^*)$). Since $b^*$ is the root of $T$, we have that neither $p$ nor $e_M$ contains any vertex of $G^+(c)$ except, possibly, for $c$; in particular, neither $p$ nor $e_M$ contains either of the end-vertices of $e'_M$. Since $\omega(e_M) \geq \omega(e'_M)$, we have that $e_M$ is not nested into $e'_M$ in $\mathcal{L}^+(b)$. Hence, we have that $p$ crosses $e'_M$, a contradiction.
\end{proof}

\begin{lemma} \label{le:necessity-b1}
Suppose that $G$ admits a {\sc sum}-constrained book-embedding $\mathcal L$. Let $b\neq b^*$ be a B-node of $T$ and let $\mathcal L^+(b)$ be the restriction of $\mathcal L$ to the vertices and edges of $G^+(b)$. Then the parent $c$ of $b$ in $T$ is either the first or the last vertex of $\mathcal L^+(b)$.
\end{lemma}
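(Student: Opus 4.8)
The plan is to reduce this statement to the \emph{extreme-parent property} already established inside the proof of Lemma~\ref{le:failure-linear-max-outeplanar}. The crucial observation is that every {\sc sum}-constrained book-embedding is in particular a {\sc max}-constrained book-embedding: taking $k=1$ in the defining inequality of a {\sc sum}-constrained book-embedding yields $\omega(e)>\omega(e_1)$ whenever $e$ wraps around a single edge $e_1$, which is exactly the {\sc max} constraint. Moreover, the block-cut-vertex tree $T$ is rooted in the same way in both sections, namely at a B-node $b^*$ containing an edge $e_M$ of maximum weight. Hence the hypotheses of the extreme-parent property are met by the given $\mathcal L$, $b$, and $c$, and the conclusion---that $c$ is the first or the last vertex of $\mathcal L^+(b)$---follows immediately.

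If instead one prefers a self-contained argument (as was done for the companion Lemma~\ref{le:necessity-c1}), I would argue by contradiction. Suppose $c$ is neither the first nor the last vertex of $\mathcal L^+(b)$. Since $c$ belongs to exactly one block of $G^+(b)$, namely $G(b)$, and $G^+(b)$ is connected, there must be an edge $e'_M$ of $G^+(b)$ with one end-vertex preceding $c$ and the other following $c$ in $\mathcal L^+(b)$. I would then consider the path $P$ in $T$ from $c$ to $b^*$ and pick any path $p$ in $G$ whose vertices and edges lie in the blocks corresponding to B-nodes of $P$ and whose ends are $c$ and an end-vertex of $e_M$ distinct from $c$. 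Because $b^*$ is the root and $c$ is the parent cut-vertex of $b$, neither $p$ nor $e_M$ meets $G^+(b)$ except possibly at $c$; in particular neither touches the straddling end-vertices of $e'_M$.

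The final step of this self-contained argument uses $\omega(e_M)\ge\omega(e'_M)$ together with the {\sc max} constraint (inherited from the {\sc sum} constraint) to conclude that $e_M$ cannot be nested into $e'_M$. Since one end-vertex of $e'_M$ lies before $c$ and the other after, and $p$ connects $c$ to the far side of $e_M$ without entering the interior of $G^+(b)$, the arc realizing $p$ is forced to cross $e'_M$, contradicting the fact that $\mathcal L$ is a crossing-free $1$-page book-embedding. I expect no genuine obstacle here: the only point requiring care is checking that the implication ``{\sc sum} $\Rightarrow$ {\sc max}'' together with the identical rooting of $T$ make the earlier extreme-parent argument apply verbatim, so the cleanest route is simply to invoke that property rather than to reprove it.
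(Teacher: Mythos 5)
Your proposal is correct and matches the paper's own proof exactly: the paper likewise observes that a {\sc sum}-constrained book-embedding is in particular a {\sc max}-constrained book-embedding and then invokes the extreme-parent property established in the proof of Lemma~\ref{le:failure-linear-max-outeplanar}, whose argument applies verbatim since $T$ is rooted at a B-node containing a maximum-weight edge in both sections. Your optional self-contained argument also faithfully reproduces that extreme-parent proof (in the same spirit as the paper's proof of Lemma~\ref{le:necessity-c1}), so there is nothing to correct.
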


\begin{proof}
The lemma asserts that $\mathcal L^+(b)$ satisfies the \emph{extreme-parent property}; this property, which was stated in the context of {\sc max}-constrained book-embeddings, was shown to be satisfied in the proof of Lemma~\ref{le:failure-linear-max-outeplanar}. Since the {\sc sum}-constrained book-embedding $\mathcal L^+(b)$ is also a {\sc max}-constrained book-embedding, that proof can be followed verbatim to prove the statement of the lemma. 
\end{proof}

We are also going to use the following two lemmata, which bound the number of distinct {\sc sum}-constrained book-embeddings we construct during the visit of $T$.

\begin{lemma} \label{le:number-of-orderings-C}
Let $H=(V_H,E_H,\omega_H)$ be an $n$-vertex weighted outerplanar graph. For a vertex $c$ of $H$, let $\mathcal S$ be a set of {\sc sum}-constrained book-embeddings of $H$ such that:

\begin{enumerate}[$(\gamma 1)$]
    \item for each $\mathcal L \in \mathcal S$, we have that $c$ is visible in  $\mathcal{L}$; and
    \item for any $\mathcal L,\mathcal L' \in \mathcal S$, we have that $\mathcal L$ does not left-right dominate and is not left-right equivalent to $\mathcal L'$ with respect to $c$.
\end{enumerate}
Then  $\mathcal S$ contains $O(n)$ embeddings.
\end{lemma}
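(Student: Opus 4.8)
The plan is to reduce the statement to a counting bound on the distinct values that $\lambda_{\mathcal L}(c)$ can attain, and then to prove that bound by a structural induction whose crucial feature is that the relevant quantities \emph{add up} rather than multiply. First I would exploit the antichain hypothesis $(\gamma 2)$ to observe that the map $\mathcal L \mapsto \lambda_{\mathcal L}(c)$ is injective on $\mathcal S$: if two embeddings $\mathcal L,\mathcal L'\in\mathcal S$ had $\lambda_{\mathcal L}(c)=\lambda_{\mathcal L'}(c)$, then either $\rho_{\mathcal L}(c)=\rho_{\mathcal L'}(c)$, so that $\mathcal L$ and $\mathcal L'$ are left-right equivalent, or one of them has strictly smaller $\rho$, so that it left-right dominates the other; both possibilities contradict $(\gamma 2)$. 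Hence $|\mathcal S|$ is at most the number of distinct values that $\lambda_{\cdot}(c)$ takes over all sum-constrained book-embeddings of $H$ in which $c$ is visible, and it suffices to bound this number by $O(n)$.

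Next I would isolate the \emph{escape structure} at $c$. Since $c$ is visible, no edge has $c$ strictly under it; therefore each connected component of $H-c$ lies entirely to the left of $c$ or entirely to its right, because a component with vertices on both sides would contain a path avoiding $c$, hence an edge $(u,w)$ with $u\prec_{\mathcal L} c\prec_{\mathcal L} w$, i.e.\ with $c$ strictly under it. Let $x$ be the leftmost neighbour of $c$; as $c$ is visible, the edge $(x,c)$ is top-level and it wraps every vertex in the open interval between $x$ and $c$. A no-crossing argument then shows that every edge lying to the left of $c$ and not belonging to the branch $B^{\ast}$, namely the subgraph induced by $c$ together with the component of $H-c$ containing $x$, is nested under $(x,c)$, hence is not top-level. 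Consequently $\lambda_{\mathcal L}(c)$ equals the total extension $\tau$ of the restriction of $\mathcal L$ to $B^{\ast}$, an embedding in which $c$ is the rightmost vertex. Writing $B_1,\dots,B_p$ for the branches of $H$ at $c$ (one per component of $H-c$), the set of attainable values of $\lambda_{\cdot}(c)$ is thus contained in $\{0\}\cup\bigcup_{j}T(B_j)$, where $T(B_j)$ denotes the set of attainable values of $\tau$ over sum-constrained embeddings of $B_j$ with $c$ rightmost.

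It then remains to prove the auxiliary claim that, for a connected weighted outerplanar graph $K$ with a prescribed rightmost vertex $r$, the number $|T(K)|$ of distinct attainable values of $\tau$ is $O(n_K)$, where $n_K=|V(K)|$; summing $|T(B_j)|=O(n_{B_j})$ over the branches and using $\sum_j n_{B_j}\le 2n$ then finishes the proof. I would establish the claim by induction on $n_K$. Because $r$ is rightmost it is visible, so the same escape argument applies: the outermost block $Q$ of $K$ at $r$ carries a single wrapping edge $(p_1,r)$, where $p_1$ is its leftmost vertex, and the only part of $K$ that can contribute a top-level edge besides $(p_1,r)$ is the sub-branch $X$ hanging at $p_1$ and placed to the left of $p_1$; everything else is buried under $(p_1,r)$. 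Hence $\tau=\omega(p_1,r)+\theta$ with $\theta\in\{0\}\cup T(X)$ (and $\tau=0$ only when $r$ is isolated). Ranging over the choice of which block of $K$ is outermost at $r$ gives $|T(K)|\le\sum_{Q\ni r}\bigl(1+|T(X_Q)|\bigr)$; the graphs $X_Q$ are vertex-disjoint and each satisfies $n_{X_Q}<n_K$, so the inductive hypothesis yields $|T(K)|\le (\#\text{blocks at }r)+\sum_Q n_{X_Q}=O(n_K)$.

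The main obstacle is precisely this additivity. A naive accounting would let the sub-branches hanging at the different cut-vertices of a block be placed independently inside or outside, producing a \emph{product} of possibilities and an exponential bound. The heart of the argument is the structural observation that the single wrapping edge $(p_1,r)$ (respectively $(x,c)$) leaves only one escape route to the top level, so that all but one sub-branch are forced to be nested and contribute nothing to the extension; this collapses the product into a sum and is what ultimately delivers the linear bound. Making the no-crossing/laminarity argument that justifies the single escape route fully rigorous, and checking the degenerate cases (isolated $r$, $Q$ a single edge, several blocks sharing $r$), is where most of the care is needed.
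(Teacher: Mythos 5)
Your opening step --- using $(\gamma 2)$ to show that $\mathcal L\mapsto\lambda_{\mathcal L}(c)$ is injective on $\mathcal S$ --- is exactly the paper's first claim. For the remaining counting step the paper takes a more direct route than yours: it proves that in any {\sc sum}-constrained embedding in which $c$ is visible, $\lambda_{\mathcal L}(c)$ is completely determined by the \emph{first vertex} $\ell$ of $\mathcal L$, namely it equals $\sum_{j}\omega_H((c_{j-1},c_j))$ along the block-cut-vertex-tree path from $\ell$ to $c$, where the edges $(c_{j-1},c_j)$ exist and are top-level by Lemma~\ref{le:linear-max-outerplanar-biconnected-characterization}; hence at most $n$ values are attainable. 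Your ``escape structure'' and chain of wrapping edges $(p_1,r)$, fully unwound, is the same mechanism, so your overall plan is viable --- but the inductive step as written contains a genuine error.

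The flawed step is ``$\tau=\omega(p_1,r)+\theta$ with $\theta\in\{0\}\cup T(X)$'', where $X$ is the \emph{whole} union of sub-branches hanging at $p_1$ and $T(X)$ ranges over embeddings of all of $X$ with $p_1$ rightmost. An embedding of $K$ may place only a proper subset of the components of $X$ to the left of $p_1$ (tucking the rest under $(p_1,r)$), and the resulting $\theta$ need not belong to $T(X)$, which may even be empty. Concretely, let $X$ consist of three pendant edges at $p_1$ of weights $5$, $5$, $7$, and let $\omega(p_1,r)=100$. Then $T(X)=\emptyset$, since two weight-$5$ edges sharing the endpoint $p_1$ are forced to nest into one another, violating the sum constraint; your recursion would give $|T(K)|\le 1+|T(X)|=1$. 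Yet $K$ admits embeddings with $r$ rightmost of total extension $105$ (one weight-$5$ leaf left of $p_1$, the other two tucked under $(p_1,r)$) and $107$ (the weight-$7$ leaf left of $p_1$, wrapping one weight-$5$ leaf), so $|T(K)|=2$ and the claimed cardinality bound is false. Note also that several branches at $p_1$ can simultaneously lie to its left, nested inside one another, so ``the sub-branch placed to the left'' is not a single well-defined object. The repair is your own top-level argument applied once more at $p_1$: among the components placed left of $p_1$, only the branch containing the leftmost neighbour of $p_1$ contributes top-level edges, and the restriction of $\mathcal L$ to that \emph{single} branch plus $p_1$ is a full embedding of it with $p_1$ rightmost; hence $\theta\in\{0\}\cup\bigcup_{C}T(C\cup\{p_1\})$, the union taken over the individual components $C$ hanging at $p_1$. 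With this per-branch recursion the subgraphs in the recursion remain vertex-disjoint and the induction again yields the $O(n)$ bound; indeed, unwinding it shows that each attainable value is the weight-sum along the unique block-cut-tree path from the first vertex of $\mathcal L$ to $c$, which is precisely the paper's shorter argument.
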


\begin{proof}
The proof is based on the following two claims.

First, for any value $\lambda\geq 0$, there exists at most one {\sc sum}-constrained book-embedding $\mathcal L \in \mathcal S$ whose extension $\lambda_{\mathcal L}(c)$ to the left of $c$ is equal to $\lambda$. Indeed, suppose, for a contradiction, that $\mathcal S$ contains two {\sc sum}-constrained book-embeddings $\mathcal L$ and $\mathcal L'$ in which $c$ is visible with  $\lambda_{\mathcal L}(c)=\lambda_{\mathcal L'}(c)=\lambda$. If $\tau_{\mathcal L}(c)<\tau_{\mathcal L'}(c)$, or $\tau_{\mathcal L}(c)=\tau_{\mathcal L'}(c)$, or $\tau_{\mathcal L}(c)>\tau_{\mathcal L'}(c)$, we have that $\mathcal L$ left-right dominates $\mathcal L'$, or that $\mathcal L$ is left-right equivalent to $\mathcal L'$, or that $\mathcal L'$ left-right dominates $\mathcal L$ with respect to $c$, respectively; in all the cases, this contradicts Property~$(\gamma2)$. It follows that the number of embeddings $\mathcal L$ in $\mathcal S$ is at most equal to the number of distinct values $\lambda\geq 0$ such that $H$ admits a {\sc sum}-constrained book-embedding $\mathcal L$ in which $c$ is visible and $\lambda_{\mathcal L}(c)=\lambda$.

\begin{figure}[htb]
	\centering
	\includegraphics[scale=0.5]{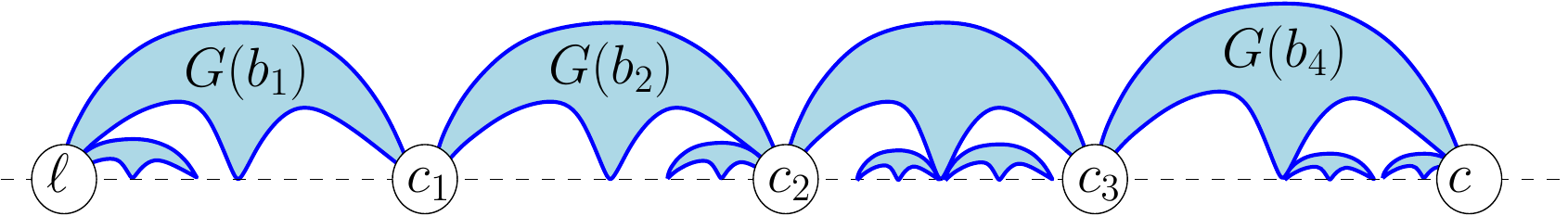}
    \caption{Illustration for the proof of Lemma~\ref{le:number-of-orderings-C}. In this {\sc sum}-constrained book-embedding, only $c$ and the vertices to the left of $c$ are shown.}\label{fig:number-embeddings}
\end{figure}
Second, for every vertex $\ell$ of $H$, all the {\sc sum}-constrained book-embeddings in which $c$ is visible and $\ell$ is the first vertex have the same extension to the left of $c$. This claim, together with the previous one, implies that the number of embeddings in $\mathcal S$ is at most $n$.  We now prove the claim; refer to Fig.~\ref{fig:number-embeddings}. Consider any vertex $\ell$ of $H$. If there is no {\sc sum}-constrained book-embedding of $H$ in which $c$ is visible and $\ell$ is the first vertex, then the claim is vacuously true. Otherwise, let $\mathcal L$ be any {\sc sum}-constrained book-embedding of $H$ in which $c$ is visible and $\ell$ is the first vertex. If $\ell=c$, then obviously we have $\lambda_{\mathcal L}(c)=0$ and there is nothing to prove. Assume hence that $\ell\neq c$. Let $T_H$ be the block-cut-vertex tree of $H$ and let $(b_1,c_1,b_2,c_2,\dots,b_{k-1},c_{k-1},b_k)$ be the shortest path in $T_H$ such that $b_1,b_2,\dots,b_k$ are B-nodes, $c_1,c_2,\dots,c_{k-1}$ are C-nodes, $\ell$ belongs to $G(b_1)$, and $c$ belongs to $G(b_k)$. For sake of simplicity, let $c_0:=\ell$ and $c_k:=c$. Since $\ell$ and $c$ are visible in $\mathcal L$, and since no two edges cross in $\mathcal L$, it follows that: (i) $\ell,c_1,c_2,\dots,c_{k-1},c$ occur in this order in $\mathcal L$; and (ii) for $j=1,\dots,k$, all the vertices of $G(b_j)$ occur between $c_{j-1}$ and $c_j$ in $\mathcal L$. By Lemma~\ref{le:linear-max-outerplanar-biconnected-characterization}, the edges $(c_0,c_1),\dots,(c_{k-1},c_k)$ belong to $H$; further, since $\ell$ is the first vertex of $\mathcal L$, since $c$ is visible in $\mathcal L$, and since no two edges cross in $\mathcal L$, it follows that none of the edges $(c_0,c_1),\dots,(c_{k-1},c_k)$ lies under another edge of $H$ in $\mathcal L$. Hence, the extension $\lambda_{\mathcal L}(c)$ of $\mathcal L$ to the left of $c$ is equal to $\sum_{j=1}^k \omega_H((c_{j-1},c_j))$. As no assumption was made on $\mathcal L$, other than $c$ is visible and $\ell$ is the first vertex, the claim and hence the lemma follow.
\end{proof}


\begin{lemma} \label{le:number-of-orderings-B}
Let $H=(V_H,E_H,\omega_H)$ be an $n$-vertex weighted outerplanar graph. Let $\mathcal S$ be a set of {\sc sum}-constrained book-embeddings of $H$ such that:

\begin{enumerate}[$(\beta 1)$]
    \item all the orderings $\mathcal L \in \mathcal S$ have the same first vertex $\ell$; and
    \item for any $\mathcal L,\mathcal L' \in \mathcal S$, we have that $\mathcal L$ does not up-down dominate and is not up-down equivalent to $\mathcal L'$.
\end{enumerate}
Then  $\mathcal S$ contains $O(n)$ embeddings.
\end{lemma}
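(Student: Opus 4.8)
The plan is to mirror the proof of Lemma~\ref{le:number-of-orderings-C}, with the total extension $\tau_{\mathcal L}$ playing the role that the left-extension $\lambda_{\mathcal L}(c)$ played there and with the free space $\alpha_{\mathcal L}$ playing the role of the secondary quantity. Concretely, I would establish two claims whose combination immediately yields the bound.

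\emph{First claim: for every value $\tau$, at most one embedding $\mathcal L \in \mathcal S$ has $\tau_{\mathcal L} = \tau$.} This is proved exactly as the first claim of Lemma~\ref{le:number-of-orderings-C}: if two embeddings $\mathcal L, \mathcal L' \in \mathcal S$ had $\tau_{\mathcal L} = \tau_{\mathcal L'}$, then comparing their free spaces would force one to up-down dominate the other when $\alpha_{\mathcal L}\neq\alpha_{\mathcal L'}$ (the embedding with the larger free space dominates), and would make them up-down equivalent when $\alpha_{\mathcal L}=\alpha_{\mathcal L'}$; either way this contradicts Property~$(\beta 2)$. This reduces the lemma to bounding the number of distinct values taken by $\tau_{\mathcal L}$ over the embeddings of $\mathcal S$.

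\emph{Second claim: since all embeddings in $\mathcal S$ share the same first vertex $\ell$, the value $\tau_{\mathcal L}$ is determined by the last vertex $r$ of $\mathcal L$; hence there are at most $n$ distinct such values.} To prove this I would analyze the outermost edges of $\mathcal L$, namely those edges $e$ for which there is no $e'$ with $e' \wraps e$, whose weights sum to $\tau_{\mathcal L}$ by definition. First, both endvertices of any outermost edge are visible, since an endvertex lying strictly under some edge $e'$ would force $e'$ either to cross or to wrap the outermost edge. Second, as $H$ is connected and $\mathcal L$ induces an outerplanar drawing, the outermost edges read from left to right form a path $\ell=x_0,x_1,\dots,x_t=r$ whose consecutive edges $(x_{j-1},x_j)$ share the visible vertices $x_1,\dots,x_{t-1}$, each of which is a cut vertex. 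The blocks traversed by this path are precisely the blocks on the unique path of the block-cut-vertex tree $T_H$ between $\ell$ and $r$; and by Lemma~\ref{le:linear-max-outerplanar-biconnected-characterization}, applied to the restriction of $\mathcal L$ to each such block $b$ (which is itself a \textsc{sum}-constrained, hence \textsc{max}-constrained, book-embedding), the outermost edge contributed by $b$ is its unique maximum-weight edge, of weight $W(b)$. Therefore $\tau_{\mathcal L}=\sum_b W(b)$, where the sum ranges over the blocks on the $T_H$-path from $\ell$ to $r$; since $\ell$ is fixed and $T_H$ is a tree, this quantity depends only on $r$, of which there are at most $n$ choices.

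Combining the two claims gives $|\mathcal S|\le n = O(n)$, as desired. I expect the main obstacle to be the second claim, specifically the structural identification of the outermost edges with the per-block maximum-weight edges along the $T_H$-path: one must argue both that every block on the path contributes exactly its maximum-weight edge as a globally outermost edge (so that nothing wraps it, using the visibility of its shared endpoints), and that every block off the path contributes nothing to $\tau_{\mathcal L}$ because it lies under one of these edges. The visibility of the endvertices of outermost edges together with the per-block application of Lemma~\ref{le:linear-max-outerplanar-biconnected-characterization} are the key ingredients that make this identification rigorous.
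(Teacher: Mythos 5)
Your proposal is correct and takes essentially the same approach as the paper: the same first claim (two embeddings of equal total extension would up-down dominate each other or be up-down equivalent, contradicting $(\beta 2)$) and the same second claim that $\tau_{\mathcal L}$ is determined by the last vertex $r$, computed as the sum of the weights of the maximum-weight edges of the blocks along the path in the block-cut-vertex tree from $\ell$ to $r$, via Lemma~\ref{le:linear-max-outerplanar-biconnected-characterization}. Your explicit analysis of the outermost edges (visibility of their endpoints, the path structure $\ell=x_0,\dots,x_t=r$) is simply a more detailed rendering of the step the paper compresses into ``since no two edges cross in $\mathcal L$, and by Lemma~\ref{le:linear-max-outerplanar-biconnected-characterization}.''
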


\begin{proof}
The proof is based on two claims, very similarly to the proof of Lemma~\ref{le:number-of-orderings-B}. 

First, for any any value $\tau\geq 0$, there exists at most one {\sc sum}-constrained book-embedding $\mathcal L \in \mathcal S$ whose total extension $\tau_{\mathcal L}$ is equal to $\tau$.  Indeed, if there were two such embeddings $\mathcal L$ and $\mathcal L'$, then either one would up-down dominate the other one, or they would be up-down equivalent, depending on the values $\alpha_{\mathcal L}$ and $\alpha_{\mathcal L'}$ of their free space.

Second, for every vertex $r$ of $H$, all the {\sc sum}-constrained book-embeddings in which $\ell$ and $r$ are the first and the last vertex, respectively, have the same total extension. This claim, together with the previous one, implies that the number of embeddings in $\mathcal S$ is at most $n$ (in fact, at most $n-1$ if $n>1$, as in this case $r\neq \ell$).  We now prove the claim. Consider any vertex $r$ of $H$. If there is no {\sc sum}-constrained book-embedding of $H$ in which $\ell$ and $r$ are the first and the last vertex, respectively, then the claim is vacuously true. Otherwise, let $\mathcal L$ be any {\sc sum}-constrained book-embedding of $H$ in which $\ell$ and $r$ are the first and the last vertex, respectively. If $\ell=r$, then obviously we have $\tau_{\mathcal L}=0$ and there is nothing to prove. Assume hence that $\ell\neq r$. Let $T_H$ be the block-cut-vertex tree of $H$ and let $(b_1,c_1,b_2,c_2,\dots,b_{k-1},c_{k-1},b_k)$ be the shortest path in $T_H$ such that $b_1,b_2,\dots,b_k$ are B-nodes, $c_1,c_2,\dots,c_{k-1}$ are C-nodes, $\ell$ belongs to $G(b_1)$, and $r$ belongs to $G(b_k)$. For sake of simplicity, let $c_0:=\ell$ and $c_k:=r$. Since $\ell$ and $r$ are the first and the last vertex in $\mathcal L$, respectively, since no two edges cross in $\mathcal L$, and by Lemma~\ref{le:linear-max-outerplanar-biconnected-characterization}, it follows that the total extension $\tau_{\mathcal L}$ of $\mathcal L$ is equal to $\sum_{j=1}^k \omega_H((c_{j-1},c_j))$. As no assumption was made on $\mathcal L$, other than $\ell$ and $r$ are the first and the last vertex in $\mathcal L$, respectively, the claim and hence the lemma follow.
\end{proof}

We now describe the bottom-up visit of $T$ performed by the algorithm {\sc sum-be-drawer}.

{\bf Processing a leaf.} If $b$ is a leaf of $T$, then the sequence of {\sc sum}-constrained book-embeddings of $G^+(b)$ constructed by the algorithm {\sc sum-be-drawer} contains a single embedding $\mathcal{L}_1^+(b)=\mathcal{L}(b)$. Hence, this sequence can be computed in constant time. We have the following.

\begin{lemma} \label{le:sum-correctness-leaf}
We have that $\mathcal{L}_1^+(b)$ is a {\sc sum}-constrained book-embedding  satisfying Properties (B1)--(B3).
\end{lemma}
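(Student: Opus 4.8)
The plan is to verify the three properties directly for the base case of a leaf $b$ of $T$, where $G^+(b)=G(b)$ is a single block and the algorithm outputs the unique sequence consisting of the single embedding $\mathcal{L}_1^+(b)=\mathcal{L}(b)$ produced in Steps~1 and~2 of the visit. First I would note that $\mathcal{L}(b)$ is a {\sc sum}-constrained book-embedding of $G(b)$ by Lemma~\ref{le:linear-sum-outerplanar-biconnected-algo}, since the algorithm did not fail with Failure Condition~$1$; and since $G(b)=G^+(b)$, it is a {\sc sum}-constrained book-embedding of $G^+(b)$. This is the substance that makes the rest immediate.

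Next I would check Property~(B1). Because the algorithm did not fail with Failure Condition~$2$, the parent $c$ of $b$ is either the first or the last vertex of $\mathcal{L}(b)$, and Step~2 flips the embedding if necessary so that $c$ is in fact the \emph{first} vertex. Hence (B1) holds with $k=1$. Property~(B2) requires that the free spaces and total extensions of the listed embeddings be strictly increasing; since the sequence contains a single embedding, the chains of inequalities $\alpha_{\mathcal L^+_1(b)}<\dots<\alpha_{\mathcal L^+_k(b)}$ and $\tau_{\mathcal L^+_1(b)}<\dots<\tau_{\mathcal L^+_k(b)}$ are empty, so (B2) is vacuously satisfied.

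For Property~(B3) I would argue that any {\sc sum}-constrained book-embedding $\mathcal L$ of $G^+(b)=G(b)$ respecting (B1)—i.e.\ having $c$ as its first vertex—is up-down dominated by or up-down equivalent to $\mathcal{L}_1^+(b)$. The key fact is that $G(b)$ is biconnected, so by Lemma~\ref{le:linear-max-outerplanar-biconnected-characterization} it has a unique {\sc sum}-constrained book-embedding up to a flip, and fixing $c$ as the first vertex removes the flip freedom. Therefore $\mathcal L$ and $\mathcal{L}_1^+(b)$ coincide, giving $\tau_{\mathcal L}=\tau_{\mathcal{L}_1^+(b)}$ and $\alpha_{\mathcal L}=\alpha_{\mathcal{L}_1^+(b)}$, so $\mathcal{L}_1^+(b)$ is up-down equivalent to $\mathcal L$, as required.

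The only point deserving care—and what I expect to be the main (minor) obstacle—is Property~(B3): one must be sure that imposing ``$c$ is the first vertex'' together with biconnectivity really pins down the embedding uniquely, rather than merely up to a flip. I would resolve this by invoking Lemma~\ref{le:linear-max-outerplanar-biconnected-characterization} (which applies verbatim to {\sc sum}-constrained book-embeddings, as noted before Lemma~\ref{le:linear-sum-outerplanar-biconnected-algo}): the embedding is unique up to a flip, and since a flip reverses the roles of first and last vertex, fixing $c$ as the first vertex selects exactly one of the two. Everything else is bookkeeping against the definitions of $\tau$, $\alpha$, and the domination relations.
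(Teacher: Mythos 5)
Your proof is correct and follows essentially the same route as the paper's: both verify (B1) via Failure Condition~2 and the flip in Step~2, note that (B2) is vacuous for a single-element sequence, and derive (B3) from the uniqueness (up to a flip) of the {\sc sum}-constrained book-embedding of the biconnected block $G(b)$ guaranteed by Lemma~\ref{le:linear-max-outerplanar-biconnected-characterization}, with fixing $c$ as the first vertex eliminating the flip freedom. Your explicit remark that a flip exchanges the first and last vertices, so that prescribing the first vertex pins down the embedding exactly, is a welcome elaboration of a point the paper leaves implicit.
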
 

\begin{proof}
Note that $\mathcal{L}_1^+(b)=\mathcal{L}(b)$ is a {\sc sum}-constrained book-embedding because {\sc sum-be-drawer} did not terminate because of Failure Condition 1. Further, $\mathcal{L}(b)$ satisfies Property (B1) because {\sc sum-be-drawer} did not terminate because of Failure Condition 2. Observe that $\mathcal{L}_1^+(b)$ vacuously satisfies Property (B2) and satisfies Property (B3) because $G(b)$ admits a unique {\sc sum}-constrained book-embedding in which the parent of $b$ is the first vertex, by Lemma~\ref{le:linear-max-outerplanar-biconnected-characterization}. 
\end{proof}

{\bf Processing a C-node.} We process a C-node $c$ as follows. Let $b_1, \dots, b_h$ be the B-nodes children of $c$. By the bottom-up visit, we assume to have, for each $b_i$ with $i=1,\dots,h$, a sequence $\mathcal{L}^+_1(b_i), \mathcal{L}^+_2(b_i), \dots, \mathcal{L}^+_{k_i}(b_i)$ of {\sc sum}-constrained book-embeddings of $G^+(b_i)$ satisfying Properties~(B1)--(B3). We relabel the B-nodes $b_1, \dots, b_h$ in such a way that $W(b_i)\leq W(b_{i+1})$, for $i=1,\dots,h-1$; this takes $O(n \log n)$ time. We now process the B-nodes $b_1, \dots, b_h$ in this order. While processing these nodes, we construct $h$ sequences $\mathcal{S}_1, \dots, \mathcal{S}_h$; the sequence $\mathcal{S}_i$ contains $O(n)$ {\sc sum}-constrained book-embeddings of $G^+(b_1)\cup\dots\cup G^+(b_i)$ satisfying Properties~$(\gamma 1)$ and $(\gamma 2)$ of Lemma~\ref{le:number-of-orderings-C}. When constructing an ordering $\mathcal{L}$ in a sequence $\mathcal{S}_i$, we also compute $\lambda_{\mathcal{L}}(c)$ and $\rho_{\mathcal{L}}(c)$. 

When processing $b_1$, we let $\mathcal{S}_1$ consist of two {\sc sum}-constrained book-embeddings, namely $\mathcal{L}^+_1(b_1)$ and its flip, in this order. Then $\mathcal{S}_1$ clearly satisfies Properties~$(\gamma 1)$ and $(\gamma 2)$ of Lemma~\ref{le:number-of-orderings-C}. Note that the extensions of $\mathcal{L}^+_1(b_1)$ to the left and to the right of $c$ are $0$ and $\tau_{\mathcal{L}^+_1(b_1)}$, respectively, while the extensions of the flip of $\mathcal{L}^+_1(b_1)$ to the left and to the right of $c$ are $\tau_{\mathcal{L}^+_1(b_1)}$ and $0$, respectively. Also note that $\mathcal{L}^+_1(b_1)$ and its flip are  {\sc sum}-constrained book-embeddings of $G^+(b_1)$ with minimum total extension. Namely, for every {\sc sum}-constrained book-embedding $\mathcal L$ of $G^+(b_1)$, by Condition (B3), there exists an index $j\in\{1,\dots,k_1\}$ such that $\mathcal{L}^+_j(b_1)$ up-down dominates or is up-down equivalent to $\mathcal L$, hence $\tau_{\mathcal{L}^+_j(b_1)}\leq \tau_{\mathcal L}$. Further, by Condition (B2), we have  $\tau_{\mathcal{L}^+_1(b_1)}\leq \tau_{\mathcal{L}^+_j(b_1)}$. 

\begin{figure}[tb]
	\centering
	\subfloat[]{\includegraphics[scale=0.55,page=1]{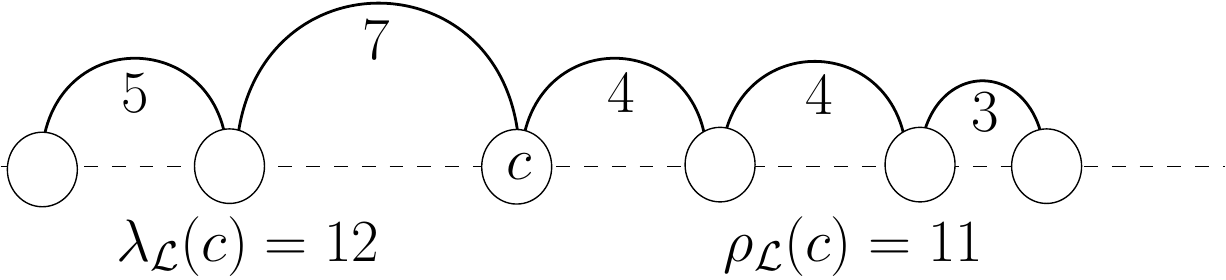}}
	\hfil
		\subfloat[]{\includegraphics[scale=0.55,page=2]{figures/SUM-insertion.pdf}}
	\hfil
		\subfloat[]{\includegraphics[scale=0.55,page=3]{figures/SUM-insertion.pdf}}
	\hfil
	\caption{(a) A {\sc sum}-constrained book-embedding $\mathcal{L}$ of $G^+(b_1)\cup\dots\cup G^+(b_{i-1})$ in $\mathcal{S}_{i-1}$. Only the edges that do not lie under any other edge are shown. (b)--(c) Combining an embedding $\mathcal{L}^+_j(b_i)$ with $\mathcal{L}$. If $\alpha_{\mathcal L^+_j(b_i)}>\rho_{\mathcal L}(c)$, then a {\sc sum}-constrained book-embedding of $G^+(b_1)\cup\dots\cup G^+(b_i)$ is constructed by placing the vertices of $\mathcal{L}^+_j(b_i) \setminus \{c\}$ to the right of $\mathcal{L}$, in the same relative order as they appear in $\mathcal{L}^+_j(b_i)$, as in (b). If $\alpha_{\mathcal L^+_j(b_i)}>\lambda_{\mathcal L}(c)$ , then a {\sc sum}-constrained book-embedding of $G^+(b_1)\cup\dots\cup G^+(b_i)$ is constructed by placing the vertices of $\mathcal{L}^+_j(b_i) \setminus \{c\}$ to the left of $\mathcal{L}$, in the opposite relative order as they appear in $\mathcal{L}^+_j(b_i)$, as in (c).}\label{fig:SUM-insertion}
\end{figure}

Suppose that, for some $i\in \{2,\dots,h\}$, the B-node $b_{i-1}$ has been processed and that the sequence $\mathcal{S}_{i-1}$ has been constructed. We process $b_i$ as follows; refer to Fig.~\ref{fig:SUM-insertion}. We initialize $\mathcal{S}_i=\emptyset$. We individually consider each of the embeddings in $\mathcal{S}_{i-1}$, say $\mathcal{L}$; since $\mathcal{S}_{i-1}$ satisfies Properties~$(\gamma 1)$ and $(\gamma 2)$ of Lemma~\ref{le:number-of-orderings-C}, there are $O(n)$  of these embeddings. We now consider each embedding $\mathcal{L}^+_j(b_i)$, with $j=1,\dots,k_i$, and we try to combine it with $\mathcal{L}$; note that, by Lemma~\ref{le:number-of-orderings-B}, we have $k_i \in O(n)$. This is done as follows. 

\begin{itemize}
	\item If $\alpha_{\mathcal L^+_j(b_i)}>\rho_{\mathcal L}(c)$, then we construct a {\sc sum}-constrained book-embedding of $G^+(b_1)\cup\dots\cup G^+(b_i)$ by placing the vertices of $\mathcal{L}^+_j(b_i) \setminus \{c\}$ to the right of $\mathcal{L}$, in the same relative order as they appear in $\mathcal{L}^+_j(b_i)$; we insert the constructed embedding into $\mathcal S_i$ and note that its extension to the left of $c$ is equal to $\lambda_{\mathcal{L}}(c)$, while its extension to the right of $c$ is equal to $\tau_{\mathcal L^+_j(b_i)}$.
	\item Symmetrically, if $\alpha_{\mathcal L^+_j(b_i)}>\lambda_{\mathcal L}(c)$, we construct a {\sc sum}-constrained book-embedding of $G^+(b_1)\cup\dots\cup G^+(b_i)$ by placing the vertices of $\mathcal{L}^+_j(b_i) \setminus \{c\}$ to the left of $\mathcal{L}$, in the opposite relative order as they appear in $\mathcal{L}^+_j(b_i)$. We insert the constructed embedding into $\mathcal S_i$ and note that its extension to the right of $c$ is equal to $\rho_{\mathcal{L}}(c)$, while its extension to the left of $c$ is equal to $\tau_{\mathcal L^+_j(b_i)}$.
\end{itemize} 

After we considered each of the $O(n)$ embeddings in $\mathcal{S}_{i-1}$, if $\mathcal{S}_{i}$ is empty, we conclude that $G$ admits no {\sc sum}-constrained book-embedding. Otherwise, we order and polish the sequence $\mathcal{S}_{i}$ by removing left-right dominated embeddings and by leaving only one copy of left-right equivalent embeddings. This is done in $O(n^2  \log n)$ time as follows. 


Since $|\mathcal{S}_{i-1}|$ and $k_i$ are both in $O(n)$, it follows that the cardinality of $\mathcal{S}_{i}$ before the polishing is $O(n^2)$. We order $\mathcal{S}_{i}$ in $O(n^2 \log n)$ time primarily based on the value of the left extension with respect to $c$ and secondarily based on the value of the right extension with respect to $c$. Then we scan $\mathcal S_i$; during the scan, we process the elements of $\mathcal S_i$ one by one. 

When we process an element $\mathcal L$, we compare it with its predecessor $\mathcal L'$. Note that, because of the ordering, we have $\lambda_{\mathcal L'}(c)\leq \lambda_{\mathcal L}(c)$. If $\rho_{\mathcal L'}(c)\leq \rho_{\mathcal L}(c)$, then we remove $\mathcal L$  from $\mathcal S_i$. Note that this scan takes $O(n^2)$ time.

This concludes the description of the processing of $b_i$ and the consequent construction of the sequence $\mathcal S_i$. As described, this processing takes $O(n^2\log n)$ time, and hence $O(h n^2\log n)$ time over all the B-nodes that are children of $c$. After processing the last B-node $b_h$, the sequence $\mathcal S_h$ contains the required {\sc sum}-constrained book-embeddings of $G^+(c)$ satisfying Properties~(C1)--(C3), as proved in the following.

\begin{lemma}\label{le:correctness-C}
We have that $\mathcal S_h$ is a (possibly empty) sequence $\mathcal L^+_1(c), \dots, \mathcal L^+_k(c)$ of {\sc sum}-constrained book-embeddings of $G^+(c)$ satisfying  Properties~(C1)--(C3).
\end{lemma}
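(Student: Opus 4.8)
The plan is to prove, by induction on $i=1,\dots,h$, that the sequence $\mathcal{S}_i$ computed while processing the $i$-th child $b_i$ of $c$ is a (possibly empty) sequence of {\sc sum}-constrained book-embeddings of $G_i:=G^+(b_1)\cup\dots\cup G^+(b_i)$ satisfying the analogues of Properties~(C1)--(C3) with respect to $c$; the lemma is then the case $i=h$, since $G_h=G^+(c)$. For the base case $i=1$, the two orderings $\mathcal{L}^+_1(b_1)$ and its flip have $c$ as first (resp.\ last) vertex, so $c$ is visible and their left/right extensions are $(0,\tau_{\mathcal{L}^+_1(b_1)})$ and $(\tau_{\mathcal{L}^+_1(b_1)},0)$, giving (C1)--(C2); (C3) follows from Lemma~\ref{le:necessity-b1}, which forces $c$ to an extreme of $G^+(b_1)$, together with the minimality of $\tau_{\mathcal{L}^+_1(b_1)}$ guaranteed by Properties~(B2)--(B3).

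For the inductive step I would first establish \emph{soundness} and Property~(C1). When the algorithm inserts $\mathcal{L}^+_j(b_i)$ to the right of some $\mathcal{L}\in\mathcal{S}_{i-1}$, the vertices of $G^+(b_i)\setminus\{c\}$ are placed to the far right, so the previous right part of $\mathcal{L}$ becomes nested \emph{directly} under the lowest-right edge $(c,w_1)$ of $c$ in $\mathcal{L}^+_j(b_i)$ and under no deeper edge. Hence the only {\sc sum}-constraint that can be violated is the one at $(c,w_1)$, whose set of direct children gains exactly total weight $\rho_{\mathcal{L}}(c)$; this is precisely guarded by $\alpha_{\mathcal{L}^+_j(b_i)}>\rho_{\mathcal{L}}(c)$, while every outer edge incident to $c$ still sees $(c,w_1)$ as a single child of unchanged weight. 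Vertex $c$ stays visible since each new edge has $c$ as its leftmost endpoint, and the resulting extensions are $(\lambda_{\mathcal{L}}(c),\tau_{\mathcal{L}^+_j(b_i)})$, as claimed; left insertions are symmetric. Property~(C2) then holds because the polishing step retains exactly the Pareto-minimal $(\lambda,\rho)$ pairs, which, sorted by increasing $\lambda$, necessarily have strictly decreasing $\rho$; and polishing preserves (C3) because every discarded embedding is dominated by a retained one.

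The main obstacle is Property~(C3), and its proof rests on a structural fact that I would isolate first: for any {\sc sum}-constrained book-embedding of $G^+(b)$ having the parent $c$ as first vertex, the free space is at most $W(b)$ while the total extension is at least $W(b)$. Indeed, the lowest-right edge of $c$ lies in the block $G(b)$, the unique block of $G^+(b)$ containing $c$, so its weight, and hence $\alpha$, is at most $W(b)$; conversely the maximum-weight edge of $G(b)$ survives on the skyline of $G^+(b)$, since by the {\sc sum}-constraint no edge of a pendant block (attached at a vertex strictly after $c$) can wrap it without crossing it or violating Lemma~\ref{le:linear-max-outerplanar-biconnected-characterization}, so it contributes to $\tau$, giving $\tau\ge W(b)$. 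Combined with Lemma~\ref{le:necessity-b1}, this forces, in \emph{every} {\sc sum}-constrained book-embedding of $G_i$ with $c$ visible, the blocks on each side of $c$ to be nested with strictly increasing $W$ outward: a block $A$ wrapping a block $B$ would require $\alpha_A>\tau_B$, i.e.\ $W(A)\ge\alpha_A>\tau_B\ge W(B)$, hence $W(A)>W(B)$. I expect verifying that the heaviest edge of $G(b)$ cannot be wrapped inside $G^+(b)$ to be the most delicate point.

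Granting this, I would finish (C3) by a peeling argument. Let $\mathcal{L}^*$ be any {\sc sum}-constrained book-embedding of $G_i$ with $c$ visible, and assume $G^+(b_i)$ lies to the right of $c$ (the left case is symmetric). Since $b_i$ has the maximum value of $W$ among all children of $c$, the structural fact makes $G^+(b_i)$ the outermost right block; thus, writing $\mathcal{L}^*_{b_i}$ and $\mathcal{M}$ for the restrictions of $\mathcal{L}^*$ to $G^+(b_i)$ and to $G_{i-1}$, we have $\rho_{\mathcal{L}^*}(c)=\tau_{\mathcal{L}^*_{b_i}}$, and $\mathcal{M}$ is a {\sc sum}-constrained book-embedding with $c$ visible, $\lambda_{\mathcal{M}}(c)=\lambda_{\mathcal{L}^*}(c)$, and $\rho_{\mathcal{M}}(c)$ equal to the total extension of the outermost surviving right block, which the wrapping edge of $b_i$ exceeds, so $\alpha_{\mathcal{L}^*_{b_i}}>\rho_{\mathcal{M}}(c)$. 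By the inductive hypothesis some $\mathcal{L}'\in\mathcal{S}_{i-1}$ left-right dominates or equals $\mathcal{M}$ with respect to $c$, and by Property~(B3) some index $j$ gives an $\mathcal{L}^+_j(b_i)$ up-down dominating or equal to $\mathcal{L}^*_{b_i}$, so $\tau_{\mathcal{L}^+_j(b_i)}\le\rho_{\mathcal{L}^*}(c)$ and $\alpha_{\mathcal{L}^+_j(b_i)}\ge\alpha_{\mathcal{L}^*_{b_i}}>\rho_{\mathcal{M}}(c)\ge\rho_{\mathcal{L}'}(c)$. The last inequality shows the algorithm actually performs this right-insertion, producing an embedding with extensions $(\lambda_{\mathcal{L}'}(c),\tau_{\mathcal{L}^+_j(b_i)})$ that dominates or equals $(\lambda_{\mathcal{L}^*}(c),\rho_{\mathcal{L}^*}(c))$; since polishing only replaces it by a dominating survivor, $\mathcal{S}_i$ contains the required embedding, completing the induction and hence the lemma.
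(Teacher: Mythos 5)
Your proposal is correct and takes essentially the same route as the paper: the same induction over the children $b_1,\dots,b_h$ in $W$-order, the same free-space guard argument for soundness and Property~(C1), the same Pareto-scan justification of Property~(C2), and, for Property~(C3), the same three-inequality chain $\alpha_{\mathcal L^+_j(b_i)}\ge\alpha_{\mathcal L^*_{b_i}}>\rho_{\mathcal M}(c)\ge\rho_{\mathcal L'}(c)$ certifying that the algorithm actually performs the dominating insertion, followed by the observation that polishing only replaces an embedding by a dominating or equivalent one. Your isolated structural fact ($\alpha\le W(b)\le\tau$, forcing same-side child subtrees to nest with strictly increasing $W$ outward) is a fleshed-out version of the step the paper disposes of in one line, and the delicate point you flagged does hold: by Lemma~\ref{le:linear-max-outerplanar-biconnected-characterization}, the maximum-weight edge of $G(b)$ is incident to the parent $c$ in the relevant restriction, so any edge wrapping it would itself be incident to $c$, hence lie in $G(b)$ and violate the {\sc max}-constraint.
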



\begin{proof}
We show that every embedding of $G^+(c)$ in $\mathcal S_h$ is a  {\sc sum}-constrained book-embedding satisfying Property (C1);  namely, we prove, by induction on $i$, that every embedding of $G^+(b_1)\cup \cdots \cup G^+(b_i)$ in $\mathcal S_i$ is a {\sc sum}-constrained book-embedding such that $c$ is visible. 

In the base case, we have $i=1$. Then $\mathcal S_1$ contains $\mathcal{L}^+_1(b_1)$ and its flip. These two embeddings are {\sc sum}-constrained book-embeddings such that $c$ is visible, by definition and since $\mathcal{L}^+_1(b_1)$ satisfies Property~(B1),


Now inductively assume that, for some $i\in \{2,\dots,h\}$, every embedding of $G^+(b_1)\cup \cdots \cup G^+(b_{i-1})$ in $\mathcal S_{i-1}$ is a {\sc sum}-constrained book-embedding such that $c$ is visible. Every embedding $\mathcal L^*$ we insert into $\mathcal S_i$ is constructed from an embedding $\mathcal L$ in $\mathcal S_{i-1}$ and an embedding $\mathcal{L}^+_j(b_i)$ of $G^+(b_{i})$ taken from the sequence $\mathcal{L}^+_1(b_i), \mathcal{L}^+_2(b_i), \dots, \mathcal{L}^+_{k_i}(b_i)$. Indeed, $\mathcal L^*$ is either constructed by placing the vertices of $\mathcal{L}^+_j(b_i) \setminus \{c\}$ to the right of $\mathcal L$, in the same relative order as they appear in $\mathcal{L}^+_j(b_i)$, or is constructed by placing the vertices of $\mathcal{L}^+_j(b_i) \setminus \{c\}$ to the left of $\mathcal L$, in the opposite relative order as they appear in $\mathcal{L}^+_j(b_i)$. In both cases, $c$ is visible in the resulting embedding. 
Further, $\mathcal L^*$ is a {\sc sum}-constrained book-embedding. Namely, assume that the vertices of $\mathcal{L}^+_j(b_i) \setminus \{c\}$ are placed to the right of $\mathcal L$ in $\mathcal L^*$, the other case is analogous. Then $\mathcal L^*$ is a {\sc sum}-constrained book-embedding given that $\mathcal L$ and $\mathcal{L}^+_j(b_i)$ are {\sc sum}-constrained book-embeddings and given that the free space of $\mathcal{L}^+_j(b_i)$ is larger than the extension of $\mathcal L$ to the right of $c$, by construction.

Concerning Property~(C2), let $\mathcal L^+_p(c)$ and $\mathcal L^+_q(c)$ be any two embeddings in $\mathcal S_h$ such that $p<q$. By the ordering of $\mathcal S_h$, we have that $\mathcal L^+_q(c)$ does not left-right dominate $\mathcal L^+_p(c)$ with respect to $c$. Suppose, for a contradiction, that:

\begin{enumerate}[(i)]
    \item $\mathcal L^+_p(c)$ left-right dominates or is left-right equivalent to $\mathcal L^+_q(c)$ with respect to $c$; that is $\lambda_{\mathcal L^+_p(c)}(c)\leq \lambda_{\mathcal L^+_q(c)}(c)$ and $\rho_{\mathcal L^+_p(c)}(c)\leq \rho_{\mathcal L^+_q(c)}(c)$; and
    \item there are no two embeddings $\mathcal L^+_r(c)$ and $\mathcal L^+_s(c)$ with $r<s$ such that $\mathcal L^+_r(c)$ left-right dominates or is left-right equivalent to $\mathcal L^+_s(c)$, and such that $s-r<q-p$; that is, $\mathcal L^+_p(c)$ and $\mathcal L^+_q(c)$ are the ``closest'' embeddings in $\mathcal S_h$ such that $\mathcal L^+_p(c)$ left-right dominates or is left-right equivalent to $\mathcal L^+_q(c)$.
\end{enumerate} 
If $q-p=1$ (that is, $\mathcal L^+_p(c)$ and $\mathcal L^+_q(c)$ are consecutive in $\mathcal S_h$), then we would have removed $\mathcal L^+_q(c)$ from $\mathcal S_h$ during its processing, a contradiction. If $q-p>1$, then consider any ordering $\mathcal L^+_x(c)$ that appears between $\mathcal L^+_p(c)$ and $\mathcal L^+_q(c)$ in $\mathcal S_h$. Because of the ordering of the embeddings in $\mathcal S_h$, we have $\lambda_{\mathcal L^+_p(c)}(c)\leq \lambda_{\mathcal L^+_x(c)}(c)\leq \lambda_{\mathcal L^+_q(c)}(c)$. Since $\mathcal L^+_p(c)$ left-right dominates or is left-right equivalent to $\mathcal L$, we have that $\rho_{\mathcal L^+_p(c)}(c)\leq \rho_{\mathcal L^+_q(c)}(c)$. If $\rho_{\mathcal L^+_x(c)}(c)\geq \rho_{\mathcal L^+_p(c)}(c)$, then $\mathcal L^+_p(c)$ left-right dominates or is left-right equivalent to $\mathcal L^+_x(c)$ with respect to $c$, contradicting the minimality of $q-p$. Otherwise, $\rho_{\mathcal L^+_x(c)}(c)<\rho_{\mathcal L^+_p(c)}(c)$, which implies that $\rho_{\mathcal L^+_x(c)}(c)<\rho_{\mathcal L^+_q(c)}(c)$, hence $\mathcal L^+_x(c)$ left-right dominates $\mathcal L^+_q(c)$ with respect to $c$, again contradicting the minimality of $q-p$. This contradiction proves that no embedding in $\mathcal S_h$ left-right dominates or is left-right equivalent to a distinct embedding in $\mathcal S_h$ with respect to $c$. Hence, no two embeddings have the same extension to the left or to the right of $c$. By the ordering of the embeddings in   $\mathcal S_h$, we have $\lambda_{\mathcal L^+_1(c)}(c)<\dots<\lambda_{\mathcal L^+_k(c)}(c)$ and $\rho_{\mathcal L^+_1(c)}(c)>\dots>\rho_{\mathcal L^+_k(c)}(c)$. Property~(C2) follows.

Finally, we prove that $\mathcal S_h$ satisfies Property~(C3). Suppose, for a contradiction, that there exists a {\sc sum}-constrained book-embedding $\mathcal L^{\diamond}$ of $G^+(c)$ satisfying Property~(C1) and such that no embedding in $\mathcal S_h$ left-right dominates or is left-right equivalent to $\mathcal L^{\diamond}$ with respect to $c$.
For $i=1,\dots,h$, let $\mathcal L^{\diamond}_i$ be the restriction of $\mathcal L^{\diamond}$ to the vertices and edges of $G^+(b_1)\cup \cdots \cup G^+(b_i)$; note that $\mathcal L^{\diamond}_h=\mathcal L^{\diamond}$. We prove, by induction on $i$, the following statement, which contradicts the above supposition: There exists a {\sc sum}-constrained book-embedding $\mathcal L^*_i$ in $\mathcal S_i$ which left-right dominates or is left-right equivalent to $\mathcal L^{\diamond}_i$ with respect to $c$.

In the base case, we have $i=1$. Then since $\mathcal{L}^+_1(b_1), \mathcal{L}^+_2(b_1), \dots, \mathcal{L}^+_{k_1}(b_1)$ satisfy Property~(B3), there exists an index $j\in \{1,\dots,k_1\}$ such that $\mathcal{L}^+_j(b_1)$ up-down dominates or is up-down equivalent to $\mathcal L^{\diamond}_1$, hence the total extension of $\mathcal{L}^+_j(b_1)$ is smaller than or equal to the total extension of $\mathcal L^{\diamond}_1$. By Property~(B2), we have that the total extension of $\mathcal{L}^+_j(b_1)$ is larger than or equal to the total extension of $\mathcal{L}^+_1(b_1)$ (where equality holds only if $j=1$). Hence, the total extension of $\mathcal{L}^+_1(b_1)$ is smaller than or equal to the total extension of $\mathcal L^{\diamond}_1$. Since $\mathcal L^{\diamond}_1$ satisfies Property~(C1), we have that either all the vertices of $\mathcal L^{\diamond}_1 \setminus\{c\}$ are to the right of $c$, or they all are to the left of $c$; then, respectively, either $\mathcal{L}^+_1(b_1)$ or its flip left-right dominates or is left-right equivalent to $\mathcal L^{\diamond}_1$ with respect to $c$. Since both $\mathcal{L}^+_1(b_1)$ and its flip are in $\mathcal S_1$, the base case of the statement follows.

Now inductively assume that, for some $i\in \{2,\dots,h\}$, there exists a {\sc sum}-constrained book-embedding $\mathcal L^*_{i-1}$ in $\mathcal S_{i-1}$ which left-right dominates or is left-right equivalent to $\mathcal L^{\diamond}_{i-1}$ with respect to $c$. 

We construct a {\sc sum}-constrained book-embedding which left-right dominates or is left-right equivalent to $\mathcal L^{\diamond}_i$  with respect to $c$ and such that it belongs to~$\mathcal S_i$. 

Let $\mathcal L^{\diamond}(b_i)$ be the restriction of $\mathcal L^{\diamond}_i$ to the vertices and edges of $G^+(b_i)$. Since $\mathcal{L}^+_1(b_i), \mathcal{L}^+_2(b_i), \dots, \mathcal{L}^+_{k_i}(b_i)$ satisfy Property~(B3), there exists an index $j\in\{1,\dots,k_i\}$ such that $\mathcal{L}^+_j(b_i)$ up-down dominates or is up-down equivalent to $\mathcal L^{\diamond}(b_i)$ (or its flip). Since $W(b_1)<\dots<W(b_i)$, it follows that $G^+(b_i)$ does not lie under any edge of $G^+(b_1)\cup \cdots \cup G^+(b_{i-1})$ in $\mathcal L^{\diamond}_i$. Further, since $\mathcal L^{\diamond}_i$ satisfies Property~(C1), it follows that either all the vertices of $G^+(b_i)\setminus \{c\}$ lie to the right of $c$ in $\mathcal L^{\diamond}_i$, or they all lie to the left of $c$; suppose that we are in the former case, as the discussion for the latter case is analogous. 

Let $\mathcal L^*_i$ be the embedding obtained by placing the vertices of $\mathcal{L}^+_j(b_i) \setminus \{c\}$ to the right of $\mathcal L^*_{i-1}$, in the same relative order as they appear in $\mathcal{L}^+_j(b_i)$. 
Then $\lambda_{\mathcal L^*_i}(c)=\lambda_{\mathcal L^*_{i-1}}(c)\leq \lambda_{\mathcal L^{\diamond}_{i-1}}(c)=\lambda_{\mathcal L^{\diamond}_{i}}(c)$, where the inequality exploits the inductive hypothesis. Further, $\rho_{\mathcal L^*_i}(c)$ coincides with the total extension of $\mathcal{L}^+_j(b_i)$, which is smaller than or equal to the total extension of $\mathcal L^{\diamond}(b_i)$, given that $\mathcal{L}^+_j(b_i)$ up-down dominates or is up-down equivalent to $\mathcal L^{\diamond}(b_i)$; hence, $\rho_{\mathcal L^*_i}(c)\leq \rho_{\mathcal L^{\diamond}_i}(c)$. This proves that $\mathcal L^*_i$ left-right dominates or is left-right equivalent to $\mathcal L^{\diamond}_i$.

Finally, we prove that $\mathcal S_i$ (before the polishing) contains $\mathcal L^*_i$. By induction, $\mathcal S_{i-1}$ contains $\mathcal L^*_{i-1}$. Hence, by construction, $\mathcal S_i$ contains $\mathcal L^*_i$ as long as $\alpha_{\mathcal{L}^+_j(b_i)}>\rho_{\mathcal L^*_{i-1}}(c)$. We prove that this is indeed the case. First, since $\mathcal{L}^+_j(b_i)$ up-down dominates $\mathcal L^{\diamond}(b_i)$, we have that $\alpha_{\mathcal{L}^+_j(b_i)}\geq \alpha_{\mathcal L^{\diamond}(b_i)}$. Second, since $\mathcal L^{\diamond}_i$ is a {\sc sum}-constrained book-embedding, we have that $\alpha_{\mathcal L^{\diamond}(b_i)}>\rho_{\mathcal L^{\diamond}_{i-1}}(c)$. Finally, since $\mathcal L^*_{i-1}$ left-right dominates or is left-right equivalent to $\mathcal L^{\diamond}_{i-1}$ with respect to $c$, we have that $\rho_{\mathcal L^{\diamond}_{i-1}}(c)\geq \rho_{\mathcal L^*_{i-1}}(c)$. The three inequalities imply that $\alpha_{\mathcal{L}^+_j(b_i)}>\rho_{\mathcal L^*_{i-1}}(c)$.

Since, before the polishing, $\mathcal S_i$ contains $\mathcal L^*_i$, after the polishing it contains either $\mathcal L^*_i$ or a different {\sc sum}-constrained book-embedding of $G^+(b_1)\cup \cdots \cup G^+(b_i)$ which left-right dominates or is left-right equivalent to $\mathcal L^*_i$ with respect to $c$; indeed, $\mathcal L^*_i$ is removed from $\mathcal S_i$ only if it is compared with such an embedding. In both cases, $\mathcal S_i$ contains a {\sc sum}-constrained book-embedding of $G^+(b_1)\cup \cdots \cup G^+(b_i)$ which left-right dominates or is left-right equivalent to $\mathcal L^{\diamond}_i$. This concludes the induction and hence the proof of the lemma.
\end{proof}

{\bf Processing an internal B-node different from the root.} We now describe how to process an internal B-node $b\neq b^*$ of $T$. The goal is  either to conclude that $G^+(b)$ does not admit a {\sc sum}-constrained book-embedding satisfying Property (B1), which by Lemma~\ref{le:necessity-b1} implies that $G$ does not admit any {\sc sum}-constrained book-embedding, or to construct a sequence $\mathcal{L}^+_1(b), \mathcal{L}^+_2(b), \dots, \mathcal{L}^+_{k(b)}(b)$ of {\sc sum}-constrained book-embeddings satisfying Properties (B1)--(B3).

First, if the algorithm {\sc sum-be-drawer} did not terminate because of Failure Conditions~1--2, we have a {\sc sum}-constrained book-embedding $\mathcal L(b)=(v_0,v_1,\dots,v_k)$ of $G(b)$ in which the parent $c$ of $b$ in $T$ is the first vertex, that is, $v_0=c$.  Further, let $c_1, \dots, c_h$ be the C-nodes that are children of $c$, labeled in the same order as they appear in $\mathcal L(b)$. Since the algorithm {\sc sum-be-drawer} did not terminate when visiting $c_1, \dots, c_h$, we have, for each $c_i$ with $i=1,\dots,h$, a sequence $\mathcal{L}^+_1(c_i), \mathcal{L}^+_2(c_i), \dots, \mathcal{L}^+_{k_i}(c_i)$ of {\sc sum}-constrained book-embeddings of $G^+(c_i)$ satisfying Properties~(C1)--(C3). 

Observe that some vertices $v_i$ might not be in $\{c,c_1,\dots,c_h\}$. Specifically, we distinguish the case in which $v_1=c_1$ from the one in which $v_1\neq c_1$.

Suppose first that $v_1\neq c_1$ is not a cut-vertex of $G^+(b)$. In this case, if the algorithm {\sc sum-be-drawer} constructs a sequence $\mathcal{L}^+_1(b), \mathcal{L}^+_2(b), \dots, \mathcal{L}^+_{k(b)}(b)$ of {\sc sum}-constrained book-embeddings satisfying Properties (B1)--(B3), that is, if it does not conclude then $k(b)=1$, that is, the sequence contains a single embedding.
The idea is to process the C-nodes $c_1,\dots,c_h$ in this order and, for each C-node $c_i$, to choose a {\sc sum}-constrained book-embedding $\mathcal{L}^+_{j}(c_i)$ for $G^+(c_i)$ in such a way that the extension of $\mathcal{L}^+_{j}(c_i)$ to the right of $c_i$ is minimum. However, by Property~(C2), the smaller the extension of $\mathcal{L}^+_{j}(c_i)$ to the right of $c_i$, the larger the extension of $\mathcal{L}^+_{j}(c_i)$ to the left of $c_i$. Hence, we need to select $\mathcal{L}^+_{j}(c_i)$ so that its extension to the right of $c_i$ is minimum, subject to the constraint that it ``fits'' on the left. We formalize this idea as follows.

We process the C-nodes $c_1, \dots, c_h$ in this order. Before any C-node is processed, we initialize $\mathcal L^*_0:= \mathcal L(b)$ and, for $i=1,\dots,k$, we initialize a variable $\ell(v_i)$ to the weight of the edge $(v_{i-1},v_i)$; roughly speaking, throughout the embedding construction, $\ell(v_i)$ represents the amount of ``remaining free space'' to the left of~$v_i$. 

When we process $c_i$, we construct a {\sc sum}-constrained book-embedding $\mathcal L^*_i$ of $G(b)\cup G^+(c_1)\cup \dots \cup G^+(c_i)$. This is done by choosing a {\sc sum}-constrained book-embedding $\mathcal{L}^+_{j}(c_i)$ for $G^+(c_i)$ and by replacing $c_i$ with $\mathcal{L}^+_{j}(c_i)$ in $\mathcal L^*_{i-1}$. The choice of $\mathcal{L}^+_{j}(c_i)$ is performed as follows. Let $x$ be such that $c_i=v_x$. Then we let $\mathcal{L}^+_{j}(c_i)$ be the embedding such that: 

\begin{enumerate}[(i)]
\item $\lambda_{\mathcal{L}^+_{j}(c_i)}<\ell(v_x)$, that is, $\mathcal{L}^+_{j}(c_i)$ fits to the left of $v_x$; and
\item $\lambda_{\mathcal{L}^+_{j}(c_i)}$ is maximum, among all the embeddings in $\mathcal{L}^+_{1}(c_i),\dots,\mathcal{L}^+_{k_i}(c_i)$ that satisfy constraint (i).
\end{enumerate}

If no such embedding exists, then we conclude that $G$ admits no {\sc sum}-constrained book-embedding. Otherwise, if $x<k$, we check whether $\rho_{\mathcal{L}^+_{j}(c_i)}<\ell(v_{x+1})$. In the negative case, that is, if $\mathcal{L}^+_{j}(c_i)$ does not fit to the right of $v_x$, then we conclude that $G$ admits no {\sc sum}-constrained book-embedding. In the positive case, we constructed $\mathcal L^*_i$; then we decrease $\ell(v_{x+1})$ by $\rho_{\mathcal{L}^+_{j}(c_i)}$, as the remaining free space to the left of $v_{x+1}$ decreased  by $\rho_{\mathcal{L}^+_{j}(c_i)}$ when replacing $c_i$ with $\mathcal{L}^+_{j}(c_i)$, and proceed. If $n_i$ denotes the number of vertices in $G^+(c_i)$, by Lemma~\ref{le:number-of-orderings-C} we have $O(n_i)$ embeddings for $G^+(c_i)$, hence $c_i$ is processed in $O(n_i)$ time and then the C-nodes $c_1, \dots, c_h$ are processed in total $O(n)$ time. 

Suppose next that $v_1=c_1$. In this case, it might be possible that the algorithm {\sc sum-be-drawer} constructs a sequence $\mathcal{L}^+_1(b), \mathcal{L}^+_2(b), \dots, \mathcal{L}^+_{k(b)}(b)$ of {\sc sum}-constrained book-embeddings satisfying Properties (B1)--(B3) with $k(b)>1$. Differently from the case in which $v_1\neq c_1$,  we cannot perform an ``optimal'' choice for the embedding of $G^+(c_1)$. Namely, on one hand we would like to select an embedding of $G^+(c_1)$ among $\mathcal{L}^+_{1}(c_1),\dots,\mathcal{L}^+_{k_1}(c_1)$ that ``consumes'' as little space as possible to the left of $c_1$, so that the free space $\alpha_{\mathcal L}$ of the {\sc sum}-constrained book-embedding $\mathcal L$ of $G^+(b)$ we are constructing is large. On the other hand, we would like to select an embedding of $G^+(c_1)$ among $\mathcal{L}^+_{1}(c_1),\dots,\mathcal{L}^+_{k_1}(c_1)$ that ``consumes'' as little space as possible to the right of $c_1$, in order to leave room for an embedding of $G^+(c_2)$.  These two objectives are in contrast, by Property~(C2) of the sequence $\mathcal{L}^+_{1}(c_1),\dots,\mathcal{L}^+_{k_1}(c_1)$. Hence, we will consider all the $O(n)$ possible choices for the embedding of $G^+(c_1)$. For each of these choices, we process the C-nodes $c_2,\dots,c_h$ in this order, similarly to the case in which $v_1\neq c_1$. Namely, for each C-node $c_i$ with $i\geq 2$, we choose a {\sc sum}-constrained book-embedding $\mathcal{L}^+_{j}(c_i)$ for $G^+(c_i)$ in such a way that the extension of $\mathcal{L}^+_{j}(c_i)$ to the right of $c_i$ is minimum subject to the constraint that $\mathcal{L}^+_{j}(c_i)$ ``fits'' on the left. We formalize this idea as follows.

We initialize $\mathcal L^*_{1,0}:=\mathcal L^*_{2,0}:=\dots:=\mathcal L^*_{k_1,0}:= \mathcal L(b)$. Recall that $k_1$ is the number of embeddings $\mathcal{L}^+_1(c_1), \dots, \mathcal{L}^+_{k_1}(c_1)$ of $G^+(c_1)$. Note that $k_1 \in O(n)$, by Lemma~\ref{le:number-of-orderings-C}.

Starting from each embedding $\mathcal L^*_{j,0}$, we will try to construct a {\sc sum}-constrained book-embedding $\mathcal L^*_{j,h}$ of $G^+(b)$. 
For each $j=1,\dots,k_1$, we process the C-nodes $c_1, \dots, c_h$ in this order. When we process $c_i$, we possibly construct a {\sc sum}-constrained book-embedding $\mathcal L^*_{j,i}$ of $G(b)\cup G^+(c_1)\cup \dots \cup G^+(c_i)$. 
Before any C-node is processed, for $j=1,\dots,k_1$ and for $i=1,\dots,k$, we initialize a variable $\ell_j(v_i)$ to the weight of the edge $(v_{i-1},v_i)$, similarly to the case $v_1\neq c_1$.

For $j=1,\dots,k_1$, we start by processing $c_1$. Namely, we check whether $\lambda_{\mathcal{L}^+_{j}(c_1)}\geq \ell_j(v_1)$, that is, whether $\mathcal{L}^+_{j}(c_1)$ does not fit to the left of $v_1$; in the positive case, we discard the embedding $\mathcal L^*_{j,0}$ and proceed. Further, we check whether $\rho_{\mathcal{L}^+_{j}(c_1)}\geq \ell_j(v_2)$, that is, whether $\mathcal{L}^+_{j}(c_1)$ does not fit to the right of $v_1$; in the positive case, we discard the embedding $\mathcal L^*_{j,0}$ and proceed. If both checks fail, then we replace $c_1$ with $\mathcal{L}^+_{j}(c_1)$, thus constructing a {\sc sum}-constrained book-embedding $\mathcal L^*_{j,1}$ of $G(b)\cup G^+(c_1)$; further, we decrease $\ell_j(v_2)$ by $\rho_{\mathcal{L}^+_{j}(c_1)}$.  

Now, for $j=1,\dots,k_1$ and for $i=2,\dots,h$, when we process $c_i$, we construct a {\sc sum}-constrained book-embedding $\mathcal L^*_{j,i}$ of $G(b)\cup G^+(c_1)\cup \dots \cup G^+(c_i)$. This is done by choosing a {\sc sum}-constrained book-embedding $\mathcal{L}^+_{m}(c_i)$ for $G^+(c_i)$ and by replacing $c_i$ with $\mathcal{L}^+_{m}(c_i)$ in $\mathcal L^*_{j,i-1}$. The choice of $\mathcal{L}^+_{m}(c_i)$ is performed as in the case in which $v_1=c_1$. Namely, let $x$ be such that $c_i=v_x$. Then we let $\mathcal{L}^+_{m}(c_i)$ be the embedding such that: 

\begin{enumerate} [(i)]
    \item $\lambda_{\mathcal{L}^+_{m}(c_i)}<\ell_j(v_x)$; and 
    \item $\lambda_{\mathcal{L}^+_{m}(c_i)}$ is maximum, among all the embeddings in $\mathcal{L}^+_{1}(c_i),\dots,\mathcal{L}^+_{k_i}(c_i)$ that satisfy constraint (i).
\end{enumerate} 
If no such embedding exists, then we discard the embedding $\mathcal{L}^*_{j,0}$  and proceed. Otherwise, if $x<k$, we check whether $\rho_{\mathcal{L}^+_{m}(c_i)}<\ell_j(v_{x+1})$. In the negative case, we discard the embedding $\mathcal{L}^*_{j,0}$ and proceed. In the positive case, we constructed $\mathcal L^*_{j,i}$; then we decrease $\ell_j(v_{x+1})$ by $\rho_{\mathcal{L}^+_{m}(c_i)}(c_i)$ and proceed. 

If the above algorithm did not construct any embedding $\mathcal{L}^*_{j,h}$ of $G^+(b)$, then we $G$ admits no {\sc sum}-constrained book-embedding. Otherwise, we have at most $k_1\in O(n)$  embeddings $\mathcal{L}^*_{1,h},\dots,\mathcal{L}^*_{k_1,h}$ of $G^+(b)$.

We discuss the time complexity of the algorithm. For each of the $O(n)$ embeddings $\mathcal L^*_{j,0}$ of $G(b)$, we select a  single embedding $\mathcal L^+_j(c_1)$ for $G^+(c_1)$ and, for every $i=2,\dots,h$, we select a single embedding $\mathcal L^+_m(c_i)$ for $G^+(c_i)$  by choosing it among $O(n_i)$ embeddings, where $n_i$ denotes the number of vertices in $G^+(c_i)$. Thus, the algorithm takes $O(n)$ time for each of the $O(n)$ embeddings $\mathcal L^*_{j,0}$ of $G(b)$, and thus $O(n^2)$ time in total.

Denote by $\mathcal S$ the sequence of constructed embeddings. We polish $\mathcal S$ so that no embedding up-down dominates or is up-down equivalent to another embedding in the sequence. This could be done in $O(n \log n)$ time by following the same approach employed when dealing with C-nodes. However, this can actually  be done easily in $O(n)$ time in this case, as the embeddings of $G^+(b)$ have been constructed in decreasing order of free space. Hence, it suffices to check whether each embedding $\mathcal L$ in $\mathcal S$ is up-down dominated or is up-down equivalent to the embedding preceding it; in the positive case, $\mathcal L$ can be removed from $\mathcal S$. Finally, $\mathcal S$ is inverted so that the embeddings appear in increasing order of free space.

This concludes the description of the algorithm for an internal B-node different from the root. 

\begin{lemma}\label{le:correctness-B}
We have that $\mathcal S$ is a (possibly empty) sequence $\mathcal L^+_1(b), \dots, \mathcal L^+_{k(b)}(b)$ of {\sc sum}-constrained book-embeddings of $G^+(b)$ satisfying  Properties~(B1)--(B3).
\end{lemma}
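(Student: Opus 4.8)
The plan is to establish the three assertions of the lemma---that every element of $\mathcal S$ is a {\sc sum}-constrained book-embedding of $G^+(b)$ satisfying (B1), that $\mathcal S$ satisfies (B2), and that it satisfies (B3)---following the template of the proof of Lemma~\ref{le:correctness-C}, treating the cases $v_1\neq c_1$ and $v_1=c_1$ in parallel.

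\emph{Correctness and (B1).} I would first argue that every embedding $\mathcal L^*$ inserted into $\mathcal S$ is a {\sc sum}-constrained book-embedding of $G^+(b)$ with $c=v_0$ first. Property (B1) is immediate: the construction starts from $\mathcal L(b)$, where $v_0=c$ by Failure Condition~2, and only replaces an interior vertex $c_i$ by an embedding of $G^+(c_i)$, never placing anything to the left of $v_0$. For the {\sc sum}-constraint I would split the edges of $\mathcal L^*$ into block edges of $G(b)$ and edges internal to the attached $G^+(c_i)$. The internal constraints are inherited, since each $\mathcal L^+_m(c_i)$ is {\sc sum}-constrained and its relative order is preserved. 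Using that the {\sc sum}-constraint of an edge is governed by the weights of the top-level edges directly beneath it, I would observe that attaching a $G^+(c_i)$ at an interior cut-vertex only inserts edges strictly beneath the bottom edges $(v_{x-1},v_x)$; hence the constraint of the wrapping edge $e_M=(v_0,v_k)$ and of every chord is a property of $G(b)$ alone and is guaranteed by Failure Condition~1. For each bottom edge $(v_{x-1},v_x)$ the only top-level edges beneath it are the right part of the embedding placed at $v_{x-1}$ and the left part of the one placed at $v_x$; since $\ell(v_x)$ is initialized to $\omega((v_{x-1},v_x))$ and decreased by $\rho$ when the left neighbour is processed, the fit tests $\lambda<\ell(v_x)$ and $\rho<\ell(v_{x+1})$ enforce exactly $\omega((v_{x-1},v_x))>\rho_{v_{x-1}}+\lambda_{v_x}$.

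\emph{(B2).} Next I would record two identities for an embedding produced by the algorithm: its free space equals $\alpha=\omega((v_0,v_1))-\lambda$, where $\lambda$ is the left extension of the embedding placed at $v_1$ (this is $0$ when $v_1\neq c_1$, as then nothing lies beneath $(v_0,v_1)$), and its total extension equals $\tau=\omega(e_M)+\rho_{v_k}$, where $\rho_{v_k}$ is the right extension of the embedding placed at $v_k$ (and $\rho_{v_k}=0$ unless $v_k$ is a child C-node, whose right part is placed after $v_k$). Here $e_M=(v_0,v_k)$ by Lemma~\ref{le:linear-max-outerplanar-biconnected-characterization} together with Failure Condition~2. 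When $v_1\neq c_1$ the free space is constant, so a single embedding survives and (B2) is vacuous, matching $k(b)=1$. When $v_1=c_1$, distinct choices $\mathcal L^+_j(c_1)$ have distinct left extensions by (C2), hence distinct free spaces. The heart of (B2) is a monotonicity statement: the greedy rule (maximum $\lambda$ fitting on the left, equivalently minimum $\rho$ by (C2)) is monotone in the available room, so giving $c_1$ more right room only ever lets each later $c_i$ be placed with a smaller $\rho$; cascading to $c_h=v_k$, a smaller $\alpha$ (a larger $\lambda$ at $c_1$) yields a smaller $\tau$. Thus $\alpha$ and $\tau$ are comonotone, and after discarding up-down dominated or equivalent copies and reversing the list the surviving sequence is strictly increasing in both parameters.

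\emph{(B3), the main obstacle.} Let $\mathcal L^{\diamond}$ be any {\sc sum}-constrained book-embedding of $G^+(b)$ with $c$ first, and write $\lambda^{\diamond}_i,\rho^{\diamond}_i$ for the extensions of its restriction to $G^+(c_i)$. By Property~(C3) applied to each $c_i$, its sequence contains an embedding with $\lambda\leq\lambda^{\diamond}_i$ and $\rho\leq\rho^{\diamond}_i$. Starting the algorithm from the choice at $c_1$ that dominates $\mathcal L^{\diamond}$ there (one of the $k_1$ enumerated runs when $v_1=c_1$, and the greedy choice itself when $v_1\neq c_1$, where $\alpha$ is fixed anyway), I would prove by induction on $i$ that after processing $c_i$ the residual room $\ell(v_y)$ is at least that of $\mathcal L^{\diamond}$ at every position $v_y$. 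The inductive step uses that the dominating embedding of $c_i$ fits inside the greedy's larger room, so the greedy's maximum-$\lambda$ choice has $\rho\leq\rho^{\diamond}_i$, preserving the advantage; in particular the greedy never fails where $\mathcal L^{\diamond}$ succeeds. The run therefore completes, its free space is $\omega((v_0,v_1))-\lambda\geq\alpha_{\mathcal L^{\diamond}}$, and its right extension at $v_k$ is at most $\rho^{\diamond}_{v_k}$, whence $\tau\leq\tau_{\mathcal L^{\diamond}}$; thus the constructed embedding up-down dominates or equals $\mathcal L^{\diamond}$, and since polishing only replaces an embedding by one that dominates or equals it, $\mathcal S$ still contains such an embedding. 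I expect this comonotone greedy-exchange argument, shared between (B2) and (B3), to be the delicate part---in particular verifying that a single greedy pass after fixing the embedding at $c_1$ is simultaneously optimal for the downstream fit and for minimizing $\tau$.
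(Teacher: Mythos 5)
Your proposal is correct and follows essentially the same route as the paper's proof: the same reduction of the {\sc sum}-constraint to the bottom edges $(v_x,v_{x+1})$ of $\mathcal L(b)$ via the $\ell(v_x)$ bookkeeping for (B1), the same reliance on the construction order and final polishing for (B2), and the same induction for (B3) that fixes the (C3)-dominating choice at $c_1$ (one of the $k_1$ enumerated runs) and shows that the greedy maximum-$\lambda$ choice at each later $c_i$ has $\rho\leq\rho^{\diamond}_i$ by (C2), so the room advantage is preserved and the run never fails where $\mathcal L^{\diamond}$ succeeds. Your two embellishments---the hierarchical top-level-edge reduction justifying why chords of $G(b)$ need no re-checking (which the paper leaves implicit in its ``it suffices to prove'' step) and the comonotonicity of $\alpha$ and $\tau$ across runs (which the paper does not need, since the polishing scan alone yields (B2))---are sound but do not constitute a different approach.
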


\begin{proof}
First, we show that every embedding $\mathcal L^+_j(b)\in \mathcal S$ of $G^+(b)$  is a  {\sc sum}-constrained book-embedding satisfying Property (B1). Namely, $\mathcal L^+_j(b)$ is constructed starting from a {\sc sum}-constrained book-embedding $\mathcal L(b)=(v_0,v_1,\dots,v_k)$ of $G(b)$ in which $c=v_0$ and by then replacing, for $i=1,\dots,h$, the vertex $c_i$ with a {\sc sum}-constrained book-embedding of $G^+(c_i)$; since $v_0\notin \{c_1,\dots,c_h\}$, we have that $\mathcal L^+_j(b)$ satisfies Property~(B1).  
We denote by $\mathcal L^+_{f(j,i)}(c_i)$ the {\sc sum}-constrained book-embedding of $G^+(c_i)$ that replaces $c_i$ in $\mathcal L^+_j(b)$. With a slight abuse of notation, we also denote by $\ell_j(v_1),\dots,\ell_j(v_k)$ the variables used in the construction of $\mathcal L^+_j(b)$.

Since $\mathcal L^+_{f(j,1)}(c_1),\dots,\mathcal L^+_{f(j,h)}(c_h)$ are {\sc sum}-constrained book-embeddings, in order to prove that $\mathcal L^+_j(b)$ is a {\sc sum}-constrained book-embedding, it suffices to prove that, for $x=0,\dots,{k-1}$, the weight of the edge $(v_x,v_{x+1})$ of $G(b)$ is larger than the sum of: 
\begin{enumerate}[(i)]
    \item the extension $\rho_{\mathcal L^+_{f(j,p)}(c_p)}(c_p)$ of $\mathcal L^+_{f(j,p)}(c_p)$ to the right of $c_p$, if $v_x=c_p$ (or $0$ if $v_x$ is not a cut-vertex of $G^+(b)$); and  
    \item the extension $\lambda_{\mathcal L^+_{f(j,q)}(c_{q})}(c_{q})$ of $\mathcal L^+_{f(j,q)}(c_q)$ to the left of $c_q$, if $v_{x+1}=c_q$ (or $0$ if $v_{x+1}$ is not a cut-vertex of $G^+(b)$).  
\end{enumerate}
Assume that $v_x=c_p$ and that $v_{x+1}=c_{p+1}$; the case in which at most one of $v_x$ and $v_{x+1}$ is a cut-vertex of $G^+(b)$ is analogous and simpler. Recall that the value $\ell_j(v_{x+1})$ is initialized to the weight of the edge $(v_x,v_{x+1})$. By construction, when $v_x=c_p$ is replaced by $\mathcal L^+_{f(j,p)}(c_p)$ we have $\rho_{\mathcal L^+_{f(j,p)}(c_p)}(c_p)<\ell_j(v_{x+1})$; further, when such a replacement is performed, the value of $\ell_j(v_{x+1})$ is decreased by  $\rho_{\mathcal L^+_{f(j,p)}(c_p)}(c_p)$. Further, when $v_{x+1}=c_{p+1}$ is replaced by $\mathcal L^+_{f(j,p+1)}(c_{p+1})$ we have $\lambda_{\mathcal L^+_{f(j,p+1)}(c_{p+1})}(c_{p+1})<\ell_j(v_{x+1})$. This implies that the weight of the edge $(v_x,v_{x+1})$ is larger than $\rho_{\mathcal L^+_{f(j,p)}(c_p)}(c_{p})+\lambda_{\mathcal L^+_{f(j,p+1)}(c_{p+1})}(c_{p+1})$. 

Property~(B2) is trivially satisfied if $v_1\neq c_1$, as in this case $\mathcal S$ contains a single {\sc sum}-constrained book-embedding; further, it is directly ensured by the final ordering and polishing that is performed on the sequence $\mathcal S$, in the case in which $v_1= c_1$.

Finally, we prove that $\mathcal S$ satisfies Property~(B3). Suppose, for a contradiction, that there exists a {\sc sum}-constrained book-embedding $\mathcal L^{\diamond}$ of $G^+(b)$ satisfying Property~(B1) and such that no embedding in $\mathcal S$ up-down dominates or is up-down equivalent to $\mathcal L^{\diamond}$. Let $\mathcal L^{\diamond}_0$ be the restriction of $\mathcal L^{\diamond}$ to $G(b)$; further, for $i=1,\dots,h$, let $\mathcal L^{\diamond}_i(c_i)$ be the restriction of $\mathcal L^{\diamond}$ to $G^+(c_i)$ and let $\mathcal L^{\diamond}_i$ be the restriction of $\mathcal L^{\diamond}$ to $G(b)\cup G^+(c_1)\cup \cdots \cup G^+(c_i)$; note that $\mathcal L^{\diamond}_h=\mathcal L^{\diamond}$. Finally, for $i=1,\dots,h$, let $x(i)$ be such that $v_{x(i)}=c_i$. Throughout this proof, we assume that $v_1=c_1$. The case in which $v_1\neq c_1$ is analogous and simpler.

We prove, by induction on $i$, the following statement: The algorithm {\sc sum-be-drawer} constructs (and does not discard) a {\sc sum}-constrained book-embedding $\mathcal L^*_{j,i}$ of $G(b)\cup G^+(c_1)\cup \dots \cup G^+(c_i)$ such that: 

\begin{enumerate}[(1)]
\item $\mathcal L^*_{j,i}$ up-down dominates or is up-down equivalent to $\mathcal L^{\diamond}_i$; and
\item let $\mathcal L^*_{j,i}(c_i)$ be the restriction of $\mathcal L^*_{j,i}$ to $G^+(c_i)$; if $i<h$ and $x(i+1)=x(i)+1$ (that is, if the cut-vertices $c_i$ and $c_{i+1}$ are consecutive in $\mathcal L(b)$), then the extension $\rho_{\mathcal L^*_{j,i}(c_i)}(c_i)$ of $\mathcal L^*_{j,i}(c_i)$ to the right of $c_i$ is smaller than or equal to the extension  $\rho_{\mathcal L^{\diamond}_i(c_i)}(c_i)$ of $\mathcal L^{\diamond}_i(c_i)$ to the right of $c_i$; roughly speaking, this ensures that the  ``remaining free space'' to the left of~$v_{x(i+1)}$ in $\mathcal L^*_{j,i}$ is at least as much as the one in $\mathcal L^{\diamond}_i$.
\end{enumerate}

By construction, the algorithm {\sc sum-be-drawer} constructs (and does not discard) $k_1$ {\sc sum}-constrained book-embeddings $\mathcal L^*_{1,0},\dots,\mathcal L^*_{k_1,0}$; the restriction of each of such embeddings to $G(b)$ is $\mathcal L(b)$. Further, $L^{\diamond}_0$ also coincides with $\mathcal L(b)$, by Lemma~\ref{le:linear-max-outerplanar-biconnected-characterization} and by the assumption that $\mathcal L^{\diamond}$ satisfies Property~(B1). This ensures that each of $\mathcal L^*_{1,0},\dots,\mathcal L^*_{k_1,0}$ is up-down equivalent to $\mathcal L^{\diamond}_0$.

We now prove the induction. In the base case, we have $i=1$. Since $\mathcal L^+_1(c_1),\dots,\mathcal L^+_{k_1}(c_1)$ satisfy Properties (C1)--(C3), there exists a {\sc sum}-constrained book-embedding $\mathcal L^+_j(c_1)$ that left-right dominates or is left-right equivalent to $\mathcal L^{\diamond}_1(c_1)$ with respect to $c_1$; that is, $\lambda_{\mathcal L^+_j(c_1)}(c_1)\leq \lambda_{\mathcal L^{\diamond}_{1}(c_{1})}(c_{1})$ and $\rho_{\mathcal L^+_j(c_1)}(c_1)\leq \rho_{\mathcal L^{\diamond}_{1}(c_{1})}(c_{1})$. Since $\mathcal L^{\diamond}_1$ is a {\sc sum}-constrained book-embedding, the weight of the edge $(v_{x(1)-1},v_{x(1)})$ is larger than $\lambda_{\mathcal L^{\diamond}_{1}(c_{1})}(c_{1})$, hence it is larger than $\lambda_{\mathcal L^+_j(c_1)}(c_1)$, and the weight of the edge $(v_{x(1)},v_{x(1)+1})$ is larger than $\rho_{\mathcal L^{\diamond}_{1}(c_{1})}(c_{1})$, hence it is larger than $\rho_{\mathcal L^+_j(c_1)}(c_1)$. It follows that the algorithm {\sc sum-be-drawer} constructs (and does not discard) a {\sc sum}-constrained book-embedding $\mathcal L^*_{j,1}$ of $G(b)\cup G^+(c_1)$ by replacing $c_1$ with $\mathcal L^+_j(c_1)$ in $\mathcal L^*_{j,0}$. 

We prove that $\mathcal L^*_{j,1}$ satisfies Condition~(1).

\begin{itemize}
    \item If $x(1)>1$, then the free spaces of $\mathcal L^*_{j,1}$ and $\mathcal L^{\diamond}_{1}$ both coincide with the weight of the edge $(v_0,v_1)$ of $G(b)$, hence $\alpha_{\mathcal L^*_{j,1}}=\alpha_{\mathcal L^{\diamond}_{1}}$. If $x(1)=1$, then the free space of $\mathcal L^*_{j,1}$ coincides with the weight of the edge $(v_0,v_1)$ minus the extension $\lambda_{\mathcal L^+_j(c_1)}(c_1)$ of $\mathcal L^+_j(c_1)$ to the left of $c_1$, while the free space of $\mathcal L^{\diamond}_{1}$ coincides with the weight of the edge $(v_0,v_1)$ minus the extension $\lambda_{\mathcal L^{\diamond}_{1}(c_1)}(c_1)$ of $\mathcal L^{\diamond}_{1}(c_1)$ to the left of $c_1$. Since  $\lambda_{\mathcal L^+_j(c_1)}(c_1)\leq \lambda_{\mathcal L^{\diamond}_{1}(c_{1})}(c_{1})$, it follows that $\alpha_{\mathcal L^*_{j,1}}\geq \alpha_{\mathcal L^{\diamond}_{1}}$.
    \item If $x(1)<k$, then the total extensions of $\mathcal L^*_{j,1}$ and $\mathcal L^{\diamond}_{1}$ both coincide with the weight of the edge $(v_0,v_k)$ of $G(b)$, hence $\tau_{\mathcal L^*_{j,1}}=\tau_{\mathcal L^{\diamond}_{1}}$. If $x(1)=k$, that is, $c_1=v_k$, then the total extension of $\mathcal L^*_{j,1}$ coincides with the weight of the edge $(v_0,v_k)$ plus the extension $\rho_{\mathcal L^+_j(c_1)}(c_1)$ of $\mathcal L^+_j(c_1)$ to the right of $c_1$, while the total extension of $\mathcal L^{\diamond}_{1}$ coincides with the weight of the edge $(v_0,v_k)$ plus the extension $\rho_{\mathcal L^{\diamond}_{1}(c_1)}(c_1)$ of $\mathcal L^{\diamond}_{1}(c_1)$ to the right of $c_1$. Since  $\rho_{\mathcal L^+_j(c_1)}(c_1)\leq \rho_{\mathcal L^{\diamond}_{1}(c_{1})}(c_{1})$, it follows that $\tau_{\mathcal L^*_{j,1}}\leq \tau_{\mathcal L^{\diamond}_{1}}$.
\end{itemize}

We also observe that $\mathcal L^*_{j,1}$ satisfies Condition~(2). Indeed, by construction, the extension $\rho_{\mathcal L^*_{j,1}(c_1)}(c_1)$ of $\mathcal L^*_{j,1}(c_1)$ to the right of $c_1$ is smaller than or equal to the extension  $\rho_{\mathcal L^{\diamond}_1(c_1)}(c_1)$ of $\mathcal L^{\diamond}_1(c_1)$ to the right of $c_1$.

Now suppose that, for some $i\in \{2,\dots,h\}$, the algorithm {\sc sum-be-drawer} constructs (and does not discard) a {\sc sum}-constrained book-embedding $\mathcal L^*_{j,i-1}$ of $G(b)\cup G^+(c_1)\cup \dots \cup G^+(c_{i-1})$ such that Conditions (1) and (2) are satisfied. Since $\mathcal L^+_1(c_i),\dots,\mathcal L^+_{k_i}(c_i)$ satisfy Properties (C1)--(C3), there exists a {\sc sum}-constrained book-embedding $\mathcal L^+_p(c_i)$ that left-right dominates or is left-right equivalent to $\mathcal L^{\diamond}_i(c_i)$ with respect to $c_i$; that is, $\lambda_{\mathcal L^+_p(c_i)}(c_i)\leq \lambda_{\mathcal L^{\diamond}_{i}(c_{i})}(c_{i})$ and $\rho_{\mathcal L^+_p(c_i)}(c_i)\leq \rho_{\mathcal L^{\diamond}_{i}(c_{i})}(c_{i})$. By Condition~(2) for $\mathcal L^*_{j,i-1}$, we have $\rho_{\mathcal L^*_{j,i-1}(c_{i-1})}(c_{i-1})\leq \rho_{\mathcal L^{\diamond}_{i-1}(c_{i-1})}(c_{i-1})$. We distinguish two cases.

\begin{itemize}
    \item Suppose first that $x(i)>x(i-1)+1$, that is, $c_{i-1}$ and $c_i$ are not consecutive in $\mathcal L(b)$. Since $\mathcal L^{\diamond}_i$ is a {\sc sum}-constrained book-embedding, the weight of the edge $(v_{x(i)-1},v_{x(i)})$ is larger than $\lambda_{\mathcal L^{\diamond}_{i}(c_{i})}(c_{i})$, hence it is larger than $\lambda_{\mathcal L^+_p(c_i)}(c_i)$, and the weight of the edge $(v_{x(i)},v_{x(i)+1})$ is larger than $\rho_{\mathcal L^{\diamond}_{i}(c_{i})}(c_{i})$, hence it is larger than $\rho_{\mathcal L^+_p(c_i)}(c_i)$. It follows that the algorithm {\sc sum-be-drawer} constructs (and does not discard) a {\sc sum}-constrained book-embedding $\mathcal L^*_{j,i}$ of $G(b)\cup G^+(c_1)\cup \dots G^+(c_i)$ by replacing $c_i$ with an embedding $\mathcal L^+_q(c_i)$ in $\mathcal L^*_{j,i-1}$. The embedding $\mathcal L^+_q(c_i)$ is the embedding among $\mathcal L^+_1(c_i),\dots,\mathcal L^+_{k_i}(c_i)$ whose extension to the left of $c_i$ is smaller than $\omega((v_{x(i)-1},v_{x(i)}))$ and is maximum, subject to the previous constraint; note that at least one embedding among $\mathcal L^+_1(c_i),\dots,\mathcal L^+_{k_i}(c_i)$ exists whose extension to the left of $c_i$ is smaller than $\omega((v_{x(i)-1},v_{x(i)}))$, namely $\mathcal L^+_p(c_i)$.
    \item Suppose next that $x(i)=x(i-1)+1$, that is, $c_{i-1}$ and $c_i$ are consecutive in $\mathcal L(b)$. Since $\mathcal L^{\diamond}_i$ is a {\sc sum}-constrained book-embedding, the weight of the edge $(v_{x(i)-1},v_{x(i)})$ is larger than $\lambda_{\mathcal L^{\diamond}_{i}(c_{i})}(c_{i})+\rho_{\mathcal L^{\diamond}_{i-1}(c_{i-1})}(c_{i-1})$, hence it is larger than $\lambda_{\mathcal L^+_p(c_i)}(c_i)+\rho_{\mathcal L^+_{j,i-1}(c_{i-1})}(c_{i-1})$, and the weight of the edge $(v_{x(i)},v_{x(i)+1})$ is larger than $\rho_{\mathcal L^{\diamond}_{i}(c_{i})}(c_{i})$, hence it is larger than $\rho_{\mathcal L^+_p(c_i)}(c_i)$. It follows that the algorithm {\sc sum-be-drawer} constructs (and does not discard) a {\sc sum}-constrained book-embedding $\mathcal L^*_{j,i}$ of $G(b)\cup G^+(c_1)\cup \dots G^+(c_i)$ by replacing $c_i$ with an embedding $\mathcal L^+_q(c_i)$ in $\mathcal L^*_{j,i-1}$. The embedding $\mathcal L^+_q(c_i)$ is the embedding among $\mathcal L^+_1(c_i),\dots,\mathcal L^+_{k_i}(c_i)$ whose extension to the left of $c_i$ is smaller than $\omega((v_{x(i)-1},v_{x(i)}))-\rho_{\mathcal L^+_{j,i-1}(c_{i-1})}(c_{i-1})$ and is maximum, subject to the previous constraint; note that at least one embedding among $\mathcal L^+_1(c_i),\dots,\mathcal L^+_{k_i}(c_i)$ exists whose extension to the left of $c_i$ is smaller than $\omega((v_{x(i)-1},v_{x(i)}))-\rho_{\mathcal L^+_{j,i-1}(c_{i-1})}(c_{i-1})$, namely $\mathcal L^+_p(c_i)$.
\end{itemize}

The proofs that $\mathcal L^*_{j,1}$ satisfies Condition~(2) and that the total extension of $\mathcal L^*_{j,i}$ is smaller than or equal to the one of $\mathcal L^{\diamond}_{i}$ are the same as for the case in which $i=1$, except that $x(i)$, $c_i$, $\mathcal L^*_{j,i}$, $\mathcal L^{\diamond}_{i}$ replace $x(1)$, $c_1$, $\mathcal L^*_{j,1}$, and $\mathcal L^{\diamond}_{1}$, respectively. Further, the free spaces of $\mathcal L^*_{j,i}$ and $\mathcal L^{\diamond}_{i}$ coincide with the free spaces of $\mathcal L^*_{j,i-1}$ and $\mathcal L^{\diamond}_{i-1}$, respectively, hence by induction we have $\alpha_{\mathcal L^*_{j,i}}=\alpha_{\mathcal L^*_{j,i-1}}\geq \alpha_{\mathcal L^{\diamond}_{i-1}}=\alpha_{\mathcal L^{\diamond}_{i}}$. This concludes the induction.

By Condition (1), the algorithm {\sc sum-be-drawer} constructs (and does not discard) a {\sc sum}-constrained book-embedding $\mathcal L^*_{j,h}$ of $G^+(b)$ that up-down dominates or is up-down equivalent to $\mathcal L^{\diamond}_h=\mathcal L^{\diamond}$. Since $\mathcal L^*_{j,h}$ is in $\mathcal S$, then after the polishing, we have that $\mathcal S$ contains either $\mathcal L^*_{j,h}$ or an embedding that up-down dominates or is up-down equivalent to $\mathcal L^*_{j,h}$, and hence up-down dominates or is up-down equivalent to $\mathcal L^{\diamond}$. This contradicts the above supposition and concludes the proof that $\mathcal S$ satisfies Property~(B3).
\end{proof}

{\bf Processing the root.}
The way we deal with the root $b^*$ of $T$ is similar, and actually simpler, than the way we deal with a B-node $b\neq b^*$. 

First, since the algorithm {\sc sum-be-drawer} did not terminate because of Failure Condition~1, we have a {\sc sum}-constrained book-embedding $\mathcal L(b^*)=(v_0,v_1,\dots,v_k)$ of $G(b^*)$. Further, let $c_1, \dots, c_h$ be the C-nodes that are children of $c$, labeled in the same order as they appear in $\mathcal L(b^*)$. Since the algorithm {\sc sum-be-drawer} did not terminate when visiting $c_1, \dots, c_h$, we have, for each $c_i$ with $i=1,\dots,h$, a sequence $\mathcal{L}^+_1(c_i), \mathcal{L}^+_2(c_i), \dots, \mathcal{L}^+_{k_i}(c_i)$ of {\sc sum}-constrained book-embeddings of $G^+(c_i)$ satisfying Properties~(C1)--(C3).

Differently from the case in which $b\neq b^*$, it might happen that $c_1=v_0$, that is, the first vertex of $\mathcal L(b^*)$ corresponds to a C-node that is a child of $b^*$ in $T$, whereas for a B-node $b\neq b^*$ the vertex $v_0$ always corresponds to the C-node that is the parent of $b$ in $T$. However, here we do not need to construct all the Pareto-optimal (with respect to the free space and the total extension) {\sc sum}-constrained book-embeddings of $G$, but we just need to test whether any {\sc sum}-constrained book-embedding of $G$ exists (and in case it does, to construct such an embedding). Hence, if  $c_1=v_0$, we can choose $\mathcal{L}^+_{k_1}(c_1)$ as the embedding for $G^+(c_1)$, given that $\mathcal{L}^+_{k_1}(c_1)$ is an embedding of $G^+(c_1)$ that satisfies Property~(C1) and that has a minimum extension to the right of $c_1$ and hence leaves most room for the embedding of $G^+(c_2)$. After this choice, the algorithm continues as in the case of a B-node $b$ different from $b^*$. 

In the case in which $c_1\neq v_0$, we process $b^*$ exactly as we process a B-node $b\neq b^*$ in the case in which $c_1\neq v_1$. The proof of the following lemma is very similar (and in fact simpler) to the proof of Lemma~\ref{le:correctness-B}, and is hence omitted.

\begin{lemma} \label{le:sum-root}
If $G$ admits a {\sc sum}-constrained book-embedding, then the algorithm {\sc sum-be-drawer} constructs such an embedding, otherwise it concludes that $G$ admits no {\sc sum}-constrained book-embedding.
\end{lemma}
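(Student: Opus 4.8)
The plan is to prove the two halves of the statement separately, mirroring the proof of Lemma~\ref{le:correctness-B} but discarding the machinery that a non-root block needs: since $b^*$ has no parent we never have to maintain the free space $\alpha$ nor a whole Pareto frontier, and we only need to decide existence and exhibit one witness. Concretely I would show (\emph{soundness}) that any ordering the algorithm outputs is a {\sc sum}-constrained book-embedding of $G=G^+(b^*)$, and (\emph{completeness}) that if $G$ admits any {\sc sum}-constrained book-embedding $\mathcal L^{\diamond}$ then the greedy construction never fails and hence produces an embedding. Together these give the lemma: existence forces the algorithm to build an embedding, and the absence of one is correctly reported because every built ordering is a valid embedding.

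For soundness I would reuse verbatim the first part of the proof of Lemma~\ref{le:correctness-B}. The algorithm starts from $\mathcal L(b^*)=(v_0,\dots,v_k)$ and replaces each child C-node $c_i=v_{x(i)}$ by a chosen {\sc sum}-constrained book-embedding $\mathcal L^+_{f(i)}(c_i)$ of $G^+(c_i)$, performing the replacement only after checking that the left and right extensions of $\mathcal L^+_{f(i)}(c_i)$ ``fit'' under the incident edges $(v_{x(i)-1},v_{x(i)})$ and $(v_{x(i)},v_{x(i)+1})$ of $G(b^*)$ (the left check being vacuous when $c_1=v_0$, as no edge of $G(b^*)$ wraps around its first vertex). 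Since each $\mathcal L^+_{f(i)}(c_i)$ is itself {\sc sum}-constrained, and the fitting conditions guarantee that for every edge $(v_x,v_{x+1})$ of $G(b^*)$ its weight exceeds the sum of the extensions of the two sub-embeddings placed immediately below it, the resulting order satisfies the {\sc sum} condition edge by edge.

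For completeness, assume $G$ admits a {\sc sum}-constrained book-embedding $\mathcal L^{\diamond}$. Because $G(b^*)$ is biconnected and contains the global maximum-weight edge $e_M$, the restriction $\mathcal L^{\diamond}_0$ of $\mathcal L^{\diamond}$ to $G(b^*)$ equals $\mathcal L(b^*)$ up to a flip by Lemma~\ref{le:linear-max-outerplanar-biconnected-characterization} (with Property~\ref{pro:max-weigth} fixing the endpoints of $e_M$ as first and last), so I fix the flip and write $\mathcal L^{\diamond}_0=\mathcal L(b^*)$. For each $c_i$ the restriction $\mathcal L^{\diamond}(c_i)$ has $c_i$ visible by Lemma~\ref{le:necessity-c1}, so Property~(C3) yields an embedding $\mathcal L^+_{p_i}(c_i)$ with $\lambda_{\mathcal L^+_{p_i}(c_i)}(c_i)\le \lambda_{\mathcal L^{\diamond}(c_i)}(c_i)$ and $\rho_{\mathcal L^+_{p_i}(c_i)}(c_i)\le \rho_{\mathcal L^{\diamond}(c_i)}(c_i)$. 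I then run the same induction on $i$ as in Lemma~\ref{le:correctness-B}, carrying the single invariant $\rho_{\mathcal L^*_i(c_i)}(c_i)\le \rho_{\mathcal L^{\diamond}(c_i)}(c_i)$, so the remaining free space to the left of $v_{x(i)+1}$ in $\mathcal L^*_i$ is at least as large as in $\mathcal L^{\diamond}_i$. The base case splits on whether $c_1=v_0$: if so, the left of $c_1$ is unconstrained and the algorithm picks $\mathcal L^+_{k_1}(c_1)$, which by Property~(C2) has minimum right extension, hence $\rho_{\mathcal L^+_{k_1}(c_1)}(c_1)\le \rho_{\mathcal L^+_{p_1}(c_1)}(c_1)\le \rho_{\mathcal L^{\diamond}(c_1)}(c_1)$; otherwise $c_1$ is handled by the generic greedy rule. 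In the inductive step the invariant guarantees that $\mathcal L^+_{p_i}(c_i)$ still fits on the left of $\mathcal L^*_{i-1}$, so the greedy choice (the fitting embedding of maximum left extension, hence, by Property~(C2), of minimum right extension) exists and has right extension at most $\rho_{\mathcal L^{\diamond}(c_i)}(c_i)$, re-establishing the invariant and making the right-fit check pass because $\rho_{\mathcal L^{\diamond}(c_i)}(c_i)<\omega((v_{x(i)},v_{x(i)+1}))$. Hence no check ever fails and the algorithm constructs a {\sc sum}-constrained book-embedding of $G$.

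The main obstacle I anticipate is the correct bookkeeping of the ``remaining left space'' invariant together with the boundary case $c_1=v_0$: one must check that when $c_1$ is the very first vertex of $\mathcal L(b^*)$ no edge of $G(b^*)$ wraps around it from the left, so that selecting the minimum-right-extension embedding $\mathcal L^+_{k_1}(c_1)$ is both feasible and optimal for leaving room to the right, and that in every other position the freedom gained by not tracking $\alpha$ (which a non-root $b$ must track) is precisely what collapses the two contrasting objectives of Lemma~\ref{le:correctness-B} into a single unambiguous greedy choice.
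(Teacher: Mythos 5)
Your proposal is correct and matches the paper's (omitted) proof of Lemma~\ref{le:sum-root}, which the authors describe as following the proof of Lemma~\ref{le:correctness-B} in simplified form: you reuse its soundness argument verbatim, run the same induction for completeness with the invariant $\rho_{\mathcal L^*_i(c_i)}(c_i)\leq\rho_{\mathcal L^{\diamond}_i(c_i)}(c_i)$, and handle the root-specific case $c_1=v_0$ exactly as the paper prescribes, by picking the minimum-right-extension embedding $\mathcal L^+_{k_1}(c_1)$, exploiting that no Pareto frontier or free-space bookkeeping is needed at $b^*$. No gaps.
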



{\bf Running time.} The algorithm {\sc sum-be-drawer} processes a B-node in $O(n^2)$ time and a C-node in $O(h n^2 \log n)$ time, where $h$ is the number of children of the C-node. Since the BC-tree $T$ has $O(n)$ nodes and edges, the running time of the algorithm {\sc sum-be-drawer} is in $O(n^3 \log n)$. This completes the proof of Theorem~\ref{th:linear-sum-outerplanar}.

\section{Two-Dimensional Book-Embeddings}\label{se:two-dimensional}

In order to deal with weighted outerplanar graphs that admit no {\sc max}-constrained and no {\sc sum}-constrained $1$-page book-embedding (a cycle with three edges that all have the same weight is an example of such a graph), a possibility is to give to each edge not only a length but also a thickness, so that the area of the lune representing an edge is proportional to its weight.

Given a weighted outerplanar graph $G=(V,E,\omega)$ a \emph{two-dimensional book-embedding} $\Gamma$ of $G$ consists of a $1$-page book-embedding $\mathcal{L}$ and of a representation $\mathcal R$ of $G$ satisfying the following conditions:
\begin{enumerate}
    \item Each vertex $v \in V$ is assigned an $x$-coordinate $x(v)$ such that if $u \prec_{\mathcal{L}} v$ then $x(u) < x(v)$; further, each vertex $v \in V$ is assigned the $y$-coordinate $y(v)=0$.
    \item For each edge $e=(u,v) \in E$ such that $u \prec_{\mathcal{L}} v$ we have that:
        \begin{enumerate}
            \item The edge $e$ is represented by an axis-parallel rectangle $\mathcal{R}(e):=[x_{\min}(e),x_{\max}(e)]\times [y_{\min}(e),y_{\max}(e)]$, where $y_{\min}(e)\geq 0$.
            \item We have that $x_{\min}(e) = x(u)$ and $x_{\max}(e) = x(v)$.
            \item The area $\big(x_{\max}(e) - x_{\min}(e)\big) \times \big(y_{\max}(e) - y_{\min}(e)\big)$ is equal to $\omega(e)$.
            \item Let $e_1,\dots,e_k$ be the edges in $E$ that are nested into $e$. We have that $y_{\min}(e) = \max_{i=1,\dots,k}\{y_{\max}(e_i)\}$. 
        \end{enumerate} 
\end{enumerate}

The \emph{area} of $\Gamma$ is the area of the \emph{bounding box} of $\mathcal R$, which is the smallest axis-parallel rectangle enclosing $\mathcal R$. We say that $\mathcal{L}$ is the $1$-page book-embedding \emph{supporting} $\Gamma$ and that $\mathcal R$ is the \emph{representation underlying} $\Gamma$. Further, $\Gamma$ has the following properties.

\begin{property}
Let $e_1$ and $e_2$ be two distinct edges of $G$. We have that $\mathcal{R}(e_1)$ and $\mathcal{R}(e_2)$ are internally disjoint.
\end{property}
\begin{proof}
Suppose, for a contradiction, that two rectangles $\mathcal{R}(e_1)$ and $\mathcal{R}(e_2)$ are not internally disjoint, where $e_1=(u,v)$ and $e_2=(w,z)$. Assume, w.l.o.g., that  $u \prec_{\mathcal L} v$ and $w \prec_{\mathcal L} z$. Since $\mathcal{R}(e_1)$ and $\mathcal{R}(e_2)$ are not internally disjoint and by Condition~1, we have neither $v \prec_{\mathcal L} w$ nor $z \prec_{\mathcal L} u$. Since $\mathcal L$ is a  $1$-page book-embedding, we have neither $u \prec_{\mathcal L} w \prec_{\mathcal L} v \prec_{\mathcal L} z$ nor $w \prec_{\mathcal L} u \prec_{\mathcal L} z \prec_{\mathcal L} v$. It remains to consider the cases $u \preceq_{\mathcal L} w \prec_{\mathcal L} z \preceq_{\mathcal L} v$ and $w \preceq_{\mathcal L} u \prec v \preceq_{\mathcal L} z$. Suppose that $u \preceq_{\mathcal L} w \prec_{\mathcal L} z \preceq_{\mathcal L} v$ (the other case being analogous). This implies that $(u,v) \wraps (w,z)$ in $\mathcal L$. By Condition 2(d) we have that $y_{\min}(e_1) \geq y_{\max}(e_2)$, which contradicts the assumption that $\mathcal{R}(e_1)$ and $\mathcal{R}(e_2)$ are not internally disjoint.
\end{proof}

In the Introduction, we proposed to represent each vertex of $G$ as a point on the boundary of a disk and each edge $(u,v)$ of $G$ as a lune that connects the points representing $u$ and $v$ and that has an area equal to the weight of $(u,v)$. On the contrary, in the above definition, vertices are placed along a straight line and edges are represented as rectangles. This has been done to simplify the geometric constructions. However, Property~\ref{pro:connections} below allows us to connect the rectangle representing an edge $(u,v)$ with the points representing $u$ and $v$, without intersecting the internal points of any other rectangle, thus showing the topological equivalence of the two representations. See Fig. \ref{fig:2D-embedding}.

\begin{property}\label{pro:connections}
Let $e\in E$ and consider the rectangle $\mathcal{R}(e)$. Let $\ell$ (let $r$) be the segment connecting the points $(x_{\min}(e),y_{\min}(e))$ and $(x_{\min}(e),0)$ (respectively, the points $(x_{\max}(e),y_{\min}(e))$ and $(x_{\max}(e),0)$). For each edge $e'\in E$, the segments $\ell$ and $r$ do not contain any internal point of the rectangle $\mathcal{R}(e')$.
\end{property}
\begin{proof}
If $e'=e$, then the statement follows from the definition of $\ell$ and $r$ and from Condition 2(a). Otherwise, suppose, for a contradiction, that $\ell$ contains an internal point of $\mathcal{R}(e')$; the case in which $r$ contains an internal point of $\mathcal{R}(e')$ is analogous. Let $u$ and $v$ be the end-vertices of $e$ and let $w$ and $z$ be the end-vertices of $e'$. Assume, w.l.o.g., that $u \prec_{\mathcal L} v$ and $w \prec_{\mathcal L} z$.
Since $\ell$ contains an internal point of $\mathcal{R}(e')$, we have that $w \prec_{\mathcal L} u \prec_{\mathcal L} z$. We cannot have $z \prec_{\mathcal L} v$, as this would imply that $\mathcal L$ is not a $1$-page book-embedding. Hence, $w \prec_{\mathcal L} u  \prec_{\mathcal L} v \prec_{\mathcal L} z$. However, by Condition 2(d), this implies that $\mathcal{R}(e')$ lies above $\mathcal{R}(e)$, hence $\ell$ cannot intersect $\mathcal{R}(e')$, a contradiction.
\end{proof}


The next theorems show that all weighted outerplanar graphs admit  two-dimensional book-embeddings. 

The first theorem shows that a weighted biconnected outerplanar graph $G=(V,E,\omega)$ admits a two-dimensional book-embedding $\Gamma$ in area $\sum_{e \in E}\omega(e)$. This bound is clearly optimal, as each edge $e\in E$ occupies area $\omega(e)$ in any two-dimensional book-embedding of $G$; in other words, the representation $\mathcal R$ underlying $\Gamma$ fills its bounding box, leaving no ``holes'' inside, where a \emph{hole} is a maximal connected region of the plane that lies inside the bounding box of $\mathcal R$ and does not intersect the interior or the boundary of any rectangle $\mathcal{R}(e)$. Before proving the theorem, we show a simple property of such area-optimal embeddings, which will be used in the following. 

\begin{property}\label{pro:no-hole}
Let $\Gamma$ be a two-dimensional book-embedding of a weighted biconnected outerplanar graph $G=(V,E,\omega)$ with area $\sum_{e \in E}\omega(e)$ and let $\mathcal L$ be the $1$-page book-embedding supporting $\Gamma$. We say that an edge $e_1$ \emph{directly wraps around} an edge $e_2$ in $\mathcal L$ if $e_1 \wraps e_2$ and there is no edge $e_3$ such that $e_1 \wraps e_3 \wraps e_2$. 

Let $e$ be any edge in $E$ and let $e_1,\dots,e_k$ be the edges in $E$ such that $e$ directly wraps around $e_1,\dots,e_k$. Then $y_{\min}(e) = y_{\max}(e_1) = \cdots =  y_{\max}(e_k)$.  
\end{property}

\begin{proof}
Since $e$ directly wraps around $e_1,\dots,e_k$, it follows that $e_1,\dots,e_k$ are nested into $e$. By Condition~2(d) of a two-dimensional book-embedding, we have $y_{\min}(e) = \max_{i=1,\dots,k}\{y_{\max}(e_i)\}$, which implies that $y_{\min}(e) \geq y_{\max}(e_i)$, for $i=1,\dots,k$. Since $G$ is biconnected and since $e$ directly wraps around $e_1,\dots,e_k$, we have that $e,e_1,\dots,e_k$ induce a cycle $(u_1,\dots,u_{k+1})$, where $e_i=(u_i,u_{i+1})$, for $i=1,\dots,k$, and $e=(u_1,u_{k+1})$;  further, again since $e$ directly wraps around $e_1,\dots,e_k$, by Conditions~1 and 2(b) of a two-dimensional book-embedding, we have that either $x(u_1)<x(u_2)<\dots<x(u_{k+1})$ or that $x(u_1)>x(u_2)>\dots>x(u_{k+1})$. Hence, if $y_{\min}(e) > y_{\max}(e_i)$, for some $i\in \{1,\dots,k\}$, then there would be a hole above the rectangle $\mathcal{R}(e_i)$, contradicting the assumption that the area of $\Gamma$ is $\sum_{e \in E}\omega(e)$. 
\end{proof}

We are now ready to present the following theorem; see Fig. \ref{fig:2D-embedding} for an example of a drawing produced by the algorithm described in the proof of the theorem.

\begin{theorem}\label{th:2d-book-outerplanar-biconnected}
Let $G=(V,E,\omega)$ be an $n$-vertex weighted biconnected outerplanar graph; further, let $s$ and $t$ be two vertices that are consecutive in the clockwise order of the vertices of $G$ along the outer face of the outerplane embedding of $G$. Finally, let $L > 0$ and $H>0$ be two real values such that $L \times H = \sum_{e \in E}\omega(e)$. There exists an $O(n)$-time algorithm that constructs a two-dimensional book-embedding $\Gamma$ in area $L \times H$ such that $s$ and $t$ are the first and the last vertex of the $1$-page book-embedding supporting $\Gamma$, respectively. 
\end{theorem}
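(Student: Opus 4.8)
The plan is to first fix the underlying $1$-page book-embedding and then to assign $x$-coordinates and rectangle heights by a two-pass traversal of the nesting structure of the edges. Since $s$ and $t$ are consecutive along the outer face, $(s,t)$ is an edge of $G$, and in any $1$-page book-embedding having $s$ first and $t$ last this edge wraps around every other edge. By biconnectivity and by the cited results~\cite{mw-opbe-90,s-cog-79,d-iroga-07,m-laarogmog-79,w-rolt-87}, such a book-embedding $\mathcal{L}$ is unique up to a flip and can be computed in $O(n)$ time; I pick the flip with $s$ first and $t$ last. Consider the tree $\mathcal{N}$ on the edges of $G$ in which $e_1$ is the parent of $e_2$ whenever $e_1$ directly wraps around $e_2$ (the relation defined in Property~\ref{pro:no-hole}); its root is $(s,t)$. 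Because $G$ is biconnected, the internal face lying immediately below an edge $e=(u,v)$ is bounded by $e$ together with a path $u=w_0,w_1,\dots,w_m=v$ whose edges are exactly the children of $e$ in $\mathcal{N}$; moreover $x(w_0)<\dots<x(w_m)$ in $\mathcal{L}$. Both $\mathcal{N}$ and these children-paths can be read off from the outerplane embedding in $O(n)$ time.

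For each edge $e$ let $\Omega(e)$ be the total weight of the edges in the subtree of $\mathcal{N}$ rooted at $e$, so that $\Omega(e)=\omega(e)+\sum_{e'}\Omega(e')$ summed over the children $e'$ of $e$; these values are computed bottom-up in $O(n)$ time. The construction then proceeds top-down. I set $x(s)=0$ and $x(t)=L$, so the root edge $(s,t)$ spans a strip of width $W((s,t))=L$. When processing an edge $e$ with span $[x(u),x(v)]$ of width $W(e)=x(v)-x(u)$ and children $e_1=(w_0,w_1),\dots,e_m=(w_{m-1},w_m)$, I subdivide $[x(u),x(v)]$ into consecutive sub-intervals, assigning to $e_j$ a width
\[
W(e_j)\;=\;W(e)\cdot\frac{\Omega(e_j)}{\sum_{i=1}^m \Omega(e_i)},
\]
which places the internal path-vertices $w_1,\dots,w_{m-1}$ and fixes all $x$-coordinates once the traversal is complete. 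Finally I set $y_{\min}(e)=\big(\Omega(e)-\omega(e)\big)/W(e)$ and $y_{\max}(e)=\Omega(e)/W(e)$, so that $\mathcal{R}(e)$ has width $W(e)$ and height $\omega(e)/W(e)$ and hence area exactly $\omega(e)$, as required by Condition~2(c).

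The heart of the argument is to show that this yields a valid two-dimensional book-embedding of area exactly $L\times H$. The choice of widths proportional to subtree weights is what makes everything fit with no holes: a direct computation gives $y_{\max}(e_i)=\Omega(e_i)/W(e_i)=\big(\sum_{j=1}^m\Omega(e_j)\big)/W(e)$, the \emph{same} value for every child $e_i$ of $e$, and this value equals $y_{\min}(e)$. Thus the children columns all rise to the common height $y_{\min}(e)$ and, since their widths sum to $W(e)$, they tile the strip below $e$ exactly; in particular Condition~2(d) holds because $y_{\min}(e)=\max_i y_{\max}(e_i)$, the maximum over \emph{all} nested edges being attained by the directly wrapped ones. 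Applying this at the root, $y_{\max}((s,t))=\Omega((s,t))/L=\big(\sum_{e\in E}\omega(e)\big)/L=H$, while the leaves of $\mathcal{N}$ (the edges wrapping nothing) satisfy $y_{\min}=0$ and so reach the baseline. Hence the rectangles exactly fill $[0,L]\times[0,H]$, the bounding box has area $L\times H=\sum_{e\in E}\omega(e)$, and $s,t$ are the first and last vertices of $\mathcal{L}$ by construction.

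The main obstacle is precisely this exact-tiling step: one must verify, by induction on $\mathcal{N}$, that the proportional split forces equal child-column heights at every node simultaneously, so that no hole is ever created and the accumulated height at the root comes out to exactly $H$. The remaining checks (consistency of the $x$-order with $\mathcal{L}$, and Conditions~1 and~2(a)--(b)) are routine. All three passes over $\mathcal{N}$ run in $O(n)$ time, matching the claimed bound.
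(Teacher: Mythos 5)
Your proposal is correct and takes essentially the same route as the paper: the same unique book-embedding, the same nesting tree (your $\mathcal{N}$ is the paper's extended dual tree, with $\Omega$ playing the role of the paper's $\mathcal{A}$), and a width split $W(e_j)=W(e)\cdot\Omega(e_j)/\sum_{i}\Omega(e_i)$ that is algebraically identical to the paper's recursive assignment $x(u_{i+1})=x(u_i)+\mathcal{A}(e_i)/(H'-\kappa(e^\circ)/L')$. The only difference is presentational: you give closed-form $y$-coordinates and check the exact tiling by a direct computation, whereas the paper phrases the identical construction as an induction on recursively passed bounding boxes.
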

\begin{proof}
First, we construct in $O(n)$ time the $1$-dimensional book-embedding $\mathcal L$ supporting $\Gamma$ as the unique $1$-dimensional book-embedding of $G$ in which $s$ and $t$ are the first and the last vertex, respectively~\cite{d-iroga-07,m-laarogmog-79,w-rolt-87}. Note that $\mathcal L$ defines an outerplane embedding $\mathcal{O}_G$ of $G$ such that $s$ is encountered immediately before $t$ when traversing the cycle delimiting the outer face of $\mathcal{O}_G$ in clockwise direction. We construct in $O(n)$ time the extended dual tree $\mathcal{T}$ of $\mathcal{O}_G$; further, we root~$\mathcal{T}$ at the leaf $\rho$ that is incident to the edge $(\rho,\sigma)$ of $\mathcal{T}$ that is dual to the edge $e^*=(s,t)$.
Second, for each edge $e\in E$, we compute a value $\mathcal{A}(e)$ which is equal to the sum of $\omega(e)$ plus the weights of the edges that are nested into $e$ in $\mathcal L$. This is done in total $O(n)$ time by means of a bottom-up traversal of $\mathcal T$. 

The proof now proceeds by induction. The induction receives as an input: 

\begin{enumerate}[(1)]
	\item a weighted biconnected outerplanar graph $K=(\mathcal V,\mathcal E,\kappa)$, which is a subgraph of $G$;
	\item a $1$-dimensional book-embedding $\mathcal K$ of $K$, whose first and last vertex are denoted by $s'$ and $t'$, respectively;
	\item  an assignment for $x(s')$ and $x(t')$ with $x(t')-x(s')=L'>0$; and
	\item a rectangle $\mathcal B=[x(s'),x(t')]\times [0,H']$ such that $L' \times H' = \sum_{e \in \mathcal E}\kappa(e)$.
\end{enumerate} 

The induction defines an output which is a two-dimensional book-embedding $\Gamma$ of $K$ whose underlying representation has $\mathcal B$ as bounding box and whose supporting $1$-dimensional book-embedding is $\mathcal K$, so that $s'$ and $t'$ have $x$-coordinates $x(s')$ and $x(t')$, respectively. The induction implies the theorem with $K=G$, $\mathcal K=\mathcal L$, $s'=s$, $t'=t$, $\kappa=\omega$, $L'=L$, $H'=H$,  $x(s')=x(s)=0$, and $x(t')=x(t)=L$.

In the base case, $K$ is a single edge $e^{\circ}$. Then the representation $\mathcal R$ underlying $\Gamma$ consists only of the rectangle $\mathcal{R}(e^{\circ})$, which coincides with~$\mathcal B$.

\begin{figure}[htb]
	\centering
	\subfloat[]{\label{fig:2d-structure}\includegraphics[scale=0.5]{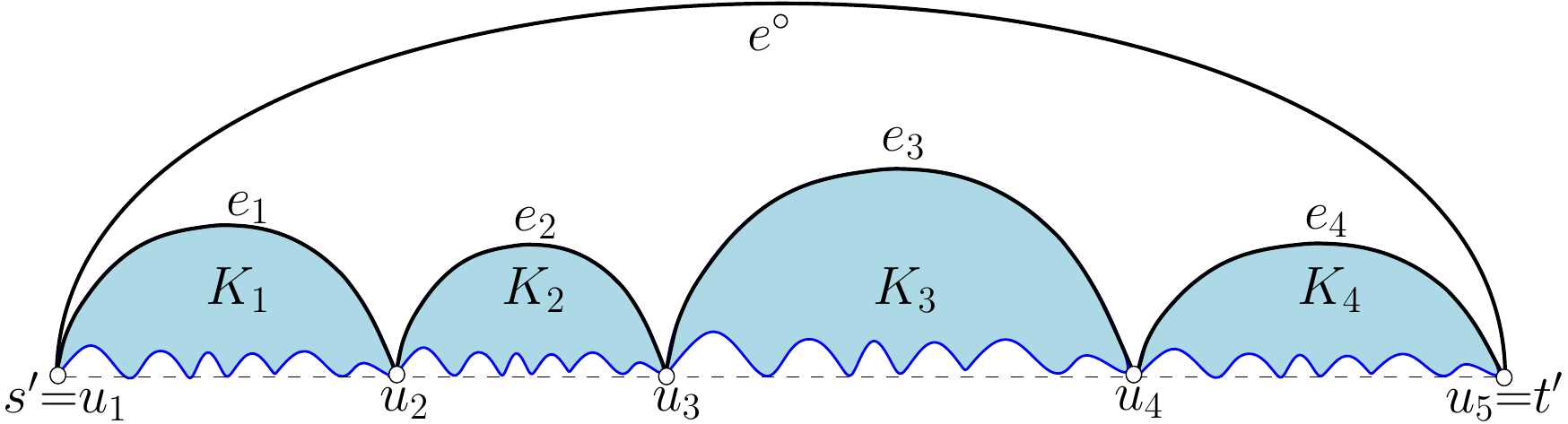}}
	\hfil
	\subfloat[]{\label{fig:2d-drawing}\includegraphics[scale=0.5]{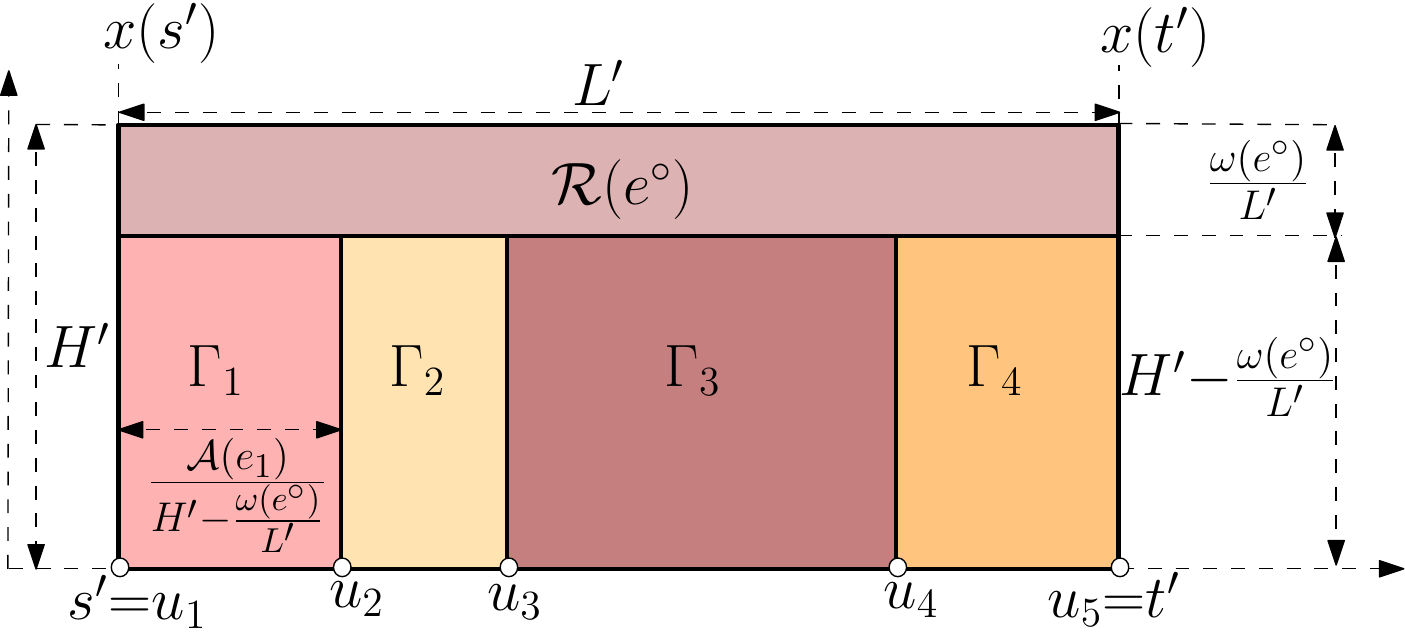}}
	\hfil
	\caption{Illustration for the inductive case of the proof of Theorem~\ref{th:2d-book-outerplanar-biconnected}. (a) The graphs $K,K_1,\dots,K_k$, the edges $e^{\circ},e_1,\dots,e_k$, and the vertices $u_1,\dots,u_{k+1}$. In this example, $k=4$. (b) Construction of a two-dimensional book-embedding $\Gamma$ of $K$ from two-dimensional book-embeddings $\Gamma_1,\dots,\Gamma_k$ of $K_1,\dots,K_k$.}
	\label{fig:dual-tree-and-stuff}
\end{figure}

In the inductive case, $K$ has more than one edge; refer to Fig.~\ref{fig:2d-structure}. Let $\mathcal{O}_K$ be the outerplane embedding of $K$ associated to $\mathcal K$; in particular, $s'$ is encountered immediately before $t'$ when traversing the cycle delimiting the outer face of $\mathcal{O}_K$ in clockwise direction. Since $K$ is biconnected and $e^{\circ}$ is incident to the outer face of $\mathcal O_K$, there exists an internal face of $\mathcal O_K$ that is delimited by a simple cycle containing $e^{\circ}$. Let $(s'=u_1,u_2,\dots,u_{k+1}=t')$ be such a cycle, where we define $e_i=(u_i,u_{i+1})$, for $i=1,\dots,k$; then $e^\circ$ directly wraps around $e_1,\dots,e_k$ in $\mathcal K$ and $u_1\prec_{\mathcal K} u_2 \prec_{\mathcal K} \dots \prec_{\mathcal K} u_{k+1}$. 

For $i=1,\dots,k-1$, we set $x(u_{i+1})=x(u_{i})+\frac{\mathcal A(e_{i})}{H'-\kappa(e^\circ)/L'}$ and $y(u_{i+1})=0$. Then we apply induction $k$ times, namely, for $i=1,\dots,k$, we apply induction with:

\begin{enumerate}[(1)]
	\item the weighted biconnected outerplanar graph $K_i=(\mathcal V_i,\mathcal E_i,\kappa_i)$ induced by $e_i$ and by the edges nested into $e_i$ in $\mathcal K$, where the weight function $\kappa_i$ is the restriction of $\kappa$ to the edges in $\mathcal E_i$;
	\item a $1$-dimensional book-embedding $\mathcal K_i$ of $K_i$, whose first and last vertex are $u_i$ and $u_{i+1}$, respectively; this book-embedding is the restriction of $\mathcal K$ to $K_i$;
	\item the assignment for $x(u_i)$ and $x(u_{i+1})$ defined above; and
	\item the rectangle $\mathcal B_i=[x(u_i),x(u_{i+1})]\times [0,H'-\frac{\kappa(e^\circ)}{L'}]$.
\end{enumerate} 

We denote by $\Gamma_i$ the two-dimensional book-embedding of $K_i$ constructed by induction. Finally, we draw $e^\circ$ as the rectangle $\mathcal{R}(e^\circ)=[x(s'),x(t')]\times [H'-\frac{\kappa(e^\circ)}{L'},H']$. See Fig.~\ref{fig:2d-drawing}.

We now prove the correctness of the above-described algorithm. First, we prove that, in the inductive case, the area of $\mathcal B_i$ is equal to $\sum_{e \in \mathcal E_i}\kappa_i(e)=\sum_{e \in \mathcal E_i}\kappa(e)=\sum_{e \in \mathcal E_i}\omega(e)$, which ensures the correctness of the inductive calls. 

If $i\leq k-1$ then, by construction, we have $x(u_{i+1})=x(u_{i})+\frac{\mathcal A(e_{i})}{H'-\kappa(e^\circ)/L'}$, hence the area of $\mathcal B_i$ is equal to $\frac{\mathcal A(e_{i})}{H'-\kappa(e^\circ)/L'} \times (H'-\frac{\kappa(e^\circ)}{L'})=\mathcal A(e_{i})=\sum_{e \in \mathcal E_i}\omega(e)$.

We now prove that the area of $\mathcal B_k$ is equal to $\sum_{e \in \mathcal E_k}\kappa(e)$. By construction, we have $x(u_k)=x(s')+\frac{\sum_{i=1}^{k-1}\mathcal A(e_{i})}{H'-\kappa(e^\circ)/L'}=x(s')+\frac{H'\times L' -\kappa(e^\circ)- A(e_k)}{H'-\kappa(e^\circ)/L'}=x(s')+L'- \frac{\mathcal A(e_k)}{H'-\kappa(e^\circ)/L'}=x(t')- \frac{\mathcal A(e_k)}{H'-\kappa(e^\circ)/L'}$, where the second equality exploits the fact that the sum of the weights of the edges in $\mathcal E$ is equal to $H'\times L'$ and to $\kappa(e^\circ)+\sum_{i=1}^{k}\mathcal A(e_{i})$. It follows that the area of $\mathcal B_k$ is equal to $\frac{\mathcal A(e_{k})}{H'-\kappa(e^\circ)/L'} \times (H'-\frac{\kappa(e^\circ)}{L'})=\mathcal A(e_{k})=\sum_{e \in \mathcal E_k}\omega(e)$. 

We now prove that the constructed representation satisfies Condition~(1) and Conditions~(2)a--(2)d of a two-dimensional book-embedding. 

\begin{itemize}
	\item Condition~(1): As described above, we have $u_{i}\prec_{\mathcal K} u_{i+1}$, for $i=1,2,\dots,k$. We prove that $x(u_{i+1})>x(u_{i})$, for $i=1,2,\dots,k$. 
	
	If $i\leq k-1$ then, by construction, we have $x(u_{i+1})=x(u_{i})+\frac{\mathcal A(e_{i})}{H'-\kappa(e^\circ)/L'}$. Since $H'=\frac{\sum_{e\in \mathcal E} \kappa(e)}{L'}>\frac{\kappa(e^\circ)}{L'}$, we have that $\frac{\mathcal A(e_{i})}{H'-\kappa(e^\circ)/L'}>0$, and hence $x(u_{i+1})>x(u_{i})$. 
	
	We now prove that $x(t')=x(u_{k+1})>x(u_k)$. As argued above, we have $x(u_k)=x(t')- \frac{\mathcal A(e_k)}{H'-\kappa(e^\circ)/L'}$. Since $H'>\frac{\kappa(e^\circ)}{L'}$, it follows that $x(u_{k+1})=x(t')>x(u_k)$.
	
	By induction, for $i=1,2,\dots,k$, we have that the $1$-dimensional book-embedding supporting $\Gamma_i$ is $\mathcal K_i$. Since $\Gamma_i$ satisfies Condition~(1), the order of the vertices of $K_i$ by increasing $x$-coordinates is $\mathcal K_i$; in particular, $u_i$ and $u_{i+1}$ are respectively the vertex with the smallest and the largest $x$-coordinate in $\Gamma_i$.
	
	Now consider any two distinct vertices $u$ and $v$ of $K$ respectively belonging to $K_i$ and $K_j$, for some $i,j\in \{1,\dots,k\}$; we assume w.l.o.g.\ that $i\leq j$. If $i=j$, then we have that $u\prec_{\mathcal K} v$	if and only if $x(u)< x(v)$, given that the same property is satisfied in $\Gamma_i$, as argued above, and given that the restriction of $\Gamma$ to $K_i$ is $\Gamma_i$. If $i<j$, then we have $u\preceq u_{i+1} \preceq u_{j} \preceq v$, where one of the three precedence relationships is strict, given that $u$ and $v$ are distinct. Further, $x(u)\leq x(u_{i+1})$, given that $u_{i+1}$ is the vertex with the largest $x$-coordinate in $\Gamma_i$; analogously, $x(u_j)\leq x(v)$, given that $u_j$ is the vertex with the smallest $x$-coordinate in $\Gamma_j$; finally, $x(u_{i+1})\leq x(u_j)$, where the equality holds only if $j=i+1$. Hence, $x(u)\leq x(u_{i+1})\leq x(u_j)\leq x(v)$, where one of the three inequalities is strict, given that $u$ and $v$ are distinct. It follows that $\Gamma$ satisfies Condition~(1).
	
	\item Condition~(2)a: At each step of the induction, by construction, we represent a single edge $e^\circ$ by an axis-parallel rectangle $\mathcal{R}(e^\circ)$. Hence, every edge of $K$ is represented by an axis-parallel rectangle.
	\item Condition~(2)b: At each step of the induction, by construction, we draw a single axis-parallel rectangle $\mathcal{R}(e^\circ)$ representing the edge $e^\circ=(s',t')$ of $K$, so that $x_{\min}(e^\circ)=x(s')$ and $x_{\max}(e^\circ)=x(t')$. Hence, every edge $e=(u,v)$ of $K$ is such that $x_{\min}(e)=x(u)$ and $x_{\max}(e)=x(v)$.
	\item Condition~(2)c: At each step of the induction, we draw a single axis-parallel rectangle $\mathcal{R}(e^\circ)$ representing the edge $e^\circ$ of $K$. In the base case, the area of $\mathcal{R}(e^\circ)$ is $(x(t')-x(s'))\times H' = L'\times H' = \sum_{e \in \mathcal E}\kappa(e) = \kappa(e^\circ)$, as requested. In the inductive case, the area of $\mathcal{R}(e^\circ)$ is $[x(s'),x(t')]\times [H'-\kappa(e^\circ)/L',H']=L'\times \kappa(e^\circ)/L' = \kappa(e^\circ)$, as requested. Hence, every edge $e$ of $K$ is represented by an axis-parallel rectangle $\mathcal{R}(e)$ whose area is $\kappa(e)$.
	\item Condition~(2)d: At each step of the induction, we assign the value $y_{\min}(e^\circ)=H'-\kappa(e^\circ)/L'$ for the edge $e^\circ$. Further, the inductive calls ensure that every edge $e$ of $K$ different from $e^\circ$ is represented by a rectangle whose $y$-coordinates are in $[0,H'-\kappa(e^\circ)/L']$, hence $y_{\max}(e)\leq y_{\min}(e^\circ)$.
\end{itemize}
	
Finally, we discuss the running time of the above-described algorithm. The $1$-page book-embedding $\mathcal L$, the extended dual tree $\mathcal T$ of the outerplane embedding $\mathcal O_G$ of $G$, and the value $\mathcal A(e)$ for each edge $e\in E$ can be computed in total $O(n)$ time, as discussed above. Assume that each edge $e$ of $G$ stores a linear list $\mathcal L(e)$, which represents what follows. Let $(a,b)$ be the edge of $\mathcal T$ that is dual to $e$, where $a$ is the parent of $b$. If $b$ is a leaf of $\mathcal T$ (and hence $e$ is an edge incident to the outer face of $\mathcal O_G$ and different from $e^*$), then $\mathcal L(e)=\emptyset$. Otherwise, $\mathcal L(e)$ represents the counter-clockwise order of the vertices along the cycle delimiting the internal face of $\mathcal O_G$ that is dual to $b$, where the end-vertices of $e$ are the first and the last vertex of $\mathcal L(e)$. Such lists can be set-up in total $O(n)$ time by means of a visit of $\mathcal O_G$.

In the base case of the inductive algorithm, the computation time is obviously constant. In the inductive case, the vertices $u_1,u_2,\dots, u_{k+1}$ are found in $O(k)$ time, as these are the vertices in the list $\mathcal L(e^\circ)$. Then the coordinates $x(u_1),x(u_2),\dots, x(u_{k+1})$ can also be found in $O(k)$ time from the pre-computed labels $\mathcal A(e_i)$. The graphs $K_1,\dots,K_k$ and the $1$-page book-embeddings $\mathcal K_1,\dots,\mathcal K_k$ do not need to be computed explicitly; indeed, the lists $\mathcal L(e_1),\dots,\mathcal L(e_k)$ represent all the information that is needed for the induction to continue. Hence, the algorithm spends $O(k)$ time when processing $e^\circ$. Since $k$ is the degree in $\mathcal T$ of the vertex that is dual to the internal face of $\mathcal O_K$ incident to $e^\circ$, and since the sum of the degrees of the vertices of $\mathcal T$ is in $O(n)$, it follows that the running time of the algorithm is in $O(n)$, as well.
\end{proof}

\remove{
\begin{theorem}\label{th:2d-book-outerplanar-biconnected}
Let $G=(V,E,\omega)$ be an $n$-vertex weighted biconnected outerplanar graph; further, let $s$ and $t$ be two vertices that are consecutive in the clockwise order of the vertices of $G$ along the outer face of the outerplane embedding of $G$; finally, let $L > 0$ be a prescribed width. 
There exists an $O(n)$-time algorithm that constructs a two-dimensional book-embedding $\Gamma$ in area $L \times H = \sum_{e \in E}\omega(e)$ such that $s$ and $t$ are the first and the last vertex of the $1$-page book-embedding supporting $\Gamma$, respectively. 
\end{theorem}
\begin{proof}
We show how to construct a two-dimensional book-embedding $\Gamma$ of $G$; see Fig. \ref{fig:2D-embedding} for an example of a drawing produced by the algorithm described below.

First, we define the $1$-dimensional book-embedding $\mathcal L$ supporting $\Gamma$ as the unique $1$-dimensional book-embedding of $G$ in which $s$ and $t$ are the first and the last vertex, respectively~\cite{d-iroga-07,m-laarogmog-79,w-rolt-87}. Note that $\mathcal L$ defines an outerplane embedding $\mathcal{O}_G$ of $G$ such that $s$ is encountered immediately before $t$ when traversing the cycle delimiting the outer face of $\mathcal{O}_G$ in clockwise direction.

We construct in $O(n)$ time the extended dual tree $\mathcal{T}$ of $\mathcal{O}_G$  (see Fig.~\ref{fig:dual-tree}). We root~$\mathcal{T}$ at the leaf $\rho$ that is incident to the edge $(\rho,\sigma)$ of $\mathcal{T}$ that is dual to the edge $(s,t)$. 

\begin{figure}[htb]
	\centering
	\subfloat[]{\label{fig:dual-tree}\includegraphics[width=0.45\columnwidth]{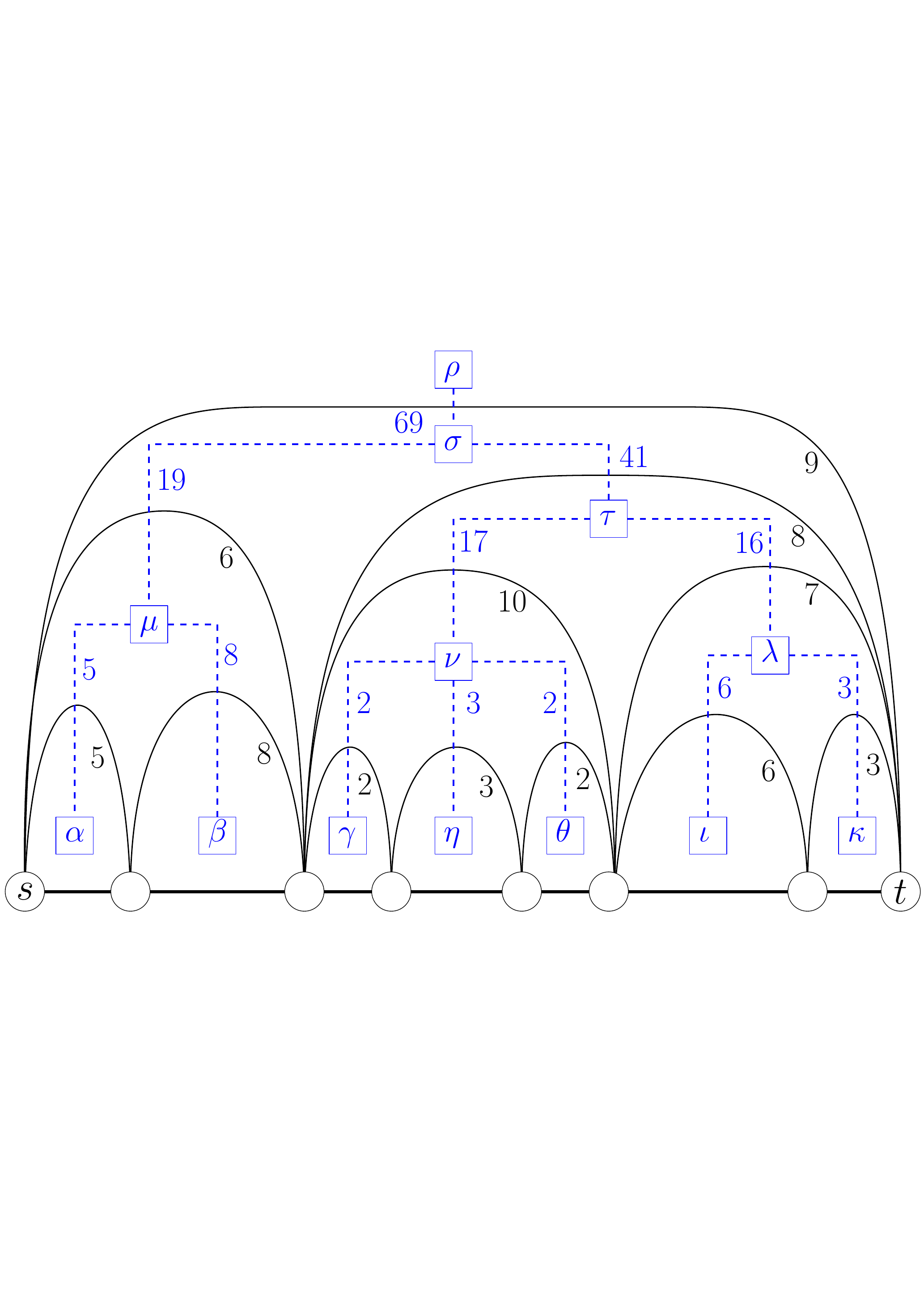}}
	\hfil
	\subfloat[]{\label{fig:2d-correctness}\includegraphics[width=0.45\columnwidth]{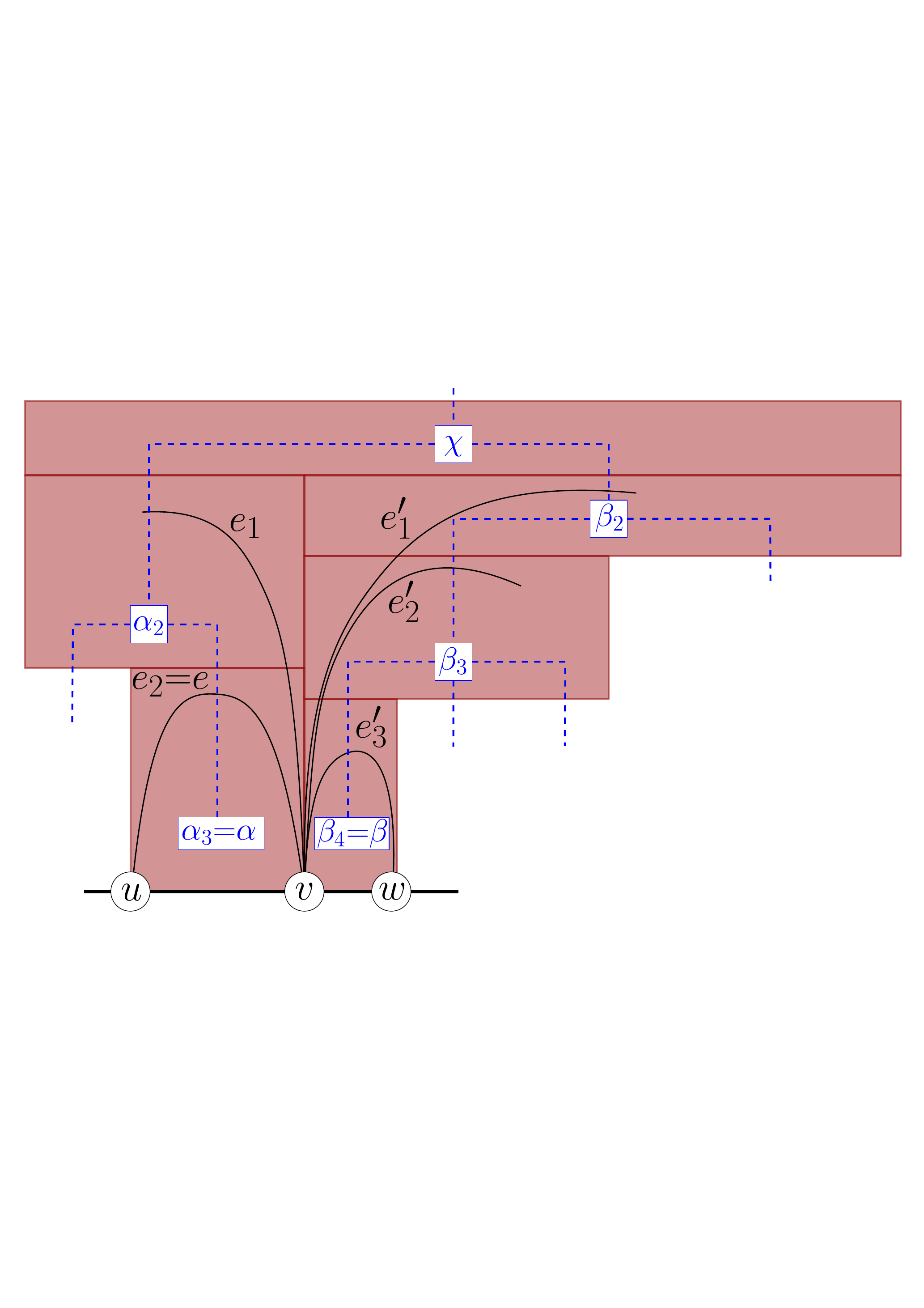}}
	\hfil
	\caption{(a) The extended dual tree $\mathcal{T}$ (vertices are squares and edges are dashed lines) of the outerplane embedding $\mathcal{O}_G$ of $G$ depicted in Fig.~\ref{fig:2D-embedding}. Each edge $e$ of $G$ is labeled with the value $\omega(e)$; each edge $e$ of $\mathcal T$ is labeled with the value $A^+_{e}$. (b) A figure for the proof that Condition (1) is satisfied.}\label{fig:dual-tree-and-stuff}
\end{figure}

Next, we compute, for each edge $(\alpha,\beta)$ of $\mathcal{T}$ that is dual to an edge $e$ of $G$, the width $L_{(\alpha,\beta)}$ and the height $H_{(\alpha,\beta)}$ of the rectangle $\mathcal{R}(e)$, so that $L_{(\alpha,\beta)} \times H_{(\alpha,\beta)} = \omega(e)$. This is done in total $O(n)$ time by means of the following two visits of $\mathcal{T}$.
First, we visit $\mathcal{T}$ bottom-up and we equip each edge $(\alpha,\beta)$ of $\mathcal{T}$, where $\alpha$ is the parent of $\beta$, with the sum $A^+_{(\alpha,\beta)}$ of the weights of the edges of $G$ that are dual to the edges in the subtree of $\mathcal{T}$ rooted at $\beta$, plus the weight of the edge that is dual to $(\alpha,\beta)$. Observe that $A^+_{(\rho,\sigma)} = \sum_{e \in E}\omega(e)$. Second, we visit $\mathcal{T}$ top-down. When we visit an edge $(\alpha,\beta)$ of $\mathcal{T}$, we assume that $L_{(\alpha,\beta)}$ has been already set; this is indeed the case for the first visited edge, namely $(\rho,\sigma)$, for which we have $L_{(\rho,\sigma)}=L$. Let $e$ be the edge of $G$ that is dual to $(\alpha,\beta)$. Assume, w.l.o.g., that $\alpha$ is the parent of $\beta$ and let $\gamma_1,\dots,\gamma_k$ be the children of $\beta$ in $\mathcal{T}$. During the visit of $(\alpha,\beta)$, we perform two actions: (i) we set the value $H_{(\alpha,\beta)}$ to $\omega(e)/L_{(\alpha,\beta)}$; (ii) for $i=1,\dots,k$, we set the value $L_{(\beta,\gamma_i)}$ to $L_{(\alpha,\beta)} \times A^+_{(\beta,\gamma_i)} / (A^+_{(\alpha,\beta)})$; that is, the width of $\mathcal{R}(e)$ is split among the subtrees of $\beta$ proportionally to their weights. Note that the latter action is ignored if $\beta$ is a leaf. 



We now compute the coordinates $[x_{\min}(e),x_{\max}(e)]\times [y_{\min}(e),y_{\max}(e)]$ of each rectangle $\mathcal{R}(e)$. This is done in $O(n)$ time, again by means of two visits of $\mathcal T$, which are described in the following. 
\begin{itemize}
    \item We first traverse $\mathcal{T}$ bottom-up. For each leaf edge $(\alpha,\beta)$ of~$\mathcal{T}$ that is dual to an edge $e$ of $G$, we set $y_{\min}(e) = 0$ and $y_{\max}(e) = H_{(\alpha,\beta)}$. For each non-leaf edge $(\alpha,\beta)$ of~$\mathcal{T}$ that is dual to an edge $e$ of $G$, assume, w.l.o.g., that $\alpha$ is the parent of $\beta$; let $\gamma_1,\dots,\gamma_k$ be the children of $\beta$ in $\mathcal{T}$ and let $e_1,\dots,e_k$ be the edges of $G$ that are dual to the edges $(\beta,\gamma_1),\dots,(\beta,\gamma_k)$ of $\mathcal{T}$, respectively. We set $y_{\min}(e) = \max_{i=1,\dots,k}y_{\max}(e_i)$ and $y_{\max}(e) = y_{\min}(e) + H_{(\alpha,\beta)}$.
    \item We next traverse $\mathcal{T}$ top-down. First, we set $x_{\min}((s,t)) = 0$ and $x_{\max}((s,t)) = L_{(\rho,\sigma)}= L$. Let $(\alpha,\beta)$ be an edge of $\mathcal{T}$ that is dual to an edge $e$ of $G$, where $\alpha$ is the parent of $\beta$. Assume that $x_{\min}(e)$ and $x_{\max}(e)$ have been already set. Let $\gamma_1,\dots,\gamma_k$ be the children of $\beta$ in $\mathcal{T}$ and let $e_1,\dots,e_k$ be the edges of $G$ that are dual to the edges $(\beta,\gamma_1),\dots,(\beta,\gamma_k)$ of $\mathcal{T}$, respectively. Note that the edges $e,e_1,\dots,e_k$ define a cycle $\mathcal{C}_e$ in $G$; assume, w.l.o.g.\ up to a relabeling, that $e,e_1,\dots,e_k$ appear in this order when traversing $\mathcal{C}_e$ in counter-clockwise direction in $\mathcal{O}_G$. We assign $x_{\min}(e_1) = x_{\min}(e)$ and $x_{\max}(e_1) = x_{\min}(e_1)+L_{(\beta,\gamma_1)}$. For $i=2,\dots,k$, we assign $x_{\min}(e_i) = x_{\max}(e_{i-1})$ and $x_{\max}(e_i) = x_{\min}(e_i)+L_{(\beta,\gamma_i)}$. We will later refer to $e_1$ and $e_k$ as to the \emph{first edge} and \emph{last edge} of $\mathcal{C}_e$, while $e$ is the \emph{top edge} of $\mathcal{C}_e$.
\end{itemize}

We assign $x(s)=0$, $x(t)=L$, and, for each vertex $u\notin \{s,t\}$, we let $x(u)= x_{\min}((u,v))$, where $v$ is the vertex that immediately follows $u$ in $\mathcal L$.


Now we prove the correctness of the above-described algorithm. Conditions~2(a), 2(c), and~2(d) directly follow by construction. 

We now prove Condition~(1); refer to Fig.~\ref{fig:2d-correctness}. Consider any two vertices $u$ and $v$ such that $u$ comes right before $v$ in $\mathcal L$. In order to prove that Condition~(1) is satisfied, it suffices to prove that $x(u)<x(v)$. This is trivially true if $v=t$, as $x(v)=L$, while the $x$-coordinate of every vertex of $G$ different from $t$ is strictly smaller than $t$. Assume hence that $v\neq t$ and let $w$ be the vertex that immediately follows $v$ in $\mathcal L$. Let $e=(u,v)$ and $e'=(v,w)$. By construction, we have $x(u) = x_{\min}(e)$, $x_{\max}(e) > x_{\min}(e)$, and $x(v) = x_{\min}(e')$. Thus, it suffices to prove that $x_{\min}(e')=x_{\max}(e)$. Let $\alpha$ and $\beta$ be the leaves of $\mathcal T$ whose incident edges are dual to $e$ and $e'$, respectively. Let $\chi$ be the lowest common ancestor of $\alpha$ and $\beta$ in $\mathcal{T}$. Consider the paths $\chi=\alpha_1,\alpha_2,\dots,\alpha_h=\alpha$ and $\chi=\beta_1,\beta_2,\dots,\beta_k=\beta$. For $i=1,\dots,h-1$, let $e_i$ be the edge of $G$ dual to $(\alpha_i,\alpha_{i+1})$; further, for $i=1,\dots,k-1$, let $e'_i$ be the edge of $G$ dual to $(\beta_i,\beta_{i+1})$. We have that $e_1$ and $e'_1$ are incident to an internal face $f$ of $\mathcal O_G$, namely the face that is dual to $\chi$; let $\mathcal{C}_f$ be the cycle delimiting $f$. Then $e_1$ appears right before $e'_1$ when traversing $\mathcal{C}_f$ in counter-clockwise direction, and none of $e_1$ and $e'_1$ is the top edge of $\mathcal{C}_f$. Hence, by construction, we have $x_{\min}(e'_1) = x_{\max}(e_1)$.  For $i=2,\dots,h-1$, the edges $e_{i-1}$ and $e_i$ are incident to an internal face $f_i$ of $\mathcal O_G$, namely the face that is dual to $\beta_i$; let $\mathcal{C}_i$ be the cycle delimiting $f_i$. Then $e_{i-1}$ and $e_i$ are the top edge and the last edge of $\mathcal{C}_i$, respectively. By construction, we have $x_{\max}(e_i)=x_{\max}(e_{i-1})$. Analogously, for $i=2,\dots,k-1$, the edges $e'_{i-1}$ and $e'_i$ are incident to an internal face $f'_i$ of $\mathcal O_G$, namely the face that is dual to $\beta_i$; let $\mathcal{C}'_i$ be the cycle delimiting $f'_i$. Then $e'_{i-1}$ and $e'_i$ are the top edge and the first edge of $\mathcal{C}'_i$, respectively. By construction, we have $x_{\min}(e_i)=x_{\min}(e_{i-1})$. It follows that $x_{\min}(e'=e'_k)=x_{\max}(e=e_h)$, as required.

Finally, we prove Condition~2(b). Consider any edge $e=(u,v)$, where $u\prec_{\mathcal L} v$. We prove that $x_{\min}(e) = x(u)$, as the proof that $x_{\max}(e) = x(v)$ is analogous. Let $e_1=e$ and let $(\alpha_1,\alpha_2)$ be the edge of $\mathcal T$ that is dual to $e_1$. Suppose that, for some integer $m\geq 2$, a path $(\alpha_1,\dots,\alpha_m)$ has been defined, where $\alpha_i$ is the parent of $\alpha_{i+1}$ and the edge $(\alpha_i,\alpha_{i+1})$ is dual to an edge $e_i=(u,v_i)$ with $u\prec_{\mathcal L} v_i$, for $i=1,\dots,m-1$, such that $x_{\min}(e=e_1)=\dots=x_{\min}(e_{m-1})$; initially this is the case with $m=2$ and $v_1=v$.  

\begin{itemize}
    \item If $\alpha_m$ is not a leaf, then $e_{m-1}$ is the top edge of the internal face $f_m$ of $\mathcal O_G$ that is dual to $\alpha_m$. We let $e_m=(u,v_m)$ be the first edge of the cycle delimiting $f_m$; by construction, this implies that $x_{\min}(e_{m-1})=x_{\min}(e_{m})$ and that $u\prec_{\mathcal L} v_m$. Moreover, we define $(\alpha_m,\alpha_{m+1})$ as the edge dual to $e_m$.
    \item If $\alpha_m$ is a leaf, then $e_{m-1}$ is incident to the outer face of $\mathcal O_G$, and by construction we have $x(u)=x_{\min}(e_m)=x_{\min}(e=e_1)$.
\end{itemize}

This concludes the proof of the theorem.
\end{proof}
}

\begin{theorem}\label{th:2d-book-outerplanar}
For any constant $\varepsilon>0$, every $n$-vertex weighted outerplanar graph $G=(V,E,\omega)$ admits a two-dimensional book-embedding whose area is smaller than or equal to $\sum_{e \in E}\omega(e) + \varepsilon$. Such an embedding can be constructed in $O(n)$ time.
\end{theorem}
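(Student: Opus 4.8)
The plan is to reduce the general case to the biconnected case already settled in Theorem~\ref{th:2d-book-outerplanar-biconnected} by a weight-augmentation argument. First I would augment $G$ to a maximal outerplanar graph $G'=(V,E',\omega')$ on the \emph{same} vertex set, where $E\subseteq E'$, the function $\omega'$ extends $\omega$, and the edges in $E'\setminus E$ receive positive weights summing to at most $\varepsilon$. Since a maximal outerplanar graph on $n\ge 3$ vertices is biconnected, Theorem~\ref{th:2d-book-outerplanar-biconnected} applies to $G'$ and yields a two-dimensional book-embedding $\Gamma'$ of $G'$ of area exactly $\sum_{e\in E'}\omega'(e)=\sum_{e\in E}\omega(e)+\sum_{e\in E'\setminus E}\omega'(e)\le \sum_{e\in E}\omega(e)+\varepsilon$, where I am free to fix the first and last vertex and the aspect ratio $L\times H$ of the bounding box. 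I would then obtain a two-dimensional book-embedding of $G$ by restricting $\Gamma'$ to the edges of $G$.

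To carry out the augmentation in $O(n)$ time, I would fix a circular order of $V$ with respect to which the edges of $G$ are pairwise non-crossing chords; such an order is given by the clockwise order of the vertices along the outer face of an outerplane embedding of $G$ (for a disconnected $G$, I place the connected components consecutively, each in its own outer-face order, so that chords of distinct components occupy disjoint arcs and hence cannot cross). With respect to this order, the edges of $G$ form a non-crossing set of chords of the convex $n$-gon, which can always be completed to a full triangulation of the polygon; the resulting triangulated polygon is the desired maximal outerplanar supergraph $G'$, it contains $G$ as a spanning subgraph, and it can be produced in linear time. Assigning each of the $O(n)$ edges in $E'\setminus E$ a weight of $\varepsilon/|E'\setminus E|$ keeps the total added weight at most $\varepsilon$. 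The degenerate cases $n\le 2$, as well as isolated vertices (which contribute no area), are handled directly.

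To restrict $\Gamma'$ to $G$, I would keep every vertex at its $x$-coordinate in $\Gamma'$ and discard the rectangles of the edges in $E'\setminus E$, so that Condition~1 and Conditions~2(a)--2(c) are immediately inherited by the edges of $G$ (their widths and endpoints are unchanged). Condition~2(d), however, demands the equality $y_{\min}(e)=\max_i y_{\max}(e_i)$ taken over the edges $e_i$ \emph{of $G$} nested into $e$, and after the deletion this maximum can only decrease; I would therefore recompute the $y$-coordinates bottom-up in the nesting order, lowering each rectangle until Condition~2(d) holds with equality. This re-derivation assigns to each edge the canonical vertical position forced by $\mathcal{L}$ and by the $x$-coordinates, and hence produces a valid two-dimensional book-embedding of $G$ (internal disjointness of the rectangles then follows exactly as in the proof that the rectangles of a two-dimensional book-embedding are internally disjoint). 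Crucially, no coordinate ever increases during this process: the bounding box keeps width $L$ and its height does not exceed $H$, so the area is at most $L\times H\le \sum_{e\in E}\omega(e)+\varepsilon$, as required, and the whole procedure runs in $O(n)$ time.

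The main obstacle is the justification that the chord-completion always yields a biconnected (maximal) outerplanar graph containing $G$ as a spanning subgraph, in a way that is uniform across disconnected inputs, cut vertices, and the small cases, together with the verification that the downward adjustment enforcing Condition~2(d) genuinely restores a two-dimensional book-embedding of $G$ without ever enlarging the bounding box.
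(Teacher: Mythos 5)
Your proposal is correct and follows the same high-level reduction as the paper — augment $G$ with cheap dummy edges to a biconnected weighted outerplanar supergraph $G'$, apply Theorem~\ref{th:2d-book-outerplanar-biconnected} to $G'$, then delete the dummy rectangles — but the two arguments diverge exactly at the step you flag as the main obstacle, the restoration of Condition~2(d). The paper augments more lightly than you do (it adds only a Hamiltonian cycle: at most $n$ edges of weight $\varepsilon/n$ between consecutive vertices of a $1$-page book-embedding plus the first--last edge, rather than a full triangulation to a maximal outerplanar graph), and it then changes \emph{no} coordinates after deletion: it invokes Property~\ref{pro:no-hole}, which says that in the area-optimal embedding every edge $e$ has $y_{\min}(e)=y_{\max}(e_i)$ for \emph{all} edges $e_i$ it directly wraps around, and uses connectivity of $G$ to argue that at least one such $e_i$ survives the deletion, so the maximum in Condition~2(d) is unchanged; for disconnected inputs it must additionally choose $\mathcal L$ so that no vertex of one component lies under an edge of another. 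Your bottom-up lowering of rectangles replaces this static argument with a dynamic repair, and your monotonicity claim (each recomputed $y_{\min}$ is the maximum over a subset of values that have themselves only decreased, so no coordinate ever increases and the bounding box stays within $L\times H$) is a sound justification; internal disjointness then follows from the recomputed Condition~2(d) exactly as in the paper's disjointness property. What each approach buys: the paper's version is coordinate-free after deletion and exposes the no-hole structure of its optimal embeddings, but needs connectivity and special treatment of disconnected graphs; your version needs neither Property~\ref{pro:no-hole} nor connectivity and handles disconnected inputs uniformly, at the modest cost of an extra $O(n)$ recomputation pass (and your heavier triangulation-based augmentation is harmless but unnecessary, since biconnectivity — a spanning cycle — is all that Theorem~\ref{th:2d-book-outerplanar-biconnected} requires).
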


\begin{proof}
If $G$ is biconnected, then it suffices to apply Theorem~\ref{th:2d-book-outerplanar-biconnected} with arbitrary positive values for $L$ and $H$ such that $L\times H=\sum_{e\in E}\omega(e)$, and with $s$ and $t$ as any two vertices that are consecutive in the clockwise order of the vertices of $G$ along the outer face of the unique outerplane embedding of $G$. 

If $G$ is connected, but not biconnected, we augment $G$ to a weighted biconnected outerplanar graph $G'$, by adding at most $n$ dummy edges of weight $\varepsilon/n$ to it; then we construct a two-dimensional book-embedding of $G'$, and finally we remove the rectangles corresponding to dummy edges. 

More formally, we start by computing a $1$-page book-embedding $\mathcal{L}$ of $G$; this can be done in $O(n)$ time~\cite{d-iroga-07,m-laarogmog-79,w-rolt-87}. 

We then augment $G=(V,E,\omega)$ to a weighted biconnected outerplanar graph $G'=(V,E',\omega')$; this can be done in $O(n)$ time as follows. First, we initialize $G'$ to $G$. Second, we add to $G'$ an edge of weight $\varepsilon/n$ between any two vertices of $G$ that are consecutive in $\mathcal{L}$, if such an edge is not already in $G$. Third, we add to $G'$ an edge of weight $\varepsilon/n$ between the first vertex $s$ and the last vertex $t$ of $\mathcal{L}$, if such an edge is not already in $G$. This augmentation guarantees the outerplanarity of $G'$; note that the number $n'$ of dummy edges that are added to $G$ in order to obtain $G'$ is smaller than or equal to $n$. Also, $G'$ has a cycle connecting all its vertices and is, hence, biconnected.  

We apply Theorem~\ref{th:2d-book-outerplanar-biconnected} to $G'$ with arbitrary positive values for $L$ and $H$ such that $L\times H=\sum_{e\in E'}\omega(e) + n'\varepsilon/n$. We thus obtain a two-dimensional book-embedding $\Gamma'$ of $G'$. Finally, we remove from $\Gamma'$ each rectangle $\mathcal{R}(e)$ corresponding to a dummy edge $e$, thus obtaining a drawing $\Gamma$ of $G$. 

We now prove that $\Gamma$ is a two-dimensional book-embedding of $G$. In fact, Conditions~(1), (2)a, (2)b, and (2)c of the definition of two-dimensional book-embedding are satisfied by $\Gamma$ since they are satisfied by $\Gamma'$. As far as Condition~(2)d is concerned, we observe what follows. Consider any edge $e$ of $G$; let $e_1, \dots, e_k$ be the edges $e$ directly wraps around; further, let $\mathcal{R}(e_1), \dots, \mathcal{R}(e_k)$ be the rectangles representing $e_1, \dots, e_k$ in $\Gamma'$. By Property~\ref{pro:no-hole}, we have that $y_{\min}(e)=y_{\max}(e_1)=\dots=y_{\max}(e_k)$.
Since $G$ is connected, at least one of $e_1, \dots, e_k$ belongs to $G$. Hence, at least one of $\mathcal{R}(e_1), \dots, \mathcal{R}(e_k)$ belongs to $\Gamma$, satisfying Condition~(2)d.

By Theorem~\ref{th:2d-book-outerplanar-biconnected}, the area of $\Gamma'$ is $\sum_{e\in E}\omega(e) + n'\varepsilon/n\leq \sum_{e\in E}\omega(e) + \varepsilon$. Since $\Gamma$ only consists the vertices of $G'$ and of some rectangles of $\Gamma'$, its area is at most the one of $\Gamma'$.

Finally, we observe that a reduction from the case in which $G$ is not connected to the one in which $G$ is connected can be performed analogously as above, by means of the addition of at most $n$ dummy edges of weight $\varepsilon/n$. It is necessary for this augmentation that the $1$-page book-embedding $\mathcal{L}$ is chosen so that no vertex of a connected component lies under an edge of a different connected component, so that Condition~(2)d is satisfied by the resulting representation.
\end{proof}


\section{Two-Dimensional Book-Embeddings with Finite Resolution}\label{se:minres}

The algorithms in the proofs of Theorems~\ref{th:2d-book-outerplanar-biconnected} and~\ref{th:2d-book-outerplanar} may produce $2$-dimensional book-embeddings in which the rectangles representing some edges can be arbitrarily small in terms of height or width. This is clearly undesirable for visualization purposes.

Hence, we study two-dimensional book-embeddings that are constrained to adopt a finite resolution rule. A \emph{{\sc minres}-constrained two-dimensional book-embedding} of a weighted outerplanar graph $G=(V,E,\omega)$ is a two-dimensional book-embedding such that:
\begin{enumerate}[(A)]
\item\label{gio-1} For each edge $e$ in $E$, we have that $x_{\max}(e) - x_{\min}(e) \geq 1$.
\item\label{gio-2} For each edge $e$ in $E$, we have that $y_{\max}(e) - y_{\min}(e) \geq 1$.
\item\label{gio-3} For each pair $u,v$ of distinct vertices in $V$, we have that $|x(v)-x(u)| \geq 1$.
\end{enumerate}

A trivial necessary condition for a weighted outerplanar graph to have a {\sc minres}-constrained two-dimensional book-embedding is that all its edges have weight greater than or equal to one. More generally, we have the following characterization. Let $\mathcal{L}$ be a $1$-page book-embedding of a graph $G$ and let $e$ be an edge of $G$. We call \emph{burden} of $e$ in $\mathcal{L}$, and denote it by $\beta(e)$, the number of vertices that lie strictly under $e$ in $\mathcal{L}$.

\begin{theorem}\label{th:minres-characterization}
An $n$-vertex weighted outerplanar  graph $G=(V,E,\omega)$ admits a {\sc minres}-constrained two-dimensional book-embedding if and only if it admits a $1$-page book-embedding $\mathcal{L}$ such that, for each edge $e \in E$, we have that $\omega(e) \geq \beta(e)+1$. 
Also, if a $1$-page book-embedding $\mathcal{L}$ satisfying this condition is given, a {\sc minres}-constrained two-dimensional book-embedding supported by $\mathcal{L}$ can be constructed in $O(n)$ time.
\end{theorem}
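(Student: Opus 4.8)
The plan is to prove the two implications of the equivalence separately and then to turn the sufficiency argument into a linear-time construction.

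\emph{Necessity.} Suppose $\Gamma$ is a {\sc minres}-constrained two-dimensional book-embedding of $G$ and let $\mathcal L$ be its supporting $1$-page book-embedding. Fix an edge $e=(u,v)$ with $u\prec_{\mathcal L} v$. The vertices lying in the closed interval between $u$ and $v$ in $\mathcal L$ are exactly $u$, $v$, and the $\beta(e)$ vertices strictly under $e$, for a total of $\beta(e)+2$ vertices; these are ordered along the $x$-axis and, by Condition~(C), any two consecutive of them have $x$-coordinates differing by at least $1$. Summing over the $\beta(e)+1$ consecutive gaps gives $x_{\max}(e)-x_{\min}(e)=x(v)-x(u)\geq \beta(e)+1$. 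By Condition~(B) we have $y_{\max}(e)-y_{\min}(e)\geq 1$, and by Condition~2(c) the area of $\mathcal R(e)$ equals $\omega(e)$. Hence $\omega(e)=\bigl(x_{\max}(e)-x_{\min}(e)\bigr)\cdot\bigl(y_{\max}(e)-y_{\min}(e)\bigr)\geq(\beta(e)+1)\cdot 1$, as required.

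\emph{Sufficiency and construction.} Conversely, assume we are given a $1$-page book-embedding $\mathcal L$ with $\omega(e)\geq\beta(e)+1$ for every $e\in E$. Place the $i$-th vertex of $\mathcal L$ (for $i=0,1,\dots,n-1$) at the point $(i,0)$; this immediately yields Condition~(1) and Condition~(C), since distinct vertices receive distinct integer $x$-coordinates. For each edge $e=(u,v)$ we set $x_{\min}(e)=x(u)$ and $x_{\max}(e)=x(v)$, giving Condition~2(b); observe that with integer positions we have $\beta(e)=x(v)-x(u)-1$, so the width $x_{\max}(e)-x_{\min}(e)=\beta(e)+1\geq 1$ fulfils Condition~(A). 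We then let the height of $\mathcal R(e)$ be $\omega(e)/(\beta(e)+1)$, which is at least $1$ by hypothesis, giving Condition~(B); the area is then $(\beta(e)+1)\cdot\omega(e)/(\beta(e)+1)=\omega(e)$, giving Condition~2(c).

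It remains to fix the $y$-coordinates so that Condition~2(d) holds with $y_{\min}(e)\geq 0$. To this end I would consider the forest in which the parent of an edge is the (unique) edge that directly wraps around it, and process edges bottom-up: for a leaf edge (no nested edges) set $y_{\min}(e)=0$, and for every other edge set $y_{\min}(e)=\max\{y_{\max}(e_i)\}$ over the edges $e_i$ that $e$ directly wraps around, with $y_{\max}(e)=y_{\min}(e)+\omega(e)/(\beta(e)+1)$ in all cases. A short induction along the nesting order shows that $y_{\max}(e_j)\leq y_{\min}(e)$ whenever $e_j$ is nested (not necessarily directly) into $e$, so taking the maximum over directly-wrapped edges coincides with the maximum over all nested edges required by Condition~2(d); moreover $y_{\min}(e)\geq 0$ throughout. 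For the running time, the burdens $\beta(e)$ are obtained in $O(1)$ each from the integer coordinates, and the nesting forest together with the bottom-up assignment can be produced by a single left-to-right sweep that maintains a stack of currently open edges, all in $O(n)$ time.

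The easy part is the chain of inequalities in both directions, which follows directly from the definitions once the identity $x(v)-x(u)=\beta(e)+1$ for integer placement is noted. The point to be careful about is the $y$-coordinate assignment: one must verify that processing edges in nesting order is well-defined and that the ``direct-wrap'' maximum equals the full nested maximum demanded by Condition~2(d), and that the whole bookkeeping---computing burdens, building the nesting forest, and propagating the $y$-values---can be carried out in linear rather than quadratic time.
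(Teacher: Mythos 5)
Your proof is correct and follows essentially the same route as the paper's: necessity via the observation that Conditions~(1) and the $1$-unit vertex separation force $x_{\max}(e)-x_{\min}(e)\geq\beta(e)+1$, which together with the area and height constraints yields $\omega(e)\geq\beta(e)+1$; sufficiency via integer vertex placement, widths $\beta(e)+1$, heights $\omega(e)/(\beta(e)+1)$, and a bottom-up $y$-assignment along the nesting order. Your explicit reduction of Condition~(2)d to direct wrappers and the stack-based linear-time sweep merely spell out details the paper leaves implicit (``can be easily implemented to run in overall $O(n)$ time''), so the approaches coincide.
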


\begin{proof}
The necessity is easy to prove. In fact, consider a weighted outerplanar graph that, in every $1$-page book-embedding $\mathcal L$, has an edge $e$ such that $\omega(e) < \beta(e) + 1$. By Condition (\ref{gio-3}), in any {\sc minres}-constrained two-dimensional book-embedding supported  by $\mathcal L$, we have that $x_{\max}(e)-x_{\min}(e) \geq \beta(e) + 1$. Hence, we obtain $\omega(e) < \beta(e) + 1 \leq x_{\max}(e)-x_{\min}(e)$.  
Condition (2)a of the definition of two-dimensional book-embedding requires that $(x_{\max}(e)-x_{\min}(e))\times(y_{\max}(e)-y_{\min}(e)) = \omega(e)$. Therefore, we have $(y_{\max}(e)-y_{\min}(e)) = \omega(e)/(x_{\max}(e)-x_{\min}(e)) < (x_{\max}(e)-x_{\min}(e))/(x_{\max}(e)-x_{\min}(e)) = 1$, contradicting Condition~(\ref{gio-2}).

Now we deal with the sufficiency. Namely, suppose that $G$ admits a $1$-page book-embedding $\mathcal{L}$ such that, for each edge $e \in E$, we have that $\omega(e) \geq \beta(e)+1$. We construct a {\sc minres}-constrained two-dimensional book-embedding $\Gamma$ for $G$ as follows.

Let $\mathcal{L} = (v_1,v_2, \dots, v_n)$. For $i=1,\dots,n$, we set $x(v_i)=i$ and $y(v_i)=0$, so that Condition~(1) of the definition of two-dimensional book-embedding and Condition~(\ref{gio-3}) of the definition of {\sc minres}-constrained two-dimensional book-embedding are satisfied. We also assign, for every edge $e=(u,v)\in E$ such that $u\prec_{\mathcal L} v$, the value $x_{\min}(e)=x(u)$ and $x_{\max}(e)=x(v)$ to the rectangle $\mathcal{R}(e)$ representing $e$ in $\Gamma$, so that Condition~(\ref{gio-1}) of the definition of {\sc minres}-constrained two-dimensional book-embedding and Condition~(2)b of the definition of two-dimensional book-embedding are satisfied.

We now assign values $y_{\min}(e)$ and $y_{\max}(e)$ to the rectangle $\mathcal{R}(e)$ representing each edge $e$.
If $e$ is such that there is no edge $e'$ with $e' \isnested e$, we set $y_{\min}(e)=0$ and $y_{\max}(e)= \omega(e)/(x_{\max}(e)-x_{\min}(e))$. 
Otherwise, we assign $y_{\min}(e)$ and $y_{\max}(e)$ to an edge $e$ only after assigning $y_{\min}(e')$ and $y_{\max}(e')$ to all edges $e'$ such that $e' \isnested e$. Then we set $y_{\min}(e) = \max_{e' \isnested e} y_{\max}(e')$ and $y_{\max}(e)= y_{\min}(e)+\omega(e)/(x_{\max}(e)-x_{\min}(e))$. In this way we satisfy Conditions~(2)c and~(2)d of the definition of two-dimensional book-embedding.

Since by hypothesis $\omega(e) \geq \beta(e)+1$ and since by construction $\beta(e) + 1 = x_{\max}(e) - x_{\min}(e)$, we have that $y_{\max}(e) - y_{\min}(e) = \omega(e)/(x_{\max}(e) - x_{\min}(e)) = \omega(e)/(\beta(e) + 1) \geq (\beta(e) + 1)/(\beta(e) + 1) = 1$, satisfying Condition~(\ref{gio-2}) of the definition of {\sc minres}-constrained two-dimensional book-embedding. 

The described construction can be easily implemented to run in overall $O(n)$ time.
\end{proof}



A $1$-page book-embedding with the properties in the statement of Theorem~\ref{th:minres-characterization} is said to be \emph{supporting} a \textsc{minres}-constrained representation or, that it is a \emph{\textsc{minres}-supporting embedding}.

A first algorithmic contribution in the direction of testing whether an outerplanar graph has a {\sc minres}-constrained two-dimensional book-embedding is given in the following lemma.

\begin{lemma}\label{le:linear-minres-outerplanar-biconnected-algo}
Let $G=(V,E,\omega)$ be an $n$-vertex weighted biconnected outerplanar graph and let $(s,t)  \in E$ be a prescribed edge. There exists an $O(n)$-time algorithm that tests whether $G$ admits a {\sc minres}-constrained two-dimensional book-embedding in which $s$ and $t$ are the first and the last vertex of the supporting $1$-page book-embedding, respectively. In the positive case, such a representation can be constructed in $O(n)$ time.
\end{lemma}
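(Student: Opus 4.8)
The plan is to reduce the problem to the combinatorial characterization of Theorem~\ref{th:minres-characterization}, applied to a single, essentially unique candidate book-embedding. By that theorem, $G$ admits a {\sc minres}-constrained two-dimensional book-embedding supported by a $1$-page book-embedding $\mathcal{L}$ if and only if $\omega(e) \geq \beta(e)+1$ holds for every edge $e \in E$, and in the affirmative case the representation is produced in $O(n)$ time. Hence it suffices to construct the relevant $\mathcal{L}$ and to verify this inequality edge by edge.

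First I would compute, in $O(n)$ time via~\cite{d-iroga-07,m-laarogmog-79,w-rolt-87}, the $1$-page book-embedding $\mathcal{L}$ of $G$ in which $(s,t)$ wraps around every other edge, i.e.\ in which $s$ and $t$ are the first and the last vertex. Since $G$ is biconnected outerplanar, such an embedding is unique up to a flip, and it exists precisely when $(s,t)$ is incident to the outer face of the (unique) outerplane embedding of $G$; if no such embedding exists, then no $1$-page book-embedding has $s$ first and $t$ last, and the algorithm reports that no constrained {\sc minres} representation exists. This degenerate case, in which $(s,t)$ is a chord, is the one mild subtlety to handle, and it must be detected before running the edgewise test.

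Next I would observe that flipping $\mathcal{L}$ reverses the linear order but leaves the set of vertices lying strictly under each edge unchanged, so $\beta(e)$, and hence the condition $\omega(e)\geq\beta(e)+1$, does not depend on the chosen flip. I would then record the position $\pi(v_i)=i$ of each vertex in $\mathcal{L}$ in $O(n)$ time; for every edge $e=(u,v)$ the burden is simply $\beta(e)=|\pi(u)-\pi(v)|-1$, evaluated in constant time, because the vertices strictly under $e$ are exactly those occupying the intermediate positions. As $G$ is outerplanar it has $O(n)$ edges, so all inequalities $\omega(e)\geq\beta(e)+1$ are checked in total $O(n)$ time.

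Finally, if all checks succeed I would invoke the $O(n)$-time construction in the proof of Theorem~\ref{th:minres-characterization} to output the representation supported by $\mathcal{L}$; if some check fails, the necessity direction of that characterization, together with the uniqueness (up to flip) of $\mathcal{L}$, guarantees that no {\sc minres}-constrained two-dimensional book-embedding with $s$ first and $t$ last exists. I do not expect a genuinely hard step here: the key points are that fixing $s$ and $t$ as the extreme vertices pins $\mathcal{L}$ down up to a flip, that the flip is irrelevant to the burden, and that each burden is read off in constant time from the vertex positions, which is exactly what keeps the total running time linear.
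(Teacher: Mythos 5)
Your proposal is correct and follows essentially the same route as the paper's proof: test whether $(s,t)$ is incident to the outer face of the unique outerplane embedding, construct the unique $1$-page book-embedding with $s$ first and $t$ last, compute the burden of every edge, and invoke Theorem~\ref{th:minres-characterization} both for the failure conclusion and for the $O(n)$-time construction. The only (harmless) deviation is that you compute $\beta(e)=|\pi(u)-\pi(v)|-1$ directly from vertex positions in $O(1)$ per edge, whereas the paper obtains the burdens via a bottom-up traversal of the extended dual tree; both variants run in $O(n)$ total time and your formula is a valid simplification, since the vertices strictly under an edge are exactly those at the intermediate positions.
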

\begin{proof}
First, we determine in $O(n)$ time the unique outerplane embedding of $G$, up to a flip, and verify whether the edge $(s,t)$ is incident to the outer face of it. In the negative case, we conclude that $G$ does not admit the required {\sc minres}-constrained two-dimensional book-embedding. In the positive case, we construct in $O(n)$ time the $1$-page book-embedding $\mathcal{L}$ such that $s$ and $t$ are the first and the last vertex of $\mathcal{L}$, respectively; note that $(s,t) \wraps e$, for each $e \in E$ such that $e \neq (s,t)$. 

It remains to test whether $\mathcal{L}$ is a \textsc{minres}-supporting embedding.
We construct in $O(n)$ time the extended dual tree $\mathcal{T}$ of the outerplane embedding of $G$. We root $\mathcal{T}$ at the leaf $r$ such that the edge of $\mathcal{T}$ incident to $r$ is dual to $(s,t)$. 
We perform a bottom-up visit of  $\mathcal{T}$ in $O(n)$ time. During this visit, we compute, for each edge $(\alpha,\gamma)$ of $\mathcal{T}$, the burden $\beta(e)$ of $e$ in $\mathcal L$, where $e$ is the edge that is dual to $(\alpha,\gamma)$; this is done as follows. Assume, w.l.o.g., that $\gamma$ is the child of $\alpha$ in $\mathcal T$. If $\gamma$ is a leaf, then we set $\beta(e)= 0$. Otherwise, let $e_1, \dots, e_h$ be the edges of $G$ that are dual to the edges from $\gamma$ to its children in $\mathcal T$; then we set $\beta(e) = h-1+\sum_{i=1,\dots,h} \beta(e_i)$.

We now check in total $O(n)$ time whether $\omega(e) \geq \beta(e) + 1$ for each edge $e \in E$. By Theorem~\ref{th:minres-characterization}, if one of these checks fails, then a {\sc minres}-constrained two-dimensional book-embedding in which $s$ and $t$ are respectively the first and the last vertex of the supporting $1$-page book-embedding does not exist. Otherwise, by means of the same theorem, we construct such a representation in $O(n)$ time. 
\end{proof}

The rest of this section is devoted to a proof of the following theorem.

\begin{theorem}\label{th:minres-outerplanar}
Let $G=(V,E,\omega)$ be an $n$-vertex weighted outerplanar graph. There exists an $O(n^4)$-time algorithm that tests whether $G$ admits a {\sc minres}-constrained two-dimensional book-embedding and, in the positive case, constructs such an embedding.  
\end{theorem}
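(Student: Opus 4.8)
The plan is to reduce the problem, via Theorem~\ref{th:minres-characterization}, to testing whether $G$ admits a \emph{\textsc{minres}-supporting} $1$-page book-embedding, i.e.\ a linear order $\mathcal L$ with $\omega(e)\ge \beta(e)+1$ for every edge $e$. As in the previous sections, the disconnected case reduces to the connected one (a disconnected graph admits such an embedding if and only if each component does, placing the components side by side), and the biconnected case is handled by Lemma~\ref{le:linear-minres-outerplanar-biconnected-algo}: trying each of the $O(n)$ outer edges as the prescribed edge $(s,t)$ tests all biconnected instances in $O(n^2)$ time. The substantial work is the connected, non-biconnected case, which I would again attack by a bottom-up dynamic program over the block-cut-vertex tree $T$, in the spirit of {\sc sum-be-drawer}.

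The main obstacle is that the \emph{extreme-parent property} --- on which the {\sc max}- and {\sc sum}-algorithms crucially rely --- fails for \textsc{minres}. Indeed, a whole subtree may legitimately be nested (``swallowed'') under a single edge $e'$ of another block, provided only that $\omega(e')$ is large enough to absorb the extra burden; hence a cut-vertex need not be an endpoint of the embedding of its subtree. To restore the property I would \emph{guess the global first vertex} $s$ of $\mathcal L$, i.e.\ iterate over all $O(n)$ choices of the vertex placed leftmost, and root $T$ at the block $b^*$ containing $s$. With this rooting the property is recovered: if, for some $b\ne b^*$ with parent cut-vertex $c$, an edge $e'=(p,q)$ of $G^+(b)$ wrapped around $c$, then the connected remainder $R$ of $G$ --- which meets $G^+(b)$ only at $c$ and contains $s$ --- would have to lie entirely under $e'$, forcing $p\prec_{\mathcal L} s$ and contradicting that $s$ is the first vertex. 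Thus in every relevant embedding $c$ is the first or the last vertex of $G^+(b)$, exactly the structural handle needed for the decomposition.

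For each of the $O(n)$ guesses I would then run a Pareto-style dynamic program analogous to the one behind Theorem~\ref{th:linear-sum-outerplanar}, but with the weight-based extensions replaced by \emph{burden} (vertex-count) quantities. For a subtree $G^+(c_i)$ rooted at a cut-vertex, the summary kept for the parent is the set of achievable pairs (number of vertices placed to the left of $c_i$, number placed to the right) arising from \textsc{minres}-supporting embeddings in which $c_i$ is visible; an analogue of Lemma~\ref{le:number-of-orderings-C} shows that the Pareto-optimal such set has size $O(n)$. When processing a block $G(b)$ --- which, since its parent cut-vertex is an endpoint, has only $O(1)$ candidate embeddings up to a flip --- I would insert the hanging subtrees to the left or to the right of each incident cut-vertex, checking that every block edge $e$ still satisfies the capacity constraint $\beta(e)\le \omega(e)-1$ as its burden grows by the vertex-counts of the nested subtrees. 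This is the exact counterpart of the ``fits on the left/right'' tests of {\sc sum-be-drawer}, with additive vertex counts in place of additive weights.

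I expect the counting setting to be slightly cheaper than the weighted one, because burdens are integers in $[0,n]$ and the merging and polishing of Pareto sets can avoid the weight-sorting that costs a logarithmic factor in {\sc sum-be-drawer}; a single guess should therefore be processable in $O(n^3)$ time, for an overall bound of $O(n^4)$. The two steps I expect to require the most care are (i) proving that guessing the first vertex indeed restores the extreme-parent property and that, conversely, it loses no feasible embedding, and (ii) re-deriving, in the burden setting, the size bound on the Pareto-optimal configuration sets and the correctness of the left/right insertion rule under the capacity constraints --- essentially transcribing Lemmas~\ref{le:number-of-orderings-C}, \ref{le:correctness-C}, and~\ref{le:correctness-B} with counts replacing weights and $\omega(e)-1$ acting as a per-edge capacity. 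Assembling these yields the claimed $O(n^4)$-time test and construction.
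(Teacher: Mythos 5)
Your outline is correct and follows the same architecture as the paper's proof --- reduction via Theorem~\ref{th:minres-characterization}, an $O(n)$-fold outer guess to restore the extreme-parent property, and a Pareto-style bottom-up dynamic program over the block-cut-vertex tree with burdens in place of weights, at $O(n^3)$ per guess --- but it diverges in one genuine way and glosses over three points where ``transcribing'' the {\sc sum} machinery would fail. The divergence: the paper guesses an \emph{edge} $e^*$ constrained not to be nested into any other edge (algorithm {\sc minres-be-drawer}$(e^*)$, run for each of the $O(n)$ edges), whereas you guess the \emph{first vertex} $s$; your argument that this restores the extreme-parent property is sound (the remainder $R$ meets $G^+(b)$ only at $c$, so if an edge $e'$ of $G^+(b)$ wrapped $c$, all of $R$ --- including $s$ --- would lie strictly under $e'$), and it buys nothing and loses nothing complexity-wise, though you should disambiguate ``the block containing $s$'' when $s$ is a cut-vertex. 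The glossed points: (i) at B-nodes the Pareto frontier of Lemma~\ref{le:correctness-B} \emph{collapses} --- the count-analogue of the total extension $\tau$ is the fixed quantity $n^+(b)$, so a single embedding of maximum residual capacity suffices; this collapse, which you do not state, is precisely what makes the {\sc minres} case simpler than {\sc sum}; (ii) the local ``fits on the left/right'' gap tests of {\sc sum-be-drawer} (the variables $\ell(v_i)$) do \emph{not} transcribe, because the capacity $\omega(e)-1$ of a block edge $e$ is consumed jointly by all cut-vertices lying under it; the paper instead performs a global $O(n)$ feasibility re-check per candidate insertion (Lemma~\ref{le:compute-minres}) and processes the children $c_h,\dots,c_1$ right-to-left, greedily minimizing $n_\ell$, with correctness proved by the burden-monotonicity induction of Lemma~\ref{le:correctness-B-minres}; your phrase ``checking that every block edge still satisfies the capacity constraint'' is compatible with this, but the greedy's direction and its exchange argument are real content, not a verbatim rewrite; (iii) the sort key for the B-children of a C-node is $r(\mathcal{L}^+(b_i))+n^+(b_i)$ rather than any direct count-analogue of $W(b_i)$ --- it is exactly what makes the exchange work (from $r(\mathcal{L}^+(b_i))>n^+(b_{i+1})$ one must be able to conclude $r(\mathcal{L}^+(b_{i+1}))>n^+(b_i)$), and a literal ``counts replacing weights'' substitution would get it wrong. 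Since you explicitly flagged these as the steps needing care, the proposal stands as a correct plan; filling in (i)--(iii) essentially reconstructs the paper's Lemmata~\ref{le:number-of-orderings-C-minres} and~\ref{le:correctness-B-minres} and yields the claimed $O(n^4)$ bound.
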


We present an algorithm, called {\sc minres-be-drawer}, that tests in $O(n^4)$ time whether $G$ admits a \textsc{minres}-supporting embedding and, in the positive case, constructs such an embedding. Then the statement follows by Theorem~\ref{th:minres-characterization}.

Preliminarly, we compute in $O(n)$ time the block-cut-vertex tree $T$ of $G$~\cite{h-gt-69,ht-aeagm-73}. Also, for each B-node $b$ of $T$ we compute the number of vertices $n(b)$ and the unique (up to a flip) outerplane embedding of $G(b)$; this can be done in overall $O(n)$ time.

We present an algorithm, called {\sc minres-be-drawer}$(e^*)$, that tests whether a \textsc{minres}-supporting embedding of $G$ exists with the further constraint that a given edge $e^*$ is not nested into any other edge of $G$. Then {\sc minres-be-drawer} simply calls {\sc minres-be-drawer}$(e^*)$ for each edge $e^*$ of $G$. Hence, the time complexity of {\sc minres-be-drawer} is $O(n)$ times the one of {\sc minres-be-drawer}$(e^*)$.

We root $T$ at the B-node $b^*$ containing $e^*$; then, for every B-node $b$ of $T$ (for every C-node $c$ of $T$), the graph $G^+(b)$ (resp.\ $G^+(c)$) is defined as for {\sc max}- and {\sc sum}-constrained book-embeddings. For every B-node $b$ of $T$ (for every C-node $c$ of $T$), we compute the number of vertices of $G^+(b)$ (resp.\ of $G^+(c)$) and denote it by $n^+(b)$ (resp.\ by $n^+(c)$); this can be done in total $O(n)$ time by means of a bottom-up traversal of $T$. 

Let $e^*=(u,v)$. By means of Lemma~\ref{le:linear-minres-outerplanar-biconnected-algo}, we check in $O(n(b^*))$ time whether $G(b^*)$ admits a \textsc{minres}-supporting embedding $\mathcal{L}(b^*,e^*)$ in which $u$ and $v$ are the first and the last vertex, respectively. If yes, we store $\mathcal{L}(b^*,e^*)$. If not, then {\sc minres-be-drawer}$(e^*)$ concludes that $G$ admits no \textsc{minres}-supporting embedding in which $e^*$ is not nested into any other edge of $G$ (Failure Condition~$1$); the correctness of this conclusion descends from considerations analogous to those in the proof of Lemma~\ref{le:failure-linear-max-outeplanar}.

We visit $T$ in arbitrary order. For each B-node $b \neq b^*$, {\sc minres-be-drawer}$(e^*)$ performs the following checks and computations. Let $c$ be the C-node that is the parent of $b$ in $T$. Let $(c,x)$ and $(c,y)$ be the two (not necessarily distinct) edges incident to $c$ that lie on the outer face of the outerplane embedding of $G(b)$. We check whether $G(b)$ admits a {\sc minres}-supporting embedding $\mathcal{L}(b,(c,x))$ in which $c$ and $x$ are the first and the last vertex, respectively. If yes, we store $\mathcal{L}(b,(c,x))$. Then we do an analogous check for the edge $(c,y)$, possibly storing $\mathcal{L}(b,(c,y))$. By Lemma~\ref{le:linear-minres-outerplanar-biconnected-algo}, this can be done in $O(n(b))$ time. Hence, these checks require overall $O(n)$ time.
If both the test for the edge $(c,x)$ and the test for the edge $(c,y)$ fail, then {\sc minres-be-drawer}$(e^*)$ concludes that $G$ admits no \textsc{minres}-supporting embedding in which $e^*$ is not nested into any other edge of $G$ (Failure Condition~$2$); the correctness of this conclusion descends from considerations analogous to those in the proof of Lemma~\ref{le:failure-linear-max-outeplanar}.

We introduce some definitions. Let $\mathcal{L}=(v_0,v_1,\dots,v_h)$ be a $1$-page book-embedding of a connected graph $H$ and let $v_i$ be a vertex that is visible in $\mathcal{L}$. We denote by $n_{\ell}(v_i,\mathcal{L})$ and $n_r(v_i,\mathcal{L})$ the number of vertices to the left and to right of $v_i$ in $\mathcal{L}$, respectively (that is, $n_{\ell}(v_i,\mathcal{L})=i$ and $n_r(v_i,\mathcal{L})=h-i$). For each vertex $v_i$, we define a value $r(v_i)$, which is called the \emph{right residual capacity} of $v_i$, as follows. Consider the set $E_i$ that contains all the edges $(v_{i'},v_{j'})$ of $H$ such that $i'\leq i$ and $i+1 \leq j'$; that is, $E_i$ consists of the edges $v_i$ lies strictly under and of the edges incident to $v_i$ and to a vertex that follows $v_i$ in $\mathcal L$. We set $r(v_i) = \min_{e \in E_i}(\omega(e)-(\beta(e)+1))$. The \emph{left residual capacity} $\ell(v_i)$ of $v_i$ is defined analogously. By convention, we set $r(v_h)=\ell(v_0)=\infty$. The \emph{residual capacity $r(\mathcal{L})$ of $\mathcal{L}$} is the right residual capacity of $v_0$. Let $\mathcal{L}$ and $\mathcal{L}'$ be two $1$-page book-embeddings of $H$ and $c$ be a vertex that is visible both in $\mathcal{L}$ and in $\mathcal{L}'$. We say that $\mathcal{L}$ and $\mathcal{L}'$ are \emph{left-right equivalent with respect to $c$} if $n_{\ell}(c,\mathcal{L}) = n_{\ell}(c,\mathcal{L}')$. This implies that $n_r(c,\mathcal{L}) = n_r(c,\mathcal{L}')$. 





Algorithm {\sc minres-be-drawer}$(e^*)$ now performs a bottom-up visit of $T$. 

After visiting each C-node $c$, {\sc minres-be-drawer}$(e^*)$ either concludes that $G$ admits no {\sc minres}-supporting  embedding such that $e^*$ is not nested into any edge of $G$, or determines a sequence of {\sc minres}-supporting embeddings $\mathcal{L}^+_1(c), \dots, \mathcal{L}^+_k(c)$ of $G^+(c)$ such~that: 
\begin{enumerate}[(C1)]
    \item\label{gio} for any $i=1,\dots,k$, we have that $c$ is visible in  $\mathcal{L}^+_i(c)$; 
    \item $n_\ell(c,\mathcal{L}^+_1(c))<n_\ell(c,\mathcal{L}^+_2(c))<\dots<n_\ell(c,\mathcal{L}^+_k(c))$; and
    \item for every {\sc minres}-supporting embedding $\mathcal L$ of $G^+(c)$ that respects (C1), there exists an index $i\in \{1,\dots,k\}$ such that $\mathcal{L}^+_i(c)$ is left-right equivalent to $\mathcal{L}$ with respect to $c$.
\end{enumerate}

Note that, by Property (C2), no two {\sc minres}-supporting embeddings among $\mathcal{L}^+_1(c), \dots, \mathcal{L}^+_k(c)$ are left-right equivalent with respect to $c$.

After visiting a B-node $b \neq b^*$, algorithm {\sc minres-be-drawer}$(e^*)$ either concludes that $G$ admits no {\sc minres}-supporting embedding such that $e^*$ is not nested into any edge of $G$, or determines a single {\sc minres}-supporting embedding $\mathcal{L}^+(b)$ of $G^+(b)$ such that:
\begin{enumerate}[(B1)]
    \item the parent $c$ of $b$ in $T$ is the first vertex of $\mathcal{L}^+(b)$; and
    \item $G^+(b)$ admits no {\sc minres}-supporting embedding that respects (B1) and whose residual capacity is larger than the one of $\mathcal{L}^+(b)$.
\end{enumerate}

Restricting the attention to {\sc minres}-supporting embeddings satisfying Condition~(C1) or Condition~(B1) is not a loss of generality, because of the following two lemmata.

\begin{lemma} \label{le:necessity-c1-minres}
	Suppose that $G$ admits a {\sc minres}-supporting embedding $\mathcal L$ such that $e^*$ is not nested into any edge of $G$. Let $c$ be a C-node of $T$ and let $\mathcal L^+(c)$ be the restriction of $\mathcal L$ to the vertices and edges of $G^+(c)$. Then $c$ is visible in $\mathcal L^+(c)$.
\end{lemma}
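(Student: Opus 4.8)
The plan is to follow the proof of Lemma~\ref{le:necessity-c1} almost verbatim, substituting the maximum-weight edge $e_M$ used there by the edge $e^*$, and replacing the weight inequality $\omega(e_M)\geq\omega(e')$ by the present hypothesis that $e^*$ is not nested into any edge of $G$. I would argue by contradiction. First I would assume that $c$ is not visible in $\mathcal L^+(c)$; by the definition of visibility this yields an edge $e'$ of $G^+(c)$ having one end-vertex strictly before $c$ and the other strictly after $c$ in $\mathcal L^+(c)$. Since $\mathcal L^+(c)$ is the restriction of $\mathcal L$ and thus preserves the relative order of the vertices of $G^+(c)$, the same straddling holds in $\mathcal L$, so $c$ lies strictly under $e'$ in $\mathcal L$.

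Next I would build the connecting path exactly as in the cited proof. I would take the path $P$ in $T$ from $c$ to the root $b^*$ and a path $p$ in $G$ from $c$ to an end-vertex of $e^*$ distinct from $c$, all of whose vertices and edges belong to the blocks corresponding to the B-nodes of $P$. Because $b^*$ is the root of $T$ and $c$ is the C-node rooting the subtree defining $G^+(c)$, every block on $P$ other than those in the subtree of $c$ shares with $G^+(c)$ at most the vertex $c$; hence neither $p$ nor $e^*$ contains any vertex of $G^+(c)$ except possibly $c$, and in particular neither contains an end-vertex of $e'$.

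The crux is then to invoke the non-nesting hypothesis in place of the weight inequality. Since $e^*$ is not nested into any edge of $G$, it is in particular not nested into $e'$; as the two end-vertices of $e^*$ are distinct from those of $e'$, this forces at least one end-vertex $w$ of $e^*$ to lie strictly outside the interval spanned by $e'$ in $\mathcal L$, and I would choose $p$ to reach this $w$. Now $p$ joins $c$, which lies strictly inside the span of $e'$, to $w$, which lies strictly outside it, while avoiding both end-vertices of $e'$. Since $\mathcal L$ is a crossing-free $1$-page book-embedding, this is impossible, as some edge of $p$ would have to cross $e'$; this contradiction completes the argument.

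The main obstacle, such as it is, lies only in verifying that the hypothesis ``$e^*$ is not nested into any edge'' correctly substitutes for the maximality of $\omega(e_M)$ used in Lemma~\ref{le:necessity-c1}: there, $\omega(e_M)\geq\omega(e')$ guaranteed that $e_M$ is not nested into $e'$, whereas here the analogous fact is assumed directly, which if anything simplifies matters. Everything else---the tree path $P$, its realization as a path $p$ in $G$, and the final crossing argument in the book-embedding---is identical to the proofs of Lemma~\ref{le:necessity-c1} and of the extreme-parent property in Lemma~\ref{le:failure-linear-max-outeplanar}, and requires no new ideas.
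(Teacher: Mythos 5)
Your proposal is correct and coincides with the approach the paper itself takes: the paper explicitly states that the proof of this lemma follows almost verbatim that of Lemma~\ref{le:necessity-c1}, the only difference being that the non-nesting of $e^*$ is assumed directly rather than derived from the weight inequality $\omega(e_M)\geq\omega(e'_M)$ --- exactly the substitution you identify. Your additional care in picking the end-vertex $w$ of $e^*$ strictly outside the span of $e'$ (and noting that $w$ cannot coincide with an end-vertex of $e'$) merely makes the final crossing argument more explicit than the paper's terse version.
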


\begin{lemma} \label{le:necessity-b1-minres}
	Suppose that $G$ admits a {\sc minres}-supporting embedding $\mathcal L$ such that $e^*$ is not nested into any edge of $G$. Let $b\neq b^*$ be a B-node of $T$ and let $\mathcal L^+(b)$ be the restriction of $\mathcal L$ to the vertices and edges of $G^+(b)$. Then the parent $c$ of $b$ in $T$ is either the first or the last vertex of $\mathcal L^+(b)$.
\end{lemma}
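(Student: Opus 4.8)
The plan is to mirror, almost verbatim, the proof of the \emph{extreme-parent property} carried out inside the proof of Lemma~\ref{le:failure-linear-max-outeplanar}. The only conceptual change is that the distinguished edge $e_M$ of maximum weight, whose special status in the {\sc max}- and {\sc sum}-settings came from the weight ordering, is here replaced by the edge $e^*$. The single property of $e_M$ that the earlier proof actually used is that $e_M$ is \emph{not nested} into the offending edge; in the present setting this is supplied directly by the hypothesis that $e^*$ is not nested into any edge of $G$ in $\mathcal L$.

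First I would assume, for a contradiction, that the parent $c$ of $b$ is neither the first nor the last vertex of $\mathcal L^+(b)$. Since $c$ belongs to exactly one block of $G^+(b)$, namely $G(b)$, it is not a cut-vertex of $G^+(b)$, so $G^+(b)-c$ is connected. As $c$ is interior to $\mathcal L^+(b)$, some vertex of $G^+(b)$ precedes $c$ and some follows $c$ in $\mathcal L$; a path between two such vertices inside $G^+(b)-c$ must, as its vertices pass from positions before $c$ to positions after $c$, contain two consecutive vertices $l,r$ with $l \prec_{\mathcal L} c \prec_{\mathcal L} r$. The edge $e'=(l,r)$ then has $c$ strictly under it.

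Next I would exploit $e^*$ exactly as $e_M$ was exploited before. Let $P$ be the path in $T$ from $c$ to $b^*$, and let $p$ be a path in $G$ whose vertices and edges lie in the blocks corresponding to the B-nodes of $P$ and whose end-vertices are $c$ and an end-vertex of $e^*$. Since $c$ is the cut-vertex parent of $b$ and $b^*$ is the root of $T$, neither $p$ nor $e^*$ meets any vertex of $G^+(b)$ other than possibly $c$; in particular neither $l$ nor $r$ (which lie in $G^+(b)$ and differ from $c$) belongs to $p$ or is an end-vertex of $e^*$. Consider the walk obtained by following $p$ and then the edge $e^*$: it starts at $c$, which lies strictly under $e'$, visits both end-vertices of $e^*$, and avoids $l$ and $r$. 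Because $e^*$ is not nested into $e'$, at least one end-vertex of $e^*$ does not lie strictly under $e'$, so the walk reaches a vertex outside the span of $e'$. A crossing-free walk from a vertex strictly between $l$ and $r$ to a vertex outside the span of $e'$ that avoids $l$ and $r$ must contain an edge whose two end-vertices lie on opposite sides of $e'$, and such an edge crosses $e'$; this contradicts the fact that $\mathcal L$ is a $1$-page book-embedding and proves the statement.

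The crossing argument itself is routine; the step I expect to require the most care is justifying the substitution of $e^*$ for $e_M$. Here there is no weight hierarchy, so the vertex lying outside the span of $e'$ must be produced \emph{solely} from the non-nesting hypothesis on $e^*$, and one must verify that the walk can be closed off through $e^*$ while remaining within the blocks of $P$ together with $G(b^*)$ and avoiding $G^+(b)\setminus\{c\}$. The companion statement, Lemma~\ref{le:necessity-c1-minres}, would follow from the same template applied to $\mathcal L^+(c)$ in place of $\mathcal L^+(b)$, exactly as Lemma~\ref{le:necessity-c1} parallels the extreme-parent property.
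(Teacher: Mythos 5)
Your proposal is correct and takes essentially the same approach as the paper: the paper omits this proof, stating that it follows almost verbatim the extreme-parent-property argument from Lemma~\ref{le:failure-linear-max-outeplanar} (via Lemma~\ref{le:necessity-b1}), with the hypothesis that $e^*$ is not nested into any edge of $G$ playing exactly the role that the maximality of $\omega(e_M)$ played there --- precisely the substitution you identify and justify. Your explicit construction of the edge $e'=(l,r)$ spanning $c$ and the concatenated walk $p\cdot e^*$ forcing a crossing only spells out details the paper leaves implicit.
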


The proofs of Lemmata~\ref{le:necessity-c1-minres} and~\ref{le:necessity-b1-minres} follow almost verbatim the proofs of Lemmata~\ref{le:necessity-c1} and~\ref{le:necessity-b1} and are hence omitted. The only difference is that here $e^*$ is not nested into any edge of $G$ by assumption, while in the proofs of Lemmata~\ref{le:necessity-c1} and~\ref{le:necessity-b1} the edge $e_M$ with maximum weight is not nested into any edge of $G$ by the constraints of a {\sc sum}-constrained book-embedding.

Similarly as for {\sc sum}-constrained book-embeddings, we provide a bound on the number of {\sc minres}-supporting embeddings that are pairwise not left-to-right equivalent.

\begin{lemma} \label{le:number-of-orderings-C-minres}
	Let $H=(V_H,E_H,\omega_H)$ be an $n$-vertex weighted outerplanar graph. For a vertex $c$ of $H$, let $\mathcal S$ be a set of {\sc minres}-supporting embeddings of $H$ such that:
	
	\begin{enumerate}[$(\gamma 1)$]
		\item for each $\mathcal L \in \mathcal S$, we have that $c$ is visible in  $\mathcal{L}$; and
		\item for any $\mathcal L,\mathcal L' \in \mathcal S$, we have that $\mathcal L$ is not left-right equivalent to $\mathcal L'$ with respect to $c$.
	\end{enumerate}
	Then  $\mathcal S$ contains at most $n$ embeddings.
\end{lemma}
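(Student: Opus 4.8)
The plan is to exploit the fact that, differently from the metric notion of left-right equivalence used for {\sc sum}-constrained book-embeddings, the notion of left-right equivalence relevant here is purely combinatorial. By definition, two embeddings $\mathcal L$ and $\mathcal L'$ of $H$ in which $c$ is visible are left-right equivalent with respect to $c$ precisely when $n_\ell(c,\mathcal L)=n_\ell(c,\mathcal L')$. Hence Property~$(\gamma 2)$ is nothing but the statement that the map $\mathcal L \mapsto n_\ell(c,\mathcal L)$ is injective on $\mathcal S$. The first step of the proof is therefore to make this reformulation explicit.

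Next I would observe that, for every $\mathcal L \in \mathcal S$, the integer $n_\ell(c,\mathcal L)$ counts the vertices of $H$ that precede $c$ in $\mathcal L$, and therefore lies in $\{0,1,\dots,n-1\}$, since $H$ has exactly $n$ vertices and $c$ is one of them. Combining this range bound with the injectivity of $\mathcal L \mapsto n_\ell(c,\mathcal L)$ established in the previous step, a pigeonhole argument immediately yields $|\mathcal S|\le n$. The role of Property~$(\gamma 1)$ is only to guarantee that $c$ is visible in each $\mathcal L \in \mathcal S$, which is what makes the quantity $n_\ell(c,\mathcal L)$ — and hence the notion of left-right equivalence itself — well-defined; beyond this, no structural analysis of the embeddings is required.

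I do not expect any genuine obstacle in this proof: the bound follows by direct counting and is considerably simpler than its {\sc sum}-constrained counterpart, namely Lemma~\ref{le:number-of-orderings-C}. There, left-right equivalence was defined through the real-valued extensions $\lambda_{\mathcal L}(c)$ and $\rho_{\mathcal L}(c)$, so bounding the number of pairwise non-equivalent embeddings required the additional work of showing that the attainable values of $\lambda_{\mathcal L}(c)$ are in bijection with a choice of first vertex, of which there are at most $n$. In the present {\sc minres} setting the discretization is already built into the definition of left-right equivalence, so the only thing to verify is that the discrete invariant $n_\ell(c,\mathcal L)$ ranges over an $n$-element set.
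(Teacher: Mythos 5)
Your proposal is correct and follows essentially the same route as the paper's own proof: the paper likewise observes that, by the definition of left-right equivalence in the {\sc minres} setting, no two embeddings in $\mathcal S$ can share the same value of $n_\ell(c,\mathcal L)$, and that this value is an integer in $\{0,\dots,n-1\}$, so $|\mathcal S|\le n$ by counting. Your added remarks --- that $(\gamma 1)$ only serves to make $n_\ell(c,\mathcal L)$ meaningful, and that the argument is simpler than Lemma~\ref{le:number-of-orderings-C} because the discretization is built into the definition --- are accurate but do not change the substance.
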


\begin{proof}
Similarly to the proof of Lemma~\ref{le:number-of-orderings-C},	the statement descends from the following two claims.
	
First, for any value $\lambda\geq 0$, there exists at most one {\sc minres}-supporting embedding $\mathcal L \in \mathcal S$ such that $n_{\ell}(c,\mathcal{L})=\lambda$. Namely, if $\mathcal S$ contains two {\sc minres}-supporting embeddings $\mathcal L,\mathcal L'$ with $n_{\ell}(c,\mathcal{L})=n_{\ell}(c,\mathcal{L}')=\lambda$, we have $n_r(c,\mathcal{L})=n-n_{\ell}(c,\mathcal{L})$ and $n_r(c,\mathcal{L}')=n-n_{\ell}(c,\mathcal{L}')$, hence $n_r(c,\mathcal{L})=n_r(c,\mathcal{L}')$, which implies that $\mathcal L$ and $\mathcal L'$ are left-right equivalent with respect to $c$; this is not possible, by assumption.

Second, the value $n_{\ell}(c,\mathcal{L})$ for an embedding $\mathcal L \in \mathcal S$ is an integer value in $\{0,\dots,n-1\}$ (namely, it is the number of vertices to the left of $c$ in $\mathcal L$).
\end{proof}

Before describing the algorithm {\sc minres-be-drawer}$(e^*)$, we need the following algorithmic lemma.

\begin{lemma} \label{le:compute-minres}
Let $H=(V_H,E_H,\omega_H)$ be an $n$-vertex weighted outerplanar graph and let $\mathcal L$ be a $1$-page book-embedding of $H$. Then it is possible to determine in $O(n)$ time whether $\mathcal L$ is a {\sc minres}-supporting embedding; further, in the positive case, it is possible to determine in $O(n)$ time the residual capacity of $\mathcal L$.
\end{lemma}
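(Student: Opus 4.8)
The plan is to read off everything directly from the given linear order $\mathcal L$, exploiting the key fact that the burden of an edge is nothing more than the number of vertices lying between its two end-vertices in $\mathcal L$. First I would write $\mathcal L=(v_0,v_1,\dots,v_{n-1})$ and, with a single scan of $\mathcal L$, build an array storing the position $\mathrm{pos}(w)$ of each vertex $w$; this preprocessing takes $O(n)$ time and allows $O(1)$-time position lookups afterwards.

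Next I would compute the burden of every edge. For an edge $e=(u,w)\in E_H$, the vertices lying strictly under $e$ in $\mathcal L$ are exactly those whose positions lie strictly between $\mathrm{pos}(u)$ and $\mathrm{pos}(w)$; hence $\beta(e)=|\mathrm{pos}(u)-\mathrm{pos}(w)|-1$, which is evaluated in $O(1)$ time per edge. Since $H$ is outerplanar it has at most $2n-3$ edges, so all burdens are obtained in $O(n)$ total time. Recalling that $\mathcal L$ is a \textsc{minres}-supporting embedding precisely when $\omega_H(e)\ge \beta(e)+1$ holds for every $e\in E_H$ (Theorem~\ref{th:minres-characterization}), I would test these inequalities one by one, again in $O(n)$ total time, reporting failure as soon as one of them is violated.

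Finally, assuming $\mathcal L$ is a \textsc{minres}-supporting embedding, I would compute its residual capacity $r(\mathcal L)$, which by definition equals the right residual capacity $r(v_0)$. Because $v_0$ is the first vertex of $\mathcal L$, no edge can have $v_0$ strictly under it, so the set $E_0$ appearing in the definition of $r(v_0)$ collapses to the set of edges incident to $v_0$. Therefore $r(\mathcal L)=\min_{e\ni v_0}\bigl(\omega_H(e)-\beta(e)-1\bigr)$, which I would evaluate by scanning the edges incident to $v_0$ in $O(\deg(v_0))=O(n)$ time, reusing the already-computed burdens; note that this quantity is nonnegative once $\mathcal L$ is known to be \textsc{minres}-supporting.

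There is no substantial obstacle in this lemma: the entire content is the observation that, in a linear layout, the burden of an edge is simply the count of intermediate vertices and is thus computable in constant time per edge from the precomputed positions, while outerplanarity keeps the number of edges linear. The only point requiring a moment's care is the reduction of the set $E_0$ to the edges incident to the first vertex, which is exactly what makes the residual-capacity computation run in linear time rather than requiring any global sweep over $\mathcal L$.
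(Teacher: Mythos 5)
Your proof is correct, but it takes a genuinely different and more elementary route than the paper's. The paper computes the burdens by reusing the machinery of Lemma~\ref{le:linear-minres-outerplanar-biconnected-algo}: it handles the biconnected case via a bottom-up traversal of the extended dual tree (using the recursion $\beta(e)=h-1+\sum_i\beta(e_i)$ over the faces), and then reduces the general case to the biconnected one by augmenting $H$ with dummy edges of weight $1$ between consecutive vertices of $\mathcal L$ and one dummy edge of weight $n-1$ between the first and last vertex, verifying that these dummy edges automatically satisfy the constraint $\omega\geq\beta+1$ and excluding them from the residual-capacity minimum. You instead observe that the burden is a purely positional quantity --- since a vertex lies strictly under $e=(u,w)$ exactly when its position is strictly between $\mathrm{pos}(u)$ and $\mathrm{pos}(w)$, one has $\beta(e)=|\mathrm{pos}(u)-\mathrm{pos}(w)|-1$ --- which makes the dual tree, the biconnectivity case split, and the augmentation all unnecessary; the check against Theorem~\ref{th:minres-characterization} and the computation of $r(\mathcal L)=\min_{e\ni v_0}\bigl(\omega_H(e)-\beta(e)-1\bigr)$ (using, as you rightly note, that $E_0$ collapses to the edges incident to the first vertex) then follow in $O(n)$ time from outerplanarity's linear edge bound. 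Your observation is consistent with the paper's own implicit use of positionality (in its augmentation step it asserts that adding edges does not change burdens), so nothing is lost: your argument is shorter and self-contained, while the paper's buys uniformity with the surrounding section by recycling the dual-tree and augmentation techniques already developed for Lemma~\ref{le:linear-minres-outerplanar-biconnected-algo} and Theorem~\ref{th:2d-book-outerplanar}.
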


\begin{proof}
We first discuss the case in which $H$ is biconnected. We compute, for each edge $e\in E_H$, the burden $\beta_H(e)$ of $e$ in $\mathcal L$; this is done in total $O(n)$ time, as described in the proof of Lemma~\ref{le:linear-minres-outerplanar-biconnected-algo}. Then, in order to determine whether $\mathcal L$ is a {\sc minres}-supporting embedding, it suffices to check whether $\omega_H(e) \geq \beta_H(e) + 1$, for each edge $e \in E_H$; this takes $O(1)$ time per edge, and hence $O(n)$ time in total. If $\mathcal L$ is a {\sc minres}-supporting embedding, the residual capacity of $\mathcal L$ is equal to $\min(\omega_H(e) - (\beta_H(e)+1))$, where the minimum is taken over all the edges $e \in E_H$ incident to the first vertex of $\mathcal L$; again, this takes $O(n)$ time in total. 

If $H$ is not biconnected, we augment it to a weighted biconnected outerplanar graph $H'$ in $O(n)$ time, as follows. First, we initialize $H'$ to $H$. Then we add to $H'$ an edge of weight $1$ between any two consecutive vertices of $\mathcal{L}$, if such an edge is not already in $H$. Finally, we add to $H'$ an edge $\overline{e}$ of weight $n-1$ between the first and the last vertex of $\mathcal{L}$, if such an edge is not already in $H$. Let $H'=(V_{H'},E_{H'},\omega_{H'})$. Since $V_{H'}=V_H$, we can define $\mathcal{L}'=\mathcal{L}$ and obtain that $\mathcal{L}'$ is a $1$-page book-embedding of $H'$. As in the proof of Theorem~\ref{th:2d-book-outerplanar}, we have that $H'$ is outerplanar and biconnected. 

We claim that no edge in $E_{H'}\setminus E_{H}$ has a weight smaller than its burden plus one. Namely, consider any edge $e\neq \overline{e}$ in $E_{H'}\setminus E_{H}$; by construction, $\omega_{H'}(e)=1$, while the burden of $e$ in $\mathcal L'$  is $0$, given that $e$ connects two consecutive vertices of $\mathcal L'$. Further, if $\overline{e} \in E_{H'}\setminus E_{H}$, then $\omega_{H'}(\overline{e})=n-1$, while the burden of $\overline{e}$ in $\mathcal L'$ is $n-2$, given that the end-vertices of $e$ are the first and the last vertex of $\mathcal L'$. 

By the above claim and since the weight and the burden of every edge $e\in E_H$ is the same in $\mathcal L$ as in $\mathcal L'$, it follows that $\mathcal{L}$ is a {\sc minres}-supporting embedding if and only if $\mathcal{L}'$ is a {\sc minres}-supporting embedding. Thus, in order to determine whether $\mathcal{L}$ is a {\sc minres}-supporting embedding, it suffices to test whether $\mathcal{L}'$ is a {\sc minres}-supporting embedding. Since $H'$ is biconnected, this can be done in $O(n)$ time as described above; in particular, such a computation determines the burden $\beta_{H'}(e)$ of every edge $e\in E_{H'}$ in $\mathcal L'$. If the test succeeds, in order to compute the residual capacity of $\mathcal L$, it suffices to compute $\min(\omega_{H'}(e) - (\beta_{H'}(e)+1))$, where the minimum is taken over all the edges $e\in E_H$ (hence, the edges in $E_{H'}\setminus E_{H}$ are excluded from this computation) incident to the first vertex of $\mathcal L'$; again, this takes $O(n)$ time in total.  
\end{proof}

We now describe the bottom-up visit of $T$ performed by the algorithm {\sc minres-be-drawer}$(e^*)$.


{\bf Processing a leaf.} Let $b$ be a leaf of $T$. Since the algorithm {\sc minres-be-drawer}$(e^*)$ did not terminate because of Failure Condition~$2$, we stored one or two  {\sc minres}-supporting embeddings of $G^+(b)=G(b)$ in which the parent $c$ of $b$ is the first vertex. For each of such embeddings, say $\mathcal L$, we compute the residual capacity of $\mathcal L$ in $O(n(b))$ time, by Lemma~\ref{le:compute-minres}. 

We now select as $\mathcal{L}^+(b)=\mathcal{L}(b)$ the {\sc minres}-supporting embedding of $G^+(b)=G(b)$ with the largest residual capacity (between the at most two stored embeddings). Hence, the single {\sc minres}-supporting embedding $\mathcal{L}^+(b)$ of $G^+(b)$ satisfies Properties~(B1) and~(B2) and can be constructed in $O(n(b))$ time.

{\bf Processing a C-node.} We process a C-node $c$ of $T$ as follows. Let $b_1, \dots, b_h$ be the B-nodes that are children of $c$ in $T$. Since the algorithm {\sc minres-be-drawer}$(e^*)$ did not terminate when visiting $b_1, \dots, b_h$, we have, for $i=1,\dots,h$, a {\sc minres}-supporting embedding $\mathcal{L}^+(b_i)$ of $G^+(b_i)$ satisfying Properties~(B1)--(B2); further, we assume to have already computed the residual capacity $r(\mathcal{L}^+(b_i))$. We relabel the B-nodes $b_1, \dots, b_h$ in such a way that $r(\mathcal{L}^+(b_i))+n^+(b_i)\leq r(\mathcal{L}^+(b_{i+1}))+n^+(b_{i+1})$, for $i=1,\dots,h-1$; this takes $O(n \log n)$ time. 

We now process the B-nodes $b_1, \dots, b_h$ in this order. When processing $b_i$, we construct a sequence $\mathcal{S}_i$ of at most $n$ {\sc minres}-supporting embeddings of $G^+(b_1) \cup \dots \cup G^+(b_i)$ satisfying Properties~(C1)--(C3). When constructing an ordering $\mathcal L$ in a sequence $\mathcal{S}_i$, we also compute $n_\ell(c,\mathcal{L})$ and $n_r(c,\mathcal{L})$.  We now describe the processing of the nodes $b_1, \dots, b_h$.

When processing $b_1$, we let $\mathcal{S}_1$ consist of $\mathcal{L}^+(b_1)$ and its flip, in this order.

Suppose that, for some $i\in \{2,\dots,h\}$, the B-node $b_{i-1}$ has been processed and that the sequence $\mathcal{S}_{i-1}$ has been constructed. We process $b_i$ as follows. We initialize $\mathcal{S}_i=\emptyset$. We individually consider each of the at most $n$ embeddings in $\mathcal{S}_{i-1}$, say $\mathcal{L}$. We now consider the embedding $\mathcal{L}^+(b_i)$ and we try to combine it with $\mathcal{L}$. This is done as follows. 

\begin{itemize}
	\item If the residual capacity of $\mathcal{L}^+(b_i)$ is larger than $n_r(c,\mathcal{L})$, then we construct a {\sc minres}-supporting embedding $\mathcal L'$ of $G^+(b_1)\cup\dots\cup G^+(b_i)$ by placing the vertices of $\mathcal{L}^+(b_i) \setminus \{c\}$ to the right of $\mathcal{L}$, in the same relative order as they appear in $\mathcal{L}^+(b_i)$; we insert $\mathcal L'$ into $\mathcal{S}_i$ and note that $n_\ell(c,\mathcal{L}')=n_\ell(c,\mathcal{L})$ and that $n_r(c,\mathcal{L}')=n_r(c,\mathcal{L})+n^+(b_i)-1$. 
	\item Analogously, if the residual capacity of $\mathcal{L}^+(b_i)$ is larger than $n_\ell(c,\mathcal{L})$, then we construct a {\sc minres}-supporting embedding $\mathcal L'$ of $G^+(b_1)\cup\dots\cup G^+(b_i)$ by placing the vertices of $\mathcal{L}^+(b_i) \setminus \{c\}$ to the left of $\mathcal{L}$, in the opposite order as they appear in $\mathcal{L}^+(b_i)$; we insert $\mathcal L'$ into $\mathcal{S}_i$ and note that $n_\ell(c,\mathcal{L}')=n_\ell(c,\mathcal{L})+n^+(b_i)-1$ and that $n_r(c,\mathcal{L}')=n_r(c,\mathcal{L})$.
\end{itemize}

After we considered each of the at most $n$ embeddings in $\mathcal{S}_{i-1}$, if $\mathcal{S}_{i}$ is empty then we conclude that $G$ admits no {\sc minres}-supporting embedding such that $e^*$ is not nested into any edge of $G$ (we call this Failure Condition~$3$). Otherwise, we order and polish the sequence $\mathcal{S}_{i}$ by leaving only one copy of left-right equivalent embeddings. This is done in $O(n \log n)$ time as follows. 


Since $|\mathcal{S}_{i-1}|$ is at most $n$, it follows that the cardinality of $\mathcal{S}_{i}$ before the polishing is at most $2n$. We order $\mathcal{S}_{i}$ in $O(n \log n)$ time by increasing value of the number of vertices to the left of $c$. Then we scan $\mathcal{S}_i$; during the scan, we process the elements of $\mathcal{S}_i$ one by one. When we process an element $\mathcal{L}$, we compare $\mathcal{L}$ with its predecessor $\mathcal{L}'$. If $\mathcal{L}$ and $\mathcal{L}'$ are left-right equivalent with respect to $c$, then we remove $\mathcal L$  from $\mathcal{S}_i$. Note that this scan takes $O(n)$ time. After this scan, we have that no two embeddings in $\mathcal{S}_i$ are left-right equivalent with respect to~$c$. By Lemma~\ref{le:number-of-orderings-C-minres}, there are at most $n$ embeddings in $\mathcal{S}_i$. 

This concludes the description of the processing of the B-node $b_i$ and the subsequent construction of the sequence $\mathcal{S}_i$. After processing the last child $b_h$ of $c$, the sequence $\mathcal{S}_h$ contains the required {\sc minres}-supporting embeddings of $G^+(b_1) \cup \dots \cup G^+(b_h) = G^+(c)$. The proof that such a (possibly empty) sequence $\mathcal{S}_h$ satisfies Properties~(C1)--(C3) is similar to the proof of Lemma~\ref{le:correctness-C} and is hence omitted. We only note here that, in a {\sc minres}-supporting embedding of $G^+(c)$ in which $c$ is visible, if $G^+(b_{i+1})$ lies under an edge of $G^+(b_i)$, then $r(\mathcal{L}^+(b_i))>n^+(b_{i+1})$; however, if that is the case, the inequality $r(\mathcal{L}^+(b_i))+n^+(b_i)\leq r(\mathcal{L}^+(b_{i+1}))+n^+(b_{i+1})$ ensures that $r(\mathcal{L}^+(b_{i+1}))>n^+(b_i)$, and hence that $G^+(b_i)$ can lie under an edge of $G^+(b_{i+1})$ as well. This is the core of the argument for proving that choosing the ordering $b_1,\dots,b_h$ for the B-nodes that are children of $c$ does not introduce any loss of generality.

Since we process each B-node $b_i$ that is child of $c$ in $T$ in $O(n\log n)$ time, the overall time needed to process $c$ is $O(h n \log n)$. This sums up to $O(n^2 \log n)$ time over all the C-nodes of $T$.

{\bf Processing an internal B-node different from the root.} We now describe how to process an internal B-node $b\neq b^*$ of $T$. The goal is to either conclude that $G$ admits no {\sc minres}-supporting embedding such that $e^*$ is not nested into any edge of $G$, or to construct a {\sc minres}-supporting embedding $\mathcal{L}^+(b)$ of $G^+(b)$ satisfying Properties (B1)--(B2). In the latter case, we also compute the residual capacity of $\mathcal{L}^+(b)$.

Since {\sc minres-be-drawer}$(e^*)$ did not terminate because of Failure Condition~$2$, we have either one or two {\sc minres}-supporting embeddings of $G(b)$ in which $c$ is the first vertex. We process each embedding $\mathcal L$ of $G(b)$ at our disposal independently, by means of a procedure which tries to extend $\mathcal L$ to an embedding of $G^+(b)$, as described below. If the procedure fails for every {\sc minres}-supporting embedding of $G(b)$ at our disposal, we conclude that $G$ admits no {\sc minres}-supporting embedding such that $e^*$ is not nested into any edge of $G$ (we call this Failure Condition~$4$). If the procedure succeeds in constructing a {\sc minres}-supporting embedding of $G^+(b)$ satisfying Properties (B1)--(B2) for a single {\sc minres}-supporting embedding of $G(b)$, then we retain the computed embedding of $G^+(b)$. Finally, if the procedure succeeds in constructing a {\sc minres}-supporting embedding of $G^+(b)$ satisfying Properties (B1)--(B2) for two {\sc minres}-supporting embeddings of $G(b)$, then we retain the embedding of $G^+(b)$ with the maximum residual capacity.

Let $\mathcal{L}$ be the current embedding of $G(b)$. Let $c_1, \dots, c_h$ be the C-nodes that are children of $b$, labeled in the same order as they appear in $\mathcal{L}$.  Since the algorithm {\sc minres-be-drawer}$(e^*)$ did not terminate when visiting $c_1, \dots, c_h$, we have, for each $i=1,\dots,h$, a sequence $\mathcal{L}^+_1(c_i), \mathcal{L}^+_2(c_i), \dots, \mathcal{L}^+_{k_i}(c_i)$ of {\sc minres}-supporting embeddings of $G^+(c_i)$ satisfying Properties~(C1)--(C3). Further, for $i=1,\dots,h$ and $j=1,\dots,k_i$, we have already computed the values $n_\ell(c_i,\mathcal{L}^+_j(c_i))$ and $n_r(c_i,\mathcal{L}^+_j(c_i))$.

Our strategy is to process the C-nodes that are children of $b$ in the order $c_h,\dots,c_1$ and, for each $i=h,\dots,1$, to choose a {\sc minres}-supporting embedding $\mathcal{L}^+_{j}(c_i)$ of $G^+(c_i)$ in such a way that $n_{\ell}(c_i,\mathcal{L}^+_{j}(c_i))$ is minimum, subject to the constraint that the embedding resulting from the replacement of $c_i$ with $\mathcal{L}^+_{j}(c_i)$ is a {\sc minres}-supporting embedding. We formalize this idea as follows.

We process the C-nodes $c_h,\dots,c_1$ in this order. Before any C-node is processed, we initialize $\mathcal{L}^*_{h+1}:= \mathcal{L}$. Then, for each $i=h,\dots,1$, when processing $c_i$, we try to construct a {\sc minres}-supporting embedding $\mathcal L^*_i$ of $G(b)\cup G^+(c_h)\cup \dots \cup G^+(c_i)$. This is done by trying to insert the {\sc minres}-supporting embedding $\mathcal{L}^+_j(c_i)$ of $G^+(c_i)$ into $\mathcal L^*_{i+1}$, for $i=1,\dots, k_i$. That is, we replace $c_i$ with $\mathcal{L}^+_{j}(c_i)$ in $\mathcal L^*_{i+1}$, and then check whether the resulting embedding is a {\sc minres}-supporting embedding; this can be done in $O(n)$ time by Lemma~\ref{le:compute-minres}. The first time a check succeeds, we stop the computation and set $\mathcal L^*_{i}$ to be the resulting embedding of $G(b)\cup G^+(c_h)\cup \dots \cup G^+(c_i)$. If no check succeeds, then we let the procedure fail for the current embedding $\mathcal L$ of $G(b)$. When $i=1$, if the procedure did not fail, then we constructed a {\sc minres}-supporting embedding; by means of Lemma~\ref{le:compute-minres}, we compute in $O(n)$ time the residual capacity of this embedding. 

We have the following.

\begin{lemma}\label{le:correctness-B-minres}
If {\sc minres-be-drawer}$(e^*)$ constructs an embedding of $G^+(b)$, then this is a {\sc minres}-supporting embedding satisfying Properties~(B1)--(B2). Further, if {\sc minres-be-drawer}$(e^*)$ concludes that $G$ admits no {\sc minres}-supporting embedding such that $e^*$ is not nested into any edge of $G$, then this conclusion is correct.
\end{lemma}

\begin{proof}
We first discuss the case in which the algorithm {\sc minres-be-drawer}$(e^*)$ constructs an embedding $\mathcal L^+(b)$ of $G^+(b)$. Recall that $\mathcal L^+(b)$ is constructed starting from an embedding $\mathcal{L}^*_{h+1}$ of $G(b)$ by replacing, for $i=h,\dots,1$, the vertex $c_i$ with an embedding $\mathcal{L}^+_j(c_i)$ of $G^+(c_i)$ into $\mathcal L^*_{i+1}$ in order to obtain $\mathcal L^*_i$. Since after each of such replacements a check is performed on whether the resulting embedding is a {\sc minres}-supporting embedding, it follows that $\mathcal L^+(b)=\mathcal{L}^*_{1}$ is indeed a {\sc minres}-supporting embedding. 

Since {\sc minres-be-drawer}$(e^*)$ did not terminate because of Failure Condition~2, it follows that the parent $c$ of $b$ is the first vertex of the embedding $\mathcal{L}^*_{h+1}$ of $G(b)$ in $\mathcal L^+(b)$. Further, for $i=h,\dots,1$, the replacement of $c_i$ with an embedding $\mathcal{L}^+_j(c_i)$ of $G^+(c_i)$ into $\mathcal L^*_{i+1}$ does not change the first vertex of the embedding, given that $c\neq c_i$; it follows that $c$ is the first vertex of $\mathcal L^+(b)$ as well, hence $\mathcal L^+(b)$ satisfies Property (B1).

We now prove that $\mathcal L^+(b)$ satisfies Property~(B2). Suppose, for a contradiction, that there exists a {\sc minres}-supporting embedding $\mathcal L^{\diamond}$ of $G^+(b)$ satisfying Property~(B1) whose residual capacity is larger than the one of $\mathcal L^+(b)$. Let $\mathcal L^{\diamond}_{h+1}$ be the restriction of $\mathcal L^{\diamond}$ to $G(b)$; further, for $i=h,\dots,1$, let $\mathcal L^{\diamond}_i(c_i)$ be the restriction of $\mathcal L^{\diamond}$ to $G^+(c_i)$ and let $\mathcal L^{\diamond}_i$ be the restriction of $\mathcal L^{\diamond}$ to $G(b)\cup G^+(c_h)\cup \cdots \cup G^+(c_i)$; note that $\mathcal L^{\diamond}_1=\mathcal L^{\diamond}$. 

Since $\mathcal L^{\diamond}$ satisfies Property~(B1), we have that $\mathcal L^{\diamond}_{h+1}$ satisfies Property~(B1) as well; that is, $c$ is the first vertex of $\mathcal L^{\diamond}_{h+1}$.  Then $\mathcal L = \mathcal L^{\diamond}_{h+1}$ is one of the (at most two) embeddings of $G(b)$ processed by {\sc minres-be-drawer}$(e^*)$. We show that processing $\mathcal L$ leads to the construction of a {\sc minres}-supporting embedding $\mathcal L^*_1$ of $G^+(b)$ whose residual capacity is larger than or equal to the one of $\mathcal L^{\diamond}$; by construction, the residual capacity of $\mathcal L^+(b)$ is larger than or equal to the one of $\mathcal L^*_1$, which provides the desired contradiction. 

In order to prove that, when processing $\mathcal L$, {\sc minres-be-drawer}$(e^*)$ constructs a {\sc minres}-supporting embedding $\mathcal L^*_1$ of $G^+(b)$ whose residual capacity is larger than or equal to the one of $\mathcal L^{\diamond}$, we actually prove a stronger statement. Let $\mathcal L=(v_0,\dots,v_k)$ and, for $i=1,\dots,h$, let $x(i)$ be such that $v_{x(i)}=c_i$. We prove, by reverse induction on $i$, that, when processing $\mathcal L$, {\sc minres-be-drawer}$(e^*)$ constructs a {\sc minres}-supporting embedding $\mathcal L^*_i$ of $G(b)\cup G^+(c_h)\cup \dots \cup G^+(c_i)$ such that, for any $j\in \{0,\dots,x(i)-1\}$ and for any edge $e$ incident to $v_j$, the burden of $e$ in $\mathcal L^*_i$ is smaller than or equal to the one in $\mathcal L^\diamond_i$. By using the values $i=1$ and $j=0$, this statement implies that the residual capacity of $\mathcal L^*_1$ is indeed larger than or equal to the one of $\mathcal L^{\diamond}$. 

We now prove the induction. In the base case, we have $i=h+1$. Then the statement is clearly satisfied, as $\mathcal L^*_{h+1}$ and $\mathcal L^\diamond_{h+1}$ both coincide with $\mathcal L$. Now suppose that the statement is true for some $i+1\in \{2,\dots,h+1\}$. We prove that the statement is true for $i$, as well. Since $\mathcal L^+_1(c_i),\dots,\mathcal L^+_{k_i}(c_i)$ satisfy Properties (C1)--(C3), there exists a {\sc minres}-supporting embedding $\mathcal L^+_p(c_i)$ that is left-right equivalent to $\mathcal L^{\diamond}_i(c_i)$ with respect to $c_i$; that is, $n_{\ell}(c_i,\mathcal L^+_p(c_i))=n_{\ell}(c_i,\mathcal L^{\diamond}_i(c_i))$ and $n_r(c_i,\mathcal L^+_p(c_i))=n_r(c_i,\mathcal L^{\diamond}_i(c_i))$. This implies that, if $c_i$ is replaced with $\mathcal L^+_p(c_i)$ in $\mathcal L^*_{h+1}$, then the resulting embedding is a {\sc minres}-supporting embedding; in fact, the edges whose burden might change after the replacement are of three types: (i) edges $(v_y,v_{x(i)})$ with $y<x(i)$; (ii) edges $(v_{x(i)},v_z)$ with $x(i)<z$; and (iii) edges $(v_y,v_z)$ with $y<x(i)<z$.
By the inductive hypothesis, the burden of any of such edges in $\mathcal L^*_{i+1}$ is smaller than or equal to the one in $\mathcal L^\diamond_{i+1}$, hence the same is true after the replacement happens in both embeddings (as such burden is decreased by the same quantity, possibly $0$, in both embeddings); then the resulting embedding is a {\sc minres}-supporting embedding given that $\mathcal L^\diamond_i$ is. 
Since {\sc minres-be-drawer}$(e^*)$ replaces $c_i$ with the first embedding among $\mathcal L^+_1(c_i),\dots,\mathcal L^+_{k_i}(c_i)$ such that the resulting embedding is a {\sc minres}-supporting embedding and since as proved above the replacement of $c_i$ with $\mathcal L^+_p(c_i)$ does result in {\sc minres}-supporting embedding, it follows that $\mathcal L^*_i$ is a {\sc minres}-supporting embedding. 

In order to prove the inductive hypothesis, however, we need to address the fact that the embedding of $G^+(c_i)$ that is used in $\mathcal L^*_i$ might not be $\mathcal L^+_p(c_i)$, but rather an embedding $\mathcal L^+_q(c_i)$ with $q<p$; recall that $n_{\ell}(c_i,\mathcal L^+_q(c_i))<n_{\ell}(c_i,\mathcal L^+_p(c_i))$ and $n_r(c_i,\mathcal L^+_q(c_i))>n_r(c_i,\mathcal L^+_p(c_i))$. For an edge $(v_y,v_z)$ with $y<x(i)<z$, using $\mathcal L^+_q(c_i)$ rather than $\mathcal L^+_p(c_i)$ makes no difference, as the burden of such an edge increases by $n^+(c_i)-1$ in any case. The burden of an edge $(v_y,v_{x(i)})$ with $y<x(i)$ after the replacement of $c_i$ with $\mathcal L^+_q(c_i)$ is actually smaller than the burden of $(v_y,v_{x(i)})$ after the replacement of $c_i$ with $\mathcal L^+_p(c_i)$, given that $n_{\ell}(c_i,\mathcal L^+_q(c_i))<n_{\ell}(c_i,\mathcal L^+_p(c_i))$. On the contrary, the burden of an edge $(v_{x(i)},v_z)$ with $x(i)<z$ after the replacement of $c_i$ with $\mathcal L^+_q(c_i)$ is larger than the burden of $(v_{x(i)},v_z)$ after the replacement of $c_i$ with $\mathcal L^+_p(c_i)$, given that $n_r(c_i,\mathcal L^+_q(c_i))>n_r(c_i,\mathcal L^+_p(c_i))$; however, the inductive hypothesis only needs to provide guarantees about the burden of the edges incident to vertices $v_j$ with $j\in \{0,\dots,x(i)-1\}$, and $(v_{x(i)},v_z)$ is not among such edges. This completes the induction and hence the proof that, if {\sc minres-be-drawer}$(e^*)$ constructs an embedding of $G^+(b)$, then this is a {\sc minres}-supporting embedding satisfying Properties~(B1)--(B2). 

We now prove that, if {\sc minres-be-drawer}$(e^*)$ concludes that $G$ admits no {\sc minres}-supporting embedding such that $e^*$ is not nested into any edge of $G$, then this conclusion is correct. During the processing of $b$, it is concluded that $G$ admits no {\sc minres}-supporting embedding such that $e^*$ is not nested into any edge of $G$ only if the algorithm {\sc minres-be-drawer}$(e^*)$ incurs in Failure Condition~4. Assume that {\sc minres-be-drawer}$(e^*)$ incurs in Failure Condition~4. If $G$ admits no {\sc minres}-supporting embedding $\mathcal L_G$ such that $e^*$ is not nested into any edge of $G$, then the conclusion is indeed correct, so assume the contrary. By Lemma~\ref{le:necessity-b1-minres}, the restriction of $\mathcal L_G$ to $G^+(b)$ is a {\sc minres}-supporting embedding $\mathcal L^{\diamond}$ satisfying Property~(B1). The rest of the proof is the same as the proof that $\mathcal L^+(b)$ satisfies Property~(B2). Namely, it is proved by reverse induction that  {\sc minres-be-drawer}$(e^*)$ constructs a {\sc minres}-supporting embedding $\mathcal L^*_i$ of $G(b)\cup G^+(c_h)\cup \dots \cup G^+(c_i)$ such that, for any $j\in \{0,\dots,x(i)-1\}$ and for any edge $e$ incident to $v_j$, the burden of $e$ in $\mathcal L^*_i$ is smaller than or equal to the one in $\mathcal L^\diamond_i$; this implies that {\sc minres-be-drawer}$(e^*)$ constructs a {\sc minres}-supporting embedding $\mathcal L^*_1$ (whose residual capacity is larger than or equal to the one of $\mathcal L^{\diamond}$). The fact that {\sc minres-be-drawer}$(e^*)$ constructs such an embedding implies that it does not incur in Failure Condition~4, a contradiction.
\end{proof}

By Lemma~\ref{le:number-of-orderings-C-minres}, for any C-node $c_i$ that is a child of $b$, the number $k_i$ of {\sc minres}-supporting embeddings $\mathcal{L}^+_1(c_i), \dots, \mathcal{L}^+_{k_i}(c_i)$ of $G^+(c_i)$ is at most $n^+(c_i)$; each of these embeddings is processed in $O(n)$ time, hence the overall time complexity for processing $b$ is in $O((n^+(c_1)+\dots+n^+(c_h)) \cdot n) \in O(n^2)$. This sums up to $O(n^3)$ over all the B-nodes of $T$. 

{\bf Processing the root.} Since algorithm \textsc{minres-be-drawer}$(e^*)$ did not terminate because of Failure Condition~$1$, it constructed a {\sc minres}-supporting embedding $\mathcal{L}(b^*,e^*)$ of $G(b^*)$ in which the end-vertices of $e^*$ are the first and the last vertex. We apply the same algorithm as for a B-node $b\neq b^*$, while using $\mathcal{L}(b^*,e^*)$ in place of the at most two embeddings of $G(b)$. This again requires $O(n^2)$ time. The proof of the following lemma is very similar to the one of Lemma~\ref{le:correctness-B-minres}, and is hence omitted.

\begin{lemma} \label{le:sum-root-minres}
	If $G$ admits a {\sc minres}-supporting embedding such that $e^*$ is not nested into any edge of $G$, then the algorithm {\sc minres-be-drawer} constructs such an embedding, otherwise it concludes that $G$ admits no {\sc minres}-supporting embedding such that $e^*$ is not nested into any edge of $G$.
\end{lemma}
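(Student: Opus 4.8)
The plan is to follow the two-part structure of the proof of Lemma~\ref{le:correctness-B-minres}, specializing it to the root $b^*$, for which $G^+(b^*)=G$. The base embedding is now $\mathcal{L}(b^*,e^*)$, the {\sc minres}-supporting embedding of $G(b^*)$ in which the end-vertices $u$ and $v$ of $e^*=(u,v)$ are the first and the last vertex; this plays the role that the (at most two) embeddings of $G(b)$ with the parent as first vertex played for a B-node $b\neq b^*$. Two features make the argument simpler than for Lemma~\ref{le:correctness-B-minres}: we do not need Property~(B1) (there is no parent to place first), and we do not need the maximality expressed by Property~(B2) (there is no parent at which the residual capacity would later be consumed), so only the existence of a single valid global embedding has to be certified.

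For soundness, I would argue that whenever {\sc minres-be-drawer}$(e^*)$ constructs an embedding $\mathcal{L}^+(b^*)$ of $G$, this embedding is {\sc minres}-supporting and $e^*$ is not nested into any edge of $G$. The first part is immediate: exactly as in Lemma~\ref{le:correctness-B-minres}, every replacement of a cut-vertex $c_i$ by an embedding $\mathcal{L}^+_j(c_i)$ of $G^+(c_i)$ is followed by a check, performed in $O(n)$ time by Lemma~\ref{le:compute-minres}, that the current embedding is {\sc minres}-supporting, so $\mathcal{L}^+(b^*)$ passes all these checks. For the second part I would observe that each $G^+(c_i)$ keeps $c_i$ visible by Property~(C1), so its edges never wrap $c_i$; consequently the edges of $G^+(u)$ placed to the left of $u$ stay entirely to the left of $u$ (and symmetrically for $v$), while every edge of $G(b^*)$ keeps both end-vertices in the span of $e^*$. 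Hence no edge of $G$ can have one end-vertex preceding $u$ and the other following $v$ in $\mathcal{L}^+(b^*)$, so no edge wraps $e^*$; that is, $e^*$ is not nested, which is exactly the required output.

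For completeness, I would suppose that $G$ admits a {\sc minres}-supporting embedding $\mathcal{L}_G$ in which $e^*$ is not nested, and show that {\sc minres-be-drawer}$(e^*)$ does not incur Failure Condition~4 when processing $b^*$. Restricting $\mathcal{L}_G$ to $G(b^*)$ keeps $e^*$ non-nested; since $G(b^*)$ is biconnected, Lemma~\ref{le:linear-max-outerplanar-biconnected-characterization} forces $u$ and $v$ to be the first and last vertices of this restriction, which is therefore equal, up to a flip, to the base embedding $\mathcal{L}(b^*,e^*)$ (and a global flip of $\mathcal{L}_G$, which preserves both being {\sc minres}-supporting and $e^*$ being non-nested, lets me assume the orientation matches). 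With the base embeddings matching, I would then run verbatim the reverse induction of Lemma~\ref{le:correctness-B-minres}: letting $\mathcal{L}^{\diamond}$ be the restriction of $\mathcal{L}_G$ to $G^+(b^*)=G$ and $\mathcal{L}^{\diamond}_i$ its restriction to $G(b^*)\cup G^+(c_h)\cup\dots\cup G^+(c_i)$, Properties~(C1)--(C3) supply, for each $c_i$, a left-right-equivalent embedding $\mathcal{L}^+_p(c_i)$ of $G^+(c_i)$ whose insertion yields a {\sc minres}-supporting embedding; the induction then certifies that the algorithm's greedy choice (minimum $n_\ell(c_i,\cdot)$ among embeddings producing a {\sc minres}-supporting result) also succeeds and only decreases burdens of edges incident to vertices to the left of $c_i$, so some embedding $\mathcal{L}^*_1$ of $G$ is produced and Failure Condition~4 is avoided.

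The main obstacle I anticipate is the bookkeeping around the possibility that $u$ or $v$ is itself a cut-vertex that is a child of $b^*$ --- a situation that cannot arise for a non-root B-node, where the first vertex is always the parent and hence never one of the children $c_1,\dots,c_h$. I would handle it by noting that, because each $\mathcal{L}^+_j(c_i)$ keeps $c_i$ visible, inserting $G^+(u)$ (resp.\ $G^+(v)$) on either side of $u$ (resp.\ $v$) never produces an edge spanning from before $u$ to after $v$; thus the non-nestedness of $e^*$ is robust to these insertions, and the same reverse-induction argument applies without change. The remaining verifications --- that each per-replacement check is the one of Lemma~\ref{le:compute-minres}, and that flip symmetry reduces the two orientations of $\mathcal{L}(b^*,e^*)$ to one --- are routine, which is why this lemma is a simpler variant of Lemma~\ref{le:correctness-B-minres}.
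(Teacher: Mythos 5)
Your proposal is correct and takes essentially the same route as the paper, whose proof is omitted precisely because it is ``very similar to the one of Lemma~\ref{le:correctness-B-minres}'': you specialize that soundness/completeness argument to the root with base embedding $\mathcal{L}(b^*,e^*)$, reduce the two orientations via a global flip, note that the Property~(B2)-style maximality is unnecessary at the root, and correctly handle the one root-specific wrinkle (an endpoint of $e^*$ being a child cut-vertex) via visibility from Property~(C1). One small slip worth fixing: Lemma~\ref{le:linear-max-outerplanar-biconnected-characterization} concerns {\sc max}-constrained book-embeddings and does not directly apply to the restriction of a {\sc minres}-supporting embedding, so to force $u$ and $v$ to be the extreme vertices of the restriction of $\mathcal{L}_G$ to $G(b^*)$ you should argue directly from the biconnectivity of $G(b^*)$ and the non-nestedness of $e^*$ (as in the extreme-parent property underlying Lemma~\ref{le:necessity-b1-minres}), after which equality with $\mathcal{L}(b^*,e^*)$ up to a flip follows from the cited uniqueness of $1$-page book-embeddings with prescribed first and last vertices.
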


{\bf Running time.} As proved above, the C-nodes of $T$ are processed in overall $O(n^2 \log n)$ time, while the B-nodes of $T$ are processed in overall $O(n^3)$ time. Hence, the running time of the algorithm \textsc{minres-be-drawer}$(e^*)$ is in $O(n^3)$ and the one of the algorithm \textsc{minres-be-drawer} is in $O(n^4)$. This concludes the proof of Theorem~\ref{th:minres-outerplanar}.


\section{Conclusions and Open Problems}\label{se:conclusions}

With the aim of constructing schematic representations of biconnected graphs consisting of a large component plus several smaller components, we studied several types of constrained $1$-page book-embeddings and presented polynomial-time algorithms for testing whether a graph admits such book-embeddings. All the algorithms presented in this paper have been implemented; Figs.~\ref{fig:opening-figure} and~\ref{fig:sum-minres-figure} have been generated by means of such implementations.

Our paper opens several problems. 
\begin{enumerate}
\item Our algorithms allow us to represent only an outerplanar arrangement of small components around a large component. How to generalize the approach to the non-outerplanar case? One could study the problem of minimizing the crossings between components and/or minimizing the violations to the constraints on the weights of the nesting components. 
\item We proposed to linearly arrange the vertices of the separation pairs of the large component on the boundary of a disk. What happens if such an arrangement is instead circular? It is probably feasible to generalize our techniques in this direction, but an extra effort is required.
\item We focused our attention on a ``flat'' decomposition of a graph with just one large component plus many small components. What happens if the small components have their own separation pairs with further levels of decomposition? In other words, how to represent the decomposition of a biconnected graph in all its triconnected components?
\item The algorithms in Section~\ref{se:minres}, which construct two-dimensional book-embeddings with finite resolution, may output drawings whose area is not minimum. Can one minimize the area of such drawings in polynomial time?
\end{enumerate}




\subsubsection*{Acknowledgments} Thanks to an anonymous reviewer for observing that computing a {\sc max}-constrained book-embedding has a time complexity that is lower-bounded by the one of sorting.

\bibliographystyle{splncs04}
\bibliography{bibliography}


%
%

\end{document}